\newtheoremstyle{itstyle}
  {1.5\topsep}{1.5\topsep}                                          
  {\itshape}{}{\bfseries}                               
  {.}{.5em}                                  
  {\thmname{#1} \thmnumber{#2}\thmnote{ (#3)}}  
\newtheoremstyle{notitstyle}
  {1.5\topsep}{1.5\topsep}                                          
  {}{}{\bfseries}                               
  {.}{.5em}                                  
  {\thmname{#1} \thmnumber{#2}\thmnote{ (#3)}}  
\theoremstyle{itstyle}
\newtheorem{thm}{Theorem}[section]
\newtheorem{lem}[thm]{Lemma}
\newtheorem{prop}[thm]{Proposition}
\newtheorem{cor}[thm]{Corollary}
\theoremstyle{notitstyle}
\newtheorem{defn}[thm]{Definition}
\newtheorem{rmk}[thm]{Remark}
\newtheorem{eg}[thm]{Example}
\tikzset{
modal/.style={
shorten >=1pt,
shorten <=1pt,
auto,
node distance=1.5cm,
semithick
},
world/.style={circle,draw,minimum size=0.5cm,fill=gray!15},
point/.style={circle,draw,inner sep=0.5mm,fill=black},
reflexive above/.style={->,loop,looseness=7,in=120,out=60},
reflexive below/.style={->,loop,looseness=7,in=240,out=300},
reflexive left/.style={->,loop,looseness=7,in=150,out=210},
reflexive right/.style={->,loop,looseness=7,in=30,out=330}
}
\newif\ifhideproofs
\let\epsilon\varepsilon
\let\phi\varphi
\newcommand{\col}{\mathop{:}}
\newcommand{\etal}{et~al.}
\title{Kleene Theorems for Lasso Languages and $\omega$-Languages}
\author{Mike Cruchten \\ The University of Sheffield}
\date{}
\begin{document}

\maketitle



\begin{abstract}
  Automata operating on pairs of words were introduced as an
  alternative way of capturing acceptance of
  regular $\omega$-languages. Families of DFAs and lasso automata operating on such pairs
  followed,
  giving rise to minimisation algorithms, a Myhill-Nerode theorem and
  language learning algorithms.
  Yet Kleene theorems for such a well-established class are still missing.
  We introduce rational lasso
  languages and expressions, show a Kleene theorem for lasso languages
  and explore the connection between rational lasso and
  $\omega$-expressions, which yields a Kleene theorem for
  $\omega$-languages with respect to saturated lasso automata. For one
  direction of the Kleene theorems, we also provide a Brzozowski
  construction for lasso automata from
  rational lasso expressions.
\end{abstract}

\section{Introduction}
Lassos occur naturally in the study of $\omega$-automata, where they manifest
themselves in nondeterministic B\"{u}chi automata as paths consisting of a
prefix leading from the initial state to some state $q$, and a period which
leads from $q$ back to itself, traversing some accepting state infinitely often.
The existence of such paths for non-empty
non-deterministic B\"{u}chi automata is necessary, in that every such automaton
where some accepting state is reachable must admit an accepting run in the
shape of a lasso. On the level of words, these infinite paths correspond to
infinite words of the shape $uv^\omega$, which are called ultimately periodic words.
These words play an important role, as regular $\omega$-languages are
completely characterised by them in the sense that two regular $\omega$-languages
are equal if and only if they contain the same ultimately periodic words.
By representing an ultimately periodic word $uv^\omega$ as a string
$u\$v$, Calbrix, Nivat and Podelski show that for any regular $\omega$-language $L$,
the set $L_{\$}=\{u\$ v\mid uv^\omega \in L\} $ is regular and can be accepted by
a finite deterministic automaton (DFA) using an alphabet extended by the
symbol $\$$ \cite{calbrix:1994:ultimatelyPeriodicWords}. Additionally,
Calbrix \etal\ give conditions under which a regular language $U$ over
the extended alphabet is equal to $L_{\$}$ for some regular $\omega$-language
$L$. Their work shows that certain DFAs over an extended alphabet, called
$L_{\$}$-automata, act as acceptors of regular $\omega$-languages.
These $L_{\$}$-automata are deterministic and their acceptance condition makes no
longer use of infinite paths. The results came with
the hope of improving existing algorithms for deciding emptiness and language
inclusion of regular $\omega$-languages, which are prominently used in software
verification and model checking.

Angluin and Fisman use $(u,v)$ (instead of $u\$v$) as a representation of $uv^\omega$, and introduce families of
DFAs (FDFAs) operating on such pairs \cite{angluin:2016:learningRegularOmegaLanguages}. They combine the work by Calbrix
\etal\ on $L_{\$}$-automata and also work done by Maler and Staiger
\cite{maler:1997:onSyntacticCongruences} on syntactic congruences to
produce different variations of FDFAs, the periodic and syntactic FDFA.
They additionally define a recurrent FDFA and investigate the differences
in size for the various FDFAs. From these automata, they devise language
learning algorithms to learn regular $\omega$-languages and show that this can
be done in polynomial time in the size of the FDFA.
Moreover, in \cite{angluin:2016:FDFAs}, they investigate the
complexity of certain operations and decision procedures on FDFAs,
including deciding emptiness and language inclusion,
and the performance of Boolean operations. They show that these can all be
performed in non-deterministic logarithmic space, validating the hopes of Calbrix
\etal

An equivalent automaton to the FDFA is the lasso automaton
defined by Ciancia and Venema \cite{ciancia:2019:omegaAutomata}.
Lasso automata also operate on pairs $(u,v)$, which they call
lassos. We follow this convention and call pairs $(u,v)$ representing
ultimately periodic words lassos.
In \cite{ciancia:2019:omegaAutomata}, Ciancia \etal\ give a Myhill-Nerode
theorem and show that lasso automata can be
minimised using partition refinement.  Alternatively, minimisation can
also be obtained from a double reverse powerset construction \`{a} la
Brzozowski \cite{cruchten:2022:omegaAutomata}.

Although automata operating on lassos, such as $L_{\$}$-automata, FDFAs and
lasso automata, are well-established in many
regards, they still lack a Kleene theorem.
Our main goal is to establish a Kleene theorem for lasso languages, that is
sets of lassos, with respect to lasso automata, and to show how rational lasso and
$\omega$-expressions relate. This paves a way towards a Kleene theorem
for $\omega$-languages with respect to saturated lasso automata
(Definition \ref{defn:omegaAutomaton}).

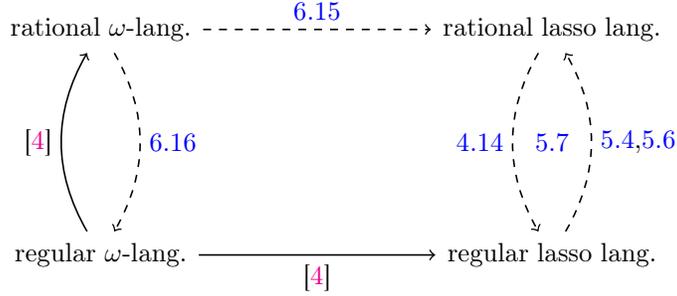
\begin{figure}[]
  \begin{center}
    \begin{tikzpicture}[modal]
        \node[] at (6,3) (a) [label=above:{}] {rational lasso lang.};
        \node[] at (6,0) (b) [label=above:{}] {regular lasso lang.};
        \node[] at (0,3) (c) [label=above:{}] {rational $\omega$-lang.};
        \node[] at (0,0) (d) [label=above:{}] {regular $\omega$-lang.};
        \node at (6,1.5) (e) {\ref{thm:KleeneTheoremLassoLanguages}};
        \path[->,dashed] (a) edge[bend right] node[left]{\ref{thm:BrzozowskiOmegaAutomaton}} (b);
        \path[->,dashed] (b) edge[bend right] node[right]{\ref{prop:regLassoFromLassoAutomaton},\ref{cor:ratLassoFromLassoAutomaton}} (a);
        \path[->] (d) edge node[below]{\cite{calbrix:1994:ultimatelyPeriodicWords}} (b);
        \path[->,dashed] (c) edge node[]{\ref{prop:rationalLassoFromOmega}} (a);
        \path[->,dashed] (c) edge[bend left] node[]{\ref{thm:omegaAutomatonFromOmegaExpression}} (d);
        \path[->] (d) edge[bend left] node[]{\cite{calbrix:1994:ultimatelyPeriodicWords}} (c);
    \end{tikzpicture}
  \end{center}
  \caption{Diagram showing our main contributions as dashed arrows.}\label{fig:contributions}
\end{figure}

Our contributions are indicated as dashed arrows in Figure
\ref{fig:contributions}.  We define rational lasso languages as those
lasso languages that can be obtained from rational languages using
rational lasso operations.  Our first contribution is a Kleene theorem
for lasso languages: we show that a lasso language is rational if and
only if it is accepted by a finite lasso automaton (Theorem
\ref{thm:KleeneTheoremLassoLanguages}).  For one direction, we provide
a Brzozowski construction, which turns a rational lasso expression
into a finite lasso automaton accepting the corresponding rational
lasso language (Theorem \ref{thm:BrzozowskiOmegaAutomaton}).  For the
converse direction, we show how to
dissect a finite lasso automaton into several DFAs and prove that the
lasso language accepted by the lasso automaton can be obtained from
the rational languages corresponding to the DFAs by using the rational
lasso operations (Proposition \ref{prop:regLassoFromLassoAutomaton}
and Corollary \ref{cor:ratLassoFromLassoAutomaton}), following ideas
from \cite{calbrix:1994:ultimatelyPeriodicWords}.

Secondly, we study the relationship between rational lasso and
$\omega$-expressions.  We introduce a novel notion of rational lasso
expression representing a rational
$\omega$-expression (Definition \ref{defn:covering}). Intuitively, it
expresses that the language semantics of either expression completely
determines that of the other. We show that, for any given rational
$\omega$-expression, we can construct a representing rational lasso
expression in a syntactic manner provided we have access to two
additional operations on rational expressions (cf.\thinspace
Proposition \ref{prop:rationalLassoFromOmega}).

These two contributions, together with a result of
\cite{calbrix:1994:ultimatelyPeriodicWords}, allow us to re-establish
Kleene's theorem for $\omega$-languages with respect to saturated
lasso automata. Given a rational $\omega$-language, we can turn it
into a rational lasso expression (Proposition
\ref{prop:rationalLassoFromOmega}) and apply our Brzozowski
construction (Theorem \ref{thm:BrzozowskiOmegaAutomaton}) to obtain
the desired finite saturated lasso automaton, hence every rational
$\omega$-language is regular (Theorem
\ref{thm:omegaAutomatonFromOmegaExpression}).  The converse direction is
given by \cite{calbrix:1994:ultimatelyPeriodicWords}, showing that
every $\omega$-language accepted by a finite saturated lasso automaton
is rational.

\paragraph{Organisation of the paper.} Section \ref{sec:2} consists of the preliminaries.
In Section \ref{sec:3} we introduce rational lasso expressions and languages
together with an algebra. Section \ref{sec:4} is devoted to showing that every
rational lasso language is regular, for which we give a Brzozowski construction
for lasso automata. In Section \ref{sec:5} we show that every regular lasso language
is rational. Section \ref{sec:6} investigates the connection between rational lasso and
$\omega$-expressions.

\section{Preliminaries}\label{sec:2}

Throughout this article, $\Sigma$ denotes a finite alphabet. The free
monoid $(\Sigma^\ast,\cdot,\epsilon)$ over $\Sigma$
is formed by finite words, written $u,v,w$, with
concatenation $u\cdot v=uv$ and the empty word $\epsilon$. We write
$\Sigma^\omega$ for the set of infinite words and
$\Sigma^{\text{up}}$ for the set of ultimately periodic words,
that is, those of the form $uv^\omega$ with $v\not =\epsilon$.
A \emph{lasso} is a pair
$(u,v)\in\Sigma^{\ast}\times\Sigma^+$, with $u$ the \emph{spoke} and $v$ the
\emph{loop}. We write $\Sigma^{\ast+}$ for the set $\Sigma^\ast\times\Sigma^+$
and think of the lasso $(u,v)$ as a representative for $uv^\omega$.

We write $U,V,W$ for languages of words and $L,K$ for languages of
infinite words ($\omega$-languages) or of lassos, depending on
context.  As is standard, the rational languages are those obtained from
$\emptyset$, $\{a\}\ (a\in\Sigma)$ and $\{\epsilon\}$ using the rational operations
of language concatenation, union and Kleene star. Similarly, the rational
$\omega$-languages are those, which can be written as a finite union
$\bigcup_{i\in I}U_i\cdot V_i^\omega$ where the $U_i$ and $V_i$ are rational
languages, $\cdot$ denotes the concatenation between a language and an
$\omega$-language, and $(-)^\omega$ denotes $\omega$-iteration.
A language is regular if it is accepted
by a DFA and an $\omega$-language is regular if it is accepted by a
finite nondeterministic B\"{u}chi automaton.
For an $\omega$-language $L$, $\text{UP}(L)=L\cap \Sigma^{\text{up}}$ is its
\emph{ultimately periodic fragment}. The set $\text{Exp}$ of rational
expressions is given by the grammar:
\begin{align*}
  t &::= 0 \mid 1\mid a\in\Sigma\mid t\cdot t\mid t+t\mid t^\ast.
\end{align*}
We reserve the symbols $t,r,s$ for rational expressions. To each rational
expression we associate a rational language through the semantics map
$\llbracket -\rrbracket:\text{Exp}\to 2^{\Sigma^\ast}$, which is the unique
map satisfying:
\begin{align*}
  \llbracket 0\rrbracket &= \emptyset & \llbracket 1\rrbracket &=\{\epsilon\} & \llbracket a\rrbracket &= \{a\} \\
  \llbracket t\cdot r\rrbracket &= \llbracket t\rrbracket\cdot \llbracket r\rrbracket & \llbracket t+r\rrbracket &= \llbracket t\rrbracket \cup\llbracket r\rrbracket & \llbracket t^\ast\rrbracket &= \llbracket t\rrbracket^\ast
\end{align*}
We write $N$ for the set of all rational expressions which have the
\emph{empty word property}, that is those $t\in \text{Exp}$ such that $\epsilon\in\llbracket t\rrbracket$.
The sets $\text{Exp}_\omega$ of $\omega$-rational expressions are given by the grammar:
\begin{align*}
  T &::= 0_\omega\mid T+_\omega T\mid t\cdot_\omega T\mid r^\omega,
\end{align*}
where $t,r\in\text{Exp}$ and $r\not\in N$. We drop the subscript $(-)_\omega$ 
whenever this does not lead to confusion, thus simply writing $0, +$ and $\cdot$.
The symbol $T$ ($T_1,T_2,\ldots$) is reserved for rational $\omega$-expressions.
Similarly to the rational expressions, we associate to each rational $\omega$-expression
a rational $\omega$-language through the semantics map
$\llbracket-\rrbracket_\omega :\text{Exp}_\omega \to 2^{\Sigma^\omega}$ 
uniquely determined by
\begin{align*}
  \llbracket 0_\omega\rrbracket_\omega &= \emptyset & \llbracket T_1+_\omega T_2\rrbracket_\omega &= \llbracket T_1\rrbracket_\omega \cup \llbracket T_2\rrbracket_\omega & \llbracket t\cdot_\omega T\rrbracket_\omega &= \llbracket t\rrbracket \cdot \llbracket T\rrbracket_\omega & \llbracket r^\omega \rrbracket_\omega &= \llbracket r\rrbracket^\omega.
\end{align*}

The following definition introduces a rewrite system on lassos. The
rewrite system alongside some of its properties (which are stated below) can
be found in \cite{cruchten:2022:omegaAutomata}.

\begin{defn}
				We define the following rewrite rules on $\Sigma^{\ast +}$:
				\[
								\prftree[r]{($\gamma_1$)}{a\in\Sigma}{(ua,va)}{(u,av)}\quad\quad\quad\quad \prftree[r]{($\gamma_2$)}{(u,v^k)}{(k>1)}{(u,v)}
				\] 
				We write $(u,v)\to_{\gamma_i}(u',v')$ if $(u,v)$ rewrites
				to $(u',v')$ in one step under $\gamma_i$ and say that
        $(u,v)$ \emph{$\gamma_i$-reduces} to $(u',v')$ (or that $(u',v')$
        \emph{$\gamma_i$-expands} to $(u,v)$).
				We let ${\to_\gamma}={\to_{\gamma_1}}\cup{\to_{\gamma_2}}$, denote
        by $\sim_\gamma$ the least equivalence relation including $\to_{\gamma}$ and say
        that two lassos are \emph{$\gamma$-equivalent} if they are $\sim_{\gamma}$-related.
\end{defn}

\begin{prop}
  The relation $\to_{\gamma}$ is confluent and strongly normalising.
\end{prop}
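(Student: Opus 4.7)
The plan is to establish strong normalisation first via a well-founded measure, and then to obtain confluence from local confluence using Newman's lemma.

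For strong normalisation, I would assign to each lasso $(u,v)\in\Sigma^{\ast+}$ the pair $(|u|,|v|)\in\mathbb{N}\times\mathbb{N}$ ordered lexicographically. A $\gamma_1$-step $(ua,va)\to(u,av)$ strictly decreases the first component by one. A $\gamma_2$-step $(u,v^k)\to(u,v)$ with $k>1$ leaves the first component fixed and strictly decreases the second, since $|v^k|=k|v|>|v|$, where $|v|\geq 1$ because $v\in\Sigma^+$. As $(\mathbb{N}^2,<_{\mathrm{lex}})$ is well-founded, every reduction sequence terminates.

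For confluence, by Newman's lemma it suffices to verify local confluence, so I would analyse the possible critical pairs. Two $\gamma_1$-steps from the same lasso coincide, because $\gamma_1$ is deterministic: it applies exactly when $u$ and $v$ share a common last letter $a$, and the reduct $(u',av')$ (where $u=u'a,\ v=v'a$) is uniquely determined. For two $\gamma_2$-steps, suppose $v=v_1^{k_1}=v_2^{k_2}$ with $k_1,k_2>1$; I would invoke the classical fact that every non-empty word has a unique primitive root $p$, so $v_1$ and $v_2$ are both powers of $p$, and further $\gamma_2$-reductions bring $(u,v_1)$ and $(u,v_2)$ to the common reduct $(u,p)$. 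For the mixed case, where $(u,v)$ admits both a $\gamma_1$-step (so $u=u'a$, $v=v'a$) and a $\gamma_2$-step ($v=w^k$, $k>1$), observe that since $v$ ends in $a$, so does $w$, say $w=w''a$; the key calculation is then $av'=a(w''a)^{k-1}w''=(aw'')^k$. Consequently, the $\gamma_1$-reduct $(u',(aw'')^k)$ further $\gamma_2$-reduces to $(u',aw'')$, which is also what one obtains by applying $\gamma_1$ to the $\gamma_2$-reduct $(u'a,w''a)$, closing the diagram.

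The main obstacle will be the mixed critical pair: the rearrangement $a(w''a)^{k-1}w''=(aw'')^k$ must be handled carefully, and one must verify that the last letters line up so that $\gamma_1$ is indeed applicable to the $\gamma_2$-reduct. The $\gamma_2$-versus-$\gamma_2$ case is routine given the primitive root theorem for words, but it is worth citing or including a short proof. Everything else is bookkeeping once the well-founded measure and Newman's lemma are in place.
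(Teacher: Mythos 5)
Your argument is correct and complete: the lexicographic measure $(|u|,|v|)$ does strictly decrease under both $\gamma_1$ and $\gamma_2$, and your critical-pair analysis covers all overlaps (the $\gamma_1$/$\gamma_1$ case by determinism, the $\gamma_2$/$\gamma_2$ case via the unique primitive root, and the mixed case via the identity $a(w''a)^{k-1}w''=(aw'')^k$, which checks out), so Newman's lemma applies. The paper itself offers no proof of this proposition, deferring to the cited reference \cite{cruchten:2022:omegaAutomata}, so there is nothing internal to compare against; your route is the standard one and stands on its own.
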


It follows from this proposition that every lasso $(u,v)$ has a
unique \emph{normal form}. As an example, take the lasso
$(aba,baba)$, which is a representative for  $(ab)^\omega$. We show the different
reductions which lead to the normal form $(\epsilon,ab)$ below.
\begin{center}
\begin{tikzpicture}[modal]
\node[] at (1.5,0) (a) [label=above:{}] {$(\epsilon,ab)$};
\node[] at (0,1) (b) [label=above:{}] {$(\epsilon,abab)$};
\node[] at (3,1) (c) [label=above:{}] {$(a,ba)$};
\node[] at (1.5,2) (d) [label=above:{}] {$(a,baba)$};
\node[] at (4.5,2) (e) [label=above:{}] {$(ab,ab)$};
\node[] at (3,3) (f) [label=above:{}] {$(ab,abab)$};
\node[] at (6,3) (g) [label=above:{}] {$(aba,ba)$};
\node[] at (4.5,4) (h) [label=above:{}] {$(aba,baba)$};
\path[->] (b) edge node[below left]{$\gamma_2$} (a);
\path[->] (c) edge node[]{$\gamma_1$} (a);
\path[->] (d) edge node[]{$\gamma_2$} (c);
\path[->] (d) edge node[above left]{$\gamma_1$} (b);
\path[->] (e) edge node[]{$\gamma_1$} (c);
\path[->] (f) edge node[]{$\gamma_2$} (e);
\path[->] (f) edge node[above left]{$\gamma_1$} (d);
\path[->] (g) edge node[]{$\gamma_1$} (e);
\path[->] (h) edge node[]{$\gamma_2$} (g);
\path[->] (h) edge node[above left]{$\gamma_1$} (f);
\end{tikzpicture}
\end{center}
This picture suggests that if $(u,v)\to_\gamma (u',v')$, then
$uv^\omega = u'v'^\omega$, that is, reduction preserves the ultimately
periodic word which is represented. In fact, the other direction
also holds as shown by the next proposition.

\begin{prop}[Lasso Representation Lemma]
	Let $(u,v),(u',v')\in\Sigma^{\ast +}$. Then
  \[
    (u,v)\sim_\gamma (u',v') \iff uv^\omega = u'v'^\omega.
  \] 
\end{prop}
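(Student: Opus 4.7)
The plan is to prove the two implications separately.

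For the forward direction ($\Rightarrow$), I would check that each single-step $\to_\gamma$ reduction preserves the represented ultimately periodic word; closure under the equivalence relation then yields the full claim. For $\gamma_1$, one computes $(ua)(va)^\omega = u\cdot a\cdot v\cdot a\cdot v\cdots = u(av)^\omega$. For $\gamma_2$, immediately $u(v^k)^\omega = uv^\omega$. Hence the assignment $(u,v)\mapsto uv^\omega$ is invariant under $\sim_\gamma$.

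For the backward direction ($\Leftarrow$), I would exploit the previous proposition: since $\to_\gamma$ is confluent and strongly normalising, each $\sim_\gamma$-class contains a unique normal form. It therefore suffices to show that two normal form lassos representing the same ultimately periodic word are equal. A preliminary step is to characterise normal forms: $(u,v)$ is $\gamma$-irreducible iff $v$ is primitive (not of the form $w^k$ for any $k>1$) and either $u=\epsilon$ or the last letter of $u$ differs from the last letter of $v$; the first condition blocks $\gamma_2$-reduction, and the second blocks $\gamma_1$-reduction.

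The main obstacle is then to show that every $w\in\Sigma^{\text{up}}$ admits a unique normal form representation. Writing $w = w_0 w_1 w_2 \cdots$, let $p$ be the minimal period length of the eventually periodic tail of $w$, and let $m$ be the smallest index from which $w$ is $p$-periodic. Setting $u=w_0\cdots w_{m-1}$ and $v=w_m\cdots w_{m+p-1}$ gives a normal form with $uv^\omega=w$: the word $v$ is primitive by minimality of $p$, and $w_{m-1}\neq w_{m-1+p}$ by minimality of $m$, so the last letters of $u$ and $v$ differ. For uniqueness, in any normal form $(u',v')$ with $u'v'^\omega=w$, primitivity of $v'$ together with the Fine and Wilf periodicity lemma forces $|v'|=p$, and the condition on trailing letters forces $|u'|=m$: a shorter $u'$ would sit inside the periodic tail and make the last letters coincide, while a longer $u'$ would contradict minimality of $m$. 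Combined with the forward direction, this proves the lemma.
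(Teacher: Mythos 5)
The paper does not actually prove this proposition: it is imported, together with the confluence and strong normalisation of $\to_\gamma$, from \cite{cruchten:2022:omegaAutomata}, so there is no in-paper argument to compare against. Your self-contained reconstruction is essentially correct and follows the natural route. The forward direction (invariance of $(u,v)\mapsto uv^\omega$ under each single rewrite step, then closure under the generated equivalence) is fine, and the strategy for the converse --- unique normal forms from confluence plus strong normalisation, the characterisation of irreducible lassos as those with $v$ primitive and ($u=\epsilon$ or the last letters of $u$ and $v$ distinct), and existence and uniqueness of an irreducible representative of each ultimately periodic word --- goes through and correctly reduces the claim to a combinatorics-on-words statement.

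Two small repairs are needed in the uniqueness step. First, you have the two cases swapped: a spoke with $|u'|<m$ is ruled out because $w=u'v'^\omega$ would then be $p$-periodic from an index strictly below $m$, contradicting minimality of $m$; it is a spoke with $|u'|>m$ whose last letter $w_{|u'|-1}$ lies in the $p$-periodic tail and hence equals $w_{|u'|-1+p}$, the last letter of $v'$, violating $\gamma_1$-irreducibility. Second, the Fine--Wilf step needs one more line: the two periods $p$ and $|v'|$ of a common tail give $\gcd(p,|v'|)$ as a period, whence $p\mid|v'|$ by minimality of $p$; to contradict primitivity when $|v'|>p$ you must still argue that $v'$ itself inherits the period $p$, which you get by taking an occurrence of $v'$ far enough along $w$ to lie entirely inside the $p$-periodic region. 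Neither issue affects the soundness of the overall approach.
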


One consequence of this proposition, is that two lassos reduce to the
same normal form if and only if they are a representative of the same
ultimately periodic word.

\begin{defn}[\cite{ciancia:2019:omegaAutomata}]
  A \emph{lasso automaton} is a structure
  $\mathcal{A}=(X,Y,\overline{x},\delta_1,\delta_2,\delta_3,F)$ where
  $\delta_1\colon X\to X^\Sigma$, $\delta_2\colon X\to Y^\Sigma$,
  $\delta_3\colon Y\to Y^\Sigma$ and $F\subseteq Y$.
  We call $X$ and $Y$ the
  sets of \emph{spoke} and \emph{loop states}.  The maps
  $\delta_1,\delta_2$ and $\delta_3$ are called the \emph{spoke},
  \emph{switch} and \emph{loop transition maps} of $\mathcal{A}$.  The set $F$ denotes the
  \emph{accepting states} and $\overline{x}\in X$ is the \emph{initial
  state} of $\mathcal{A}$.
\end{defn}

We do not consider nondeterministic lasso automata in this text.
For convenience, we assume that $X$ and $Y$ are disjoint and define the map
$(\delta_2\col\delta_3):X\uplus Y\to Y^\Sigma$ which is equal to
$\delta_2$ on $X$ and equal to $\delta_3$ on $Y$.
  The maps
  $\delta_1,(\delta_2\col\delta_3)$ and $\delta_3$ can be extended
  from symbols to finite words in the usual way.

\begin{defn}[\cite{ciancia:2019:omegaAutomata}]
  A lasso $(u,v)\in\Sigma^{\ast +}$ is \emph{accepted} by
  $\mathcal{A}$, if
  $(\delta_2\col\delta_3)(\delta_1(\overline{x},u),v)\in F$.
  The set of
  all lassos accepted by
  $\mathcal{A}$ is denoted $L_\circ(\mathcal{A})$. A lasso language
  $L$ is \emph{regular} if it is accepted by a finite lasso automaton.
\end{defn}

\begin{figure}[htpb]
  \centering
  \begin{tikzpicture}[modal]
    \node[initial,state] at (0,3) (a) [label=above:{}] {$0$};
    \node[state] at (0,1) (b) [label=above:{}] {$1$};
    \node[state,accepting] at (2,4) (c) [label=above:{}] {$2$};
    \node[state] at (2,0) (d) [label=above:{}] {$3$};
    \node[state] at (4,2) (e) [label=above:{}] {$4$};
    \path[->] (a) edge node[]{$b$} (b);
    \path[->] (a) edge[reflexive above] node[above]{$a$} (a);
    \path[->,dotted] (a) edge node[]{$b$} (c);
    \path[->,dotted] (a) edge node[]{$a$} (e);
    \path[->] (b) edge[reflexive left] node[]{$a,b$} (b);
    \path[->,dotted] (b) edge node[below right]{$b$} (e);
    \path[->,dotted] (b) edge node[below left]{$a$} (d);
    \path[->,dashed] (c) edge node[]{$b$} (e);
    \path[->,dashed] (c) edge[reflexive right] node[right]{$a$} (c);
    \path[->,dashed] (d) edge[reflexive right] node[right]{$b$} (d);
    \path[->,dashed] (d) edge node[below right]{$a$} (e);
    \path[->,dashed] (e) edge[reflexive right] node[right]{$a,b$} (e);
  \end{tikzpicture}
  \caption{A lasso automaton accepting $\{(a^k,ba^j)\mid k,j\in \mathbb{N}\}$.}
  \label{fig:figure1}
\end{figure}
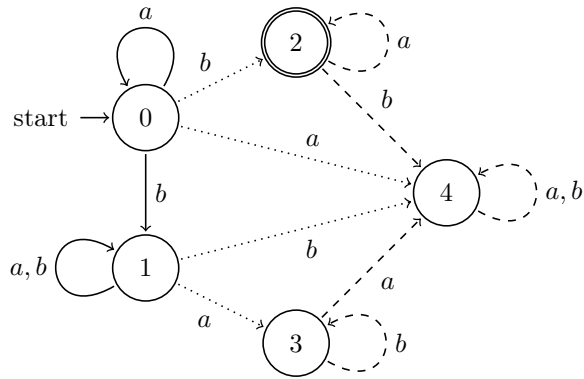

\begin{eg}\label{eg:lassoAutomaton}
  Figure \ref{fig:figure1} depicts a lasso automaton. The spoke states
  are labelled $0$ and $1$. The states $2,3$ and $4$ are loop states.
  Spoke transitions are drawn as solid arrows, switch transitions as
  dotted arrows and loop transitions as dashed arrows.
  The initial state is indicated by the `start' arrow and accepting states
  are drawn as double circles. We give an example run for the lasso
  $(aaa,baa)$. Reading this lasso from the initial state leads to the
  state
   \[
     (\delta_2\col\delta_3)(\delta_1(0,aaa),baa) = \delta_3(\delta_2(\delta_1(0,aaa),b),aa) = \delta_3(\delta_2(0,b),aa) = \delta_3(2,aa) = 2.
  \] 
  Hence $(aaa,baa)$ is accepted by the lasso automaton. The regular lasso
  language it accepts is $\{(a^k,ba^j)\mid k,j\in \mathbb{N}\}$.
\end{eg}

In \cite{ciancia:2019:omegaAutomata}, the authors define an
$\Omega$-automaton as a lasso automaton with special structural
properties. We give an alternative definition of these automata
using saturation.

\begin{defn}\label{defn:omegaAutomaton}
	A lasso automaton $\mathcal{A}$ is \emph{saturated} if for any two
  $\gamma$-equivalent lassos
	$(u_1,v_1),(u_2,v_2)\in\Sigma^{\ast+}: (u_1,v_1)\in L_\circ(\mathcal{A}) \iff (u_2,v_2)\in L_\circ(\mathcal{A})$.
  An \emph{$\Omega$-automaton} is a saturated lasso automaton.
\end{defn}

Finite $\Omega$-automata act as acceptors of regular
$\omega$-languages.  A finite $\Omega$-automaton $\mathcal{A}$ accepts
the regular $\omega$-language $L$ if
$L_\circ(\mathcal{A})=\{(u,v)\mid uv^\omega\in L\}$
\cite{ciancia:2019:omegaAutomata}. Note that $L_\circ(\mathcal{A})$
is always $\sim_\gamma$-saturated (that is, a partition of $\sim_\gamma$ equivalence
classes) for an $\Omega$-automaton $\mathcal{A}$, and that
$\{(u,v)\mid uv^\omega\in L\}=\{(u,v)\mid uv^\omega\in K\}$ implies
$L=K$ for regular $\omega$-languages $K,L$
\cite{calbrix:1994:ultimatelyPeriodicWords}.  The \emph{regular
  $\omega$-language accepted by a finite $\Omega$-automaton
  $\mathcal{A}$} is denoted $L_\omega(\mathcal{A})$.

\begin{figure}[htpb]
  \centering
  \begin{center}
    \begin{tikzpicture}[modal]
      \node[state,accepting] at (0,0) (a) [label=above:{}] {};
      \node[state] at (3,0) (c) [label=above:{}] {};
      \node[state,initial] at (1.5,2) (b) [label=above:{}] {};
      \path[->,dashed] (c) edge[reflexive right] node[right]{$a,b$} (c);
      \path[->] (b) edge[reflexive right] node[right]{$a,b$} (b);
      \path[->,dotted] (b) edge node[above left]{$a$} (a);
      \path[->,dotted] (b) edge node[]{$b$} (c);
      \path[->,dashed] (a) edge[reflexive left] node[]{$a$} (a);
      \path[->,dashed] (a) edge node[below]{$b$} (c);
    \end{tikzpicture}
  \end{center}
  \caption{A saturated lasso automaton ($\Omega$-automaton).}
  \label{fig:figure2}
\end{figure}
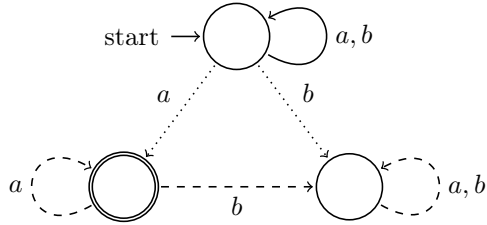

\begin{eg}
  The lasso automaton shown in Figure \ref{fig:figure2} is saturated. It accepts
  the regular lasso language $\{(u,a^k)\mid u\in\Sigma^\ast,k\geq 1\}$ and the
  regular $\omega$-language $\{ua^\omega\mid u\in\Sigma^\ast\} $.
  The automaton from Example \ref{eg:lassoAutomaton} (c.f. Figure \ref{fig:figure1})
  is not saturated as it accepts the lasso $(\epsilon,b)$ but it
  does not accept $(b,b)$ although these two lassos are
  $\gamma$-equivalent.
\end{eg}

\section{Rational Lasso Expressions}\label{sec:3}

In this section we introduce rational lasso expressions, languages and an algebra for
these, which we show to be sound.  A lasso language is \emph{rational}
if it is obtained from rational languages using the operations
\[
U^\circ = \{(\epsilon,u)\mid u\in U\},\qquad U\cdot K = \{(uv,w)\mid u\in U, (v,w)\in K\},\qquad K_1\cup K_2,
\]
where $U$ is a rational language and $K,K_1,K_2$ are rational lasso
languages.

From here on, we assume $\text{RA}$ to be an arbitrary but fixed
algebra of rational expressions of signature $(0,1,+,\cdot,^\ast)$
(for instance $\text{KA}$, the theory of Kleene Algebra \cite{kozen:1994:completeness}).
We write $\vdash_{\mathbf{RA}} t=r$ if $t=r$ is deducible in $\text{RA}$. As it is
always clear from context, we drop the subscript and just write
$\vdash t=r$, sometimes even dropping the turnstyle altogether.
We write $t\leq r$ for
$\vdash t+r=r$, as the $+$-reduct of an $\text{RA}$-algebra is a
join-semilattice. Finally, for a formula $\phi$, the \emph{Iverson bracket} of
$\phi$ is defined as
\[
  [\phi] = \begin{cases}
    1 & \text{if }\phi \text{ is true},\\
    0 & \text{otherwise}.
  \end{cases}
\]

\begin{defn}
  Let $t,r\in \text{Exp}$ with $r\not\in N$. The set
  $\text{Exp}_\circ$ of \emph{rational lasso expressions} is defined
  by the grammar
  \[
  \rho,\sigma ::= 0_\circ \mid t\cdot_\circ \rho\mid \rho +_\circ \sigma \mid r^\circ.
  \]
\end{defn}

Whenever it is clear from context, we drop the subscript $(-)_\circ$, simply
writing $0,+$ and $\cdot$.
We associate a rational lasso language to each rational lasso
expression using the operations defined at the start of this section.

\begin{defn}
  The \emph{language semantics for rational lasso expressions}
  $\llbracket -\rrbracket_\circ:\text{Exp}_\circ\to 2^{\Sigma^{\ast+}}$ is given
  by
  \[\llbracket 0\rrbracket_\circ = \emptyset, \qquad \llbracket t\cdot \rho\rrbracket_\circ = \llbracket t\rrbracket\cdot \llbracket \rho\rrbracket_\circ,\qquad \llbracket \rho+\sigma\rrbracket_\circ = \llbracket \rho\rrbracket_\circ\cup \llbracket \sigma\rrbracket_\circ,\qquad \llbracket r^\circ\rrbracket_\circ = \llbracket r\rrbracket^\circ.\]
  By definition, the lasso language semantics
  $\llbracket -\rrbracket_\circ$ extends the language semantics of
  rational expressions $\llbracket -\rrbracket$.
\end{defn}

\begin{figure}[htpb]
  \centering
  \begin{tikzpicture}[modal]
    \node[state,initial] at (0,3) (a) [label=above:{}] {};
    \node[state] at (3,3) (b) [label=above:{}] {};
    \node[state] at (1.5,1.5) (d) [label=above:{}] {};
    \node[state,accepting] at (3.5,0) (e) [label=above:{}] {};
    \node[state] at (1.5,0) (f) [label=above:{}] {};
    \path[->] (a) edge node[]{$b$} (b);
    \path[->] (a) edge node[below left]{$a$} (d);
    \path[->,dotted] (a) edge[bend right] node[below left]{$a,b$} (f);
    \path[->] (b) edge[reflexive right] node[right]{$a$} (b);
    \path[->] (b) edge node[below right]{$b$} (d);
    \path[->,dotted] (b) edge[bend left] node[above right]{$b$} (e);
    \path[->,dotted] (b) edge[bend left] node[below right]{$a$} (f);
    \path[->] (d) edge[reflexive above] node[]{$a,b$} (d);
    \path[->,dotted] (d) edge node[right]{$a,b$} (f);
    \path[->,dashed] (e) edge node[]{$a,b$} (f);
    \path[->,dashed] (f) edge[reflexive left] node[]{$a,b$} (f);
  \end{tikzpicture}
  \caption{A finite lasso automaton accepting $\llbracket b(a^\ast b^\circ)\rrbracket_\circ$.}
  \label{fig:Figure3}
\end{figure}
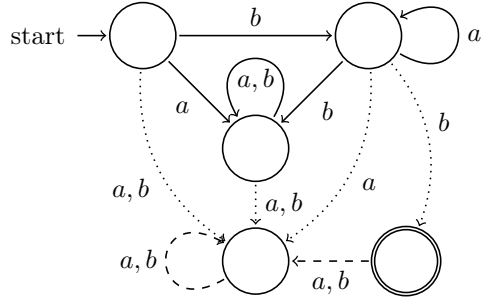

\begin{eg}
  Let $\Sigma=\{a,b\} $ and consider the rational lasso expression
  $b(a^\ast b^\circ)$. We compute its language semantics recursively:
  \[
    \llbracket b(a^\ast b^\circ)\rrbracket_\circ 
    = \llbracket b\rrbracket (\llbracket a^\ast\rrbracket\llbracket b^\circ\rrbracket_\circ) 
    = \{b\} ( \{a^k\mid k\geq 0\} \{(\epsilon,b)\} )
    = \{b\} \{(a^k,b)\mid k\geq 0\}
    = \{(ba^k,b)\mid k\geq 0\}.
  \]
  The rational lasso language obtained is also regular as it is accepted
  by the finite lasso automaton in Figure \ref{fig:Figure3}.
\end{eg}

Next, we introduce a theory to reason about regular lasso
expressions. This theory is sound with respect to the lasso language
semantics, which we require for the construction of a
Brzozowski lasso automaton.

\begin{defn}\label{defn:LAAxioms}
  The \emph{two-sorted theory LA of lasso algebras} extends the theory
  $\text{RA}$ by the following axioms:
  \begin{align*}
    1\cdot \rho &= \rho & 0\cdot \rho &= 0 & \rho+\sigma &= \sigma+\rho \\
    (t+r)\cdot \rho &= t\cdot \rho+ r\cdot \rho & 0^\circ &= 0 & (\rho+\sigma)+\tau &= \rho+(\sigma+\tau)\\
    t\cdot (\rho+\sigma) &= t\cdot \rho + t\cdot \sigma & 0 + \rho &= \rho & \rho+\rho &= \rho \\
    t\cdot (r\cdot \rho) &= (t\cdot r)\cdot \rho & t\cdot 0 &= 0 & (t+r)^\circ &= t^\circ + r^\circ\ (t,r\not\in N)
  \end{align*}
  with $t,r\in \text{Exp}$ and
  $\rho,\sigma,\tau\in \text{Exp}_\circ$.
  The axioms together with the laws for equality and the substitution
  of provably equivalent rational and rational lasso expressions gives
  us the deductive system $\mathbf{LA}$. We write
  $\vdash_{\mathbf{LA}}\rho=\sigma$ when
  the equation $\rho=\sigma$ is deducible in
  LA. Whenever it is clear from context, we drop the turnstile
  $\vdash_{\mathbf{LA}}$.
\end{defn}

The next proposition shows that the theory of lasso algebras is sound
with respect to the language semantics for rational lasso expressions.
We make no claim about its completeness.

\begin{prop}[Soundness]\label{prop:soundness}
  Let $\rho,\sigma\in \text{Exp}_\circ$. Then
  $\vdash_{\mathbf{LA}}\rho=\sigma \implies \llbracket\rho\rrbracket_\circ = \llbracket\sigma\rrbracket_\circ$.
\end{prop}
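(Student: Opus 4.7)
The plan is to prove this by induction on the length of the derivation of $\rho=\sigma$ in $\mathbf{LA}$. In the base case I need to verify that every axiom of $\mathbf{LA}$, when interpreted via $\llbracket-\rrbracket_\circ$, gives a valid equality of lasso languages; in the inductive step I need to check that the rules for equality (reflexivity, symmetry, transitivity) and the substitution of provably equivalent rational and rational lasso expressions preserve semantic equality.

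For the axioms, each verification is a short set-theoretic computation from the definitions $U^\circ=\{(\epsilon,u)\mid u\in U\}$, $U\cdot K=\{(uv,w)\mid u\in U,\ (v,w)\in K\}$, and pointwise union. For instance, $\llbracket 1\cdot\rho\rrbracket_\circ=\{\epsilon\}\cdot\llbracket\rho\rrbracket_\circ=\{(\epsilon v,w)\mid (v,w)\in\llbracket\rho\rrbracket_\circ\}=\llbracket\rho\rrbracket_\circ$; the axiom $t\cdot(r\cdot\rho)=(t\cdot r)\cdot\rho$ follows from associativity of word concatenation; $(t+r)\cdot\rho=t\cdot\rho+r\cdot\rho$ and $t\cdot(\rho+\sigma)=t\cdot\rho+t\cdot\sigma$ follow from distributivity of language concatenation over union; $0\cdot\rho=0$ and $t\cdot 0=0$ are immediate, as is $0^\circ=0$. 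Commutativity, associativity, and idempotence of $+$ on lasso languages come directly from the corresponding properties of set union, and $0+\rho=\rho$ from $\emptyset\cup K=K$. The remaining axiom $(t+r)^\circ=t^\circ+r^\circ$ (for $t,r\notin N$) is the identity $\{(\epsilon,u)\mid u\in\llbracket t\rrbracket\cup\llbracket r\rrbracket\}=\{(\epsilon,u)\mid u\in\llbracket t\rrbracket\}\cup\{(\epsilon,u)\mid u\in\llbracket r\rrbracket\}$; note the side condition $t,r\notin N$ is needed syntactically for $(t+r)^\circ$ to be a well-formed rational lasso expression but plays no role in the semantic equality itself.

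For the inductive step, the equality rules are immediate; the only nontrivial rule is substitution. Substitution of a subexpression $\rho'$ by a provably equivalent $\sigma'$ inside a rational lasso expression is handled by a secondary structural induction on the context, each case reducing to the fact that $\cdot_\circ$, $+_\circ$ and $(-)^\circ$ are well-defined on lasso languages (i.e., respect equality of their arguments). Substitution of a rational subexpression $t$ by a provably $\mathbf{RA}$-equivalent $r$ is handled analogously, but this time the reduction is to the assumed soundness of $\mathbf{RA}$ with respect to $\llbracket-\rrbracket$, i.e., $\vdash_{\mathbf{RA}} t=r$ implies $\llbracket t\rrbracket=\llbracket r\rrbracket$, together with the fact that $\cdot$ and $(-)^\circ$ respect language equality of their rational arguments.

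The step requiring the most care is the substitution rule, as it is the one that crosses the two sorts and so relies on soundness of the underlying theory $\mathbf{RA}$ (an implicit assumption on $\mathbf{RA}$ which is needed here); none of the individual axioms presents a genuine difficulty. I would organise the proof so as to handle substitution once in general, leaving the axiom checks as a short itemised verification.
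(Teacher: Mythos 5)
Your proposal is correct and follows essentially the same route as the paper, which simply remarks that the verification of the axioms is routine and lists the four key set-theoretic identities of the lasso operations (associativity of the mixed product, the two distributivity laws, and $(U\cup V)^\circ=U^\circ\cup V^\circ$) that you also use. Your additional care with the substitution rule and the observation that it relies on soundness of the underlying theory $\mathbf{RA}$ with respect to $\llbracket-\rrbracket$ is a reasonable filling-in of details the paper leaves implicit.
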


  The verification of the laws in Definition \ref{defn:LAAxioms} is routine.
  It relies on the following properties of the rational lasso operations.
  For rational languages $U,V$ and rational lasso languages
  $K,L$, we have
  \begin{enumerate}
    \item $U\cdot (V\cdot L) = (U\cdot V)\cdot L$,
    \item $(U\cup V)\cdot L = (U\cdot L)\cup (V\cdot L)$,
    \item $U\cdot (K\cup L)=U\cdot K \cup U\cdot L$,
    \item $(U \cup V)^\circ = U^\circ \cup V^\circ$.
  \end{enumerate}

\begin{rmk}\label{rmk:wagner}
  We briefly highlight the differences between lasso and Wagner
  algebras (\cite{wagner:1976:axiomatisierung}). A Wagner algebra is a
  two-sorted algebra similar to the lasso algebra but with an
  operation $(-)^\omega$ instead of $(-)^\circ$. They are used to
  reason about rational $\omega$-expressions and Wagner showed
  completeness of his axiomatisation with respect to the language
  semantics for rational $\omega$-expressions. Wagner's axiomatisation
  looks very similar to that of a lasso algebra.  However, the unary
  operations $(-)^\circ$ and $(-)^\omega$ satisfy different laws, the
  ones for $(-)^\omega$ being
  \[
    (t\cdot r)^\omega = t\cdot (r\cdot t)^\omega \qquad \text{ and }\qquad t^\omega = (t^+)^\omega.
  \]
  Other than this, there are two more subtle differences that can be pointed
  out:
  \begin{enumerate}
  \item from the $(-)^\omega$-axioms, one can deduce that
    $0^\omega=0$, which is not the case for lasso algebras (i.e. we
    need the axiom $0^\circ=0$),
    \item Wagner has an additional derivation rule, which allows to solve
      equations of a particular type. Such a rule is not given for lasso
      algebras.
  \end{enumerate}
\end{rmk}

Analogously to the situation for rational and rational $\omega$-expressions,
each rational lasso expression is provably equivalent to a rational lasso expression
of the form $\sum_{i=1}^n t_i\cdot r_i^\circ$. Such a form is called
a \emph{disjunctive form}, but note that a disjunctive form is not unique.

\begin{prop}\label{prop:normalForm}
  Let $\rho\in\text{Exp}_\circ$. Then there exists $n\in \mathbb{N}$ and
  $t_1,\ldots,t_n,r_1,\ldots,r_n\in \text{Exp}$ with
  $r_i\not\in N$ for all $1\leq i\leq n$ such that
  \[
    \vdash_{\mathbf{LA}} \rho = \sum_{i=1}^n t_i\cdot r_i^\circ.
  \]
\end{prop}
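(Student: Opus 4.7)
The plan is to proceed by structural induction on $\rho \in \text{Exp}_\circ$, using the LA axioms to repeatedly push $+$ to the outside and $\cdot$ down toward the $(-)^\circ$ atoms. Each of the four cases in the grammar corresponds directly to one LA law or a short combination of them, so the bulk of the work is just careful bookkeeping.

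For the base case $\rho = 0_\circ$, I would exhibit the sum either as the empty sum (if $n=0$ is permitted) or, more concretely, as $0 \cdot 0^\circ$: this is a legal summand because $0 \notin N$, and $\vdash_{\mathbf{LA}} 0 \cdot 0^\circ = 0$ follows from the axiom $0 \cdot \rho = 0$. For $\rho = r^\circ$ with $r \notin N$, the axiom $1 \cdot \rho = \rho$ gives $\vdash r^\circ = 1 \cdot r^\circ$, a disjunctive form with $n = 1$, $t_1 = 1$, $r_1 = r$.

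For the inductive step on $\rho = \sigma +_\circ \tau$, the induction hypothesis supplies disjunctive forms $\sigma = \sum_{i=1}^n t_i \cdot r_i^\circ$ and $\tau = \sum_{j=1}^m s_j \cdot q_j^\circ$, and the associativity and commutativity axioms for $+$ let me concatenate the two sums into a single disjunctive form of length $n+m$. For $\rho = t \cdot_\circ \sigma$, the induction hypothesis again gives $\sigma = \sum_{i=1}^n t_i \cdot r_i^\circ$; applying left-distributivity $t \cdot (\rho+\sigma) = t\cdot\rho + t\cdot\sigma$ repeatedly yields $t \cdot \sigma = \sum_{i=1}^n t \cdot (t_i \cdot r_i^\circ)$, and then the associativity axiom $t \cdot (r \cdot \rho) = (t\cdot r) \cdot \rho$ rewrites each summand to $(t \cdot t_i) \cdot r_i^\circ$, which is again in disjunctive form (the new $r_i$ are unchanged, so the side condition $r_i \notin N$ is preserved).

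There is no real obstacle here; the only point requiring a moment's care is the side condition $r_i \notin N$ in the definition of $\text{Exp}_\circ$, which is why the base case uses $0^\circ$ (legal because $0 \notin N$) rather than something like $1^\circ$, and why the multiplicative case leaves the $r_i$ untouched. No use of the axiom $(t+r)^\circ = t^\circ + r^\circ$ is needed, since by the grammar the $(-)^\circ$ constructor is only ever applied to a rational expression (never to a sum of lasso expressions), so the induction never forces us to split a $(-)^\circ$ open.
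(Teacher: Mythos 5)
Your proposal is correct and follows essentially the same route as the paper's own proof: structural induction with $0 = 0\cdot 0^\circ$, $r^\circ = 1\cdot r^\circ$, concatenation of sums for $+_\circ$, and distributivity followed by mixed associativity for $t\cdot_\circ\sigma$. Your added remarks about the side condition $r_i\notin N$ and the non-use of the $(t+r)^\circ$ axiom are accurate but do not change the argument.
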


\begin{proof}
  We show this by structural induction on rational lasso expressions.
  For $0$ we have that $0 = 0\cdot 0^\circ$. 
  For  $t^\circ$ we have  $t^\circ = 1\cdot t^\circ$.
  For $r\cdot \rho$, we know by the induction
  hypothesis that we can find a finite number of pairs $(t_i,r_i)$ such that
  $\rho= \sum_i (t_i,r_i)$. It follows that
  \[
    r\cdot \rho = r\cdot \sum_i t_i\cdot r_i^\circ = \sum_i r\cdot (t_i\cdot r_i^\circ) = \sum_i (r\cdot t_i)\cdot r_i^\circ.
  \] 
  Finally, for $\rho+\sigma$ we can find two finite
  sets $\{(t_i,r_i)\}_{i\in I},\{(t_j,r_j)\}_{j\in J}$ with
  $\rho= \sum_i (t_i,r_i)$ and
  $\sigma= \sum_j (t_j,r_j)$, then clearly
  $\rho+\sigma = \sum_{k\in I\cup J} (t_k,r_k)$ as
  required.
\end{proof}

We often write $(t,r)$ instead of $t\cdot r^\circ$, which
accentuates the distinction between the `finite' and `infinite' part more.
It also allows for a more direct correspondence to the language semantics as
\[
\llbracket(t,r)\rrbracket_\circ = \{(u,v)\in\Sigma^{\ast +}\mid u\in \llbracket t\rrbracket, v\in\llbracket r\rrbracket\}.
\]

\section{Rational Lasso Languages are Regular}\label{sec:4}

In this section we explore a construction for lasso automata from rational
lasso expressions. Given a rational lasso expression $\rho$, our aim
is to build a finite lasso automaton $\mathcal{A}$ which accepts
$\llbracket\rho\rrbracket_\circ$. This shows one direction of Kleene's
Theorem, namely that every rational lasso language is regular.

We start the section by going over the standard Brzozowski construction for deterministic
finite automata (\cite{brzozowski:1964:derivativesRE}). The rest of the section
generalises this construction to finite lasso automata. For this we first define
Brzozowski derivatives for the spoke and switch transitions. We then show some results
on how they interact with the theory of lasso algebras and prove a fundamental
theorem. This allows us to define a Brzozowski lasso automaton whose spoke
states are rational lasso expressions with the property that the language
accepted from an initial state $\rho\in\text{Exp}_\circ$ is $\llbracket \rho\rrbracket_\circ$.
Finally, we introduce a suitable equivalence relation to quotient our automaton
which guarantees that our final lasso automaton has only finitely many
reachable states from any equivalence class of rational lasso expressions.

If we consider a word $a_1\ldots a_n$, then after reading $a_1$ what remains
is $a_2\ldots a_n$. This is called a left-quotient and forms the basic idea
behind the Brzozowski derivative. The idea can be taken to the level
of sets where we start with a language $U$ and wonder what remains if we tried
to read an $a$ from all the words in $U$, again this is a left quotient often
written $a^{-1}U=\{u\mid au\in U\} $. The Brzozowski derivative tries to capture this notion
of a left-quotient syntactically.

The next set of definitions and results are well-known (\cite{brzozowski:1964:derivativesRE}).
The definition of the Brzozowski derivative makes use of the Iverson bracket
introduced in Section \ref{sec:3}.

\begin{defn}
  The \emph{Brzozowski derivative} $d:\text{Exp}\to \text{Exp}^\Sigma$ is defined recursively
  on rational expressions.
  \begin{align*}
    d(0,a)&=0 & d(t+r,a)&=d(t,a)+d(r,a) & d(b,a)&=[b=a]\\
    d(1,a)&=0	& d(t\cdot r,a)&=d(t,a)\cdot r+[t\in N]\cdot d(r,a) & d(t^\ast,a)&=d(t,a)\cdot t^\ast
  \end{align*}
\end{defn}

The next propositions show that the Brzozowski derivative does indeed
capture a left-quotient syntactically. Firstly, if two rational
languages $U$ and $V$ are equal, so should be their quotients
$a^{-1}U=a^{-1}V$ for all $a\in\Sigma$. Secondly, we expect that
$a(a^{-1}U)\subseteq U$ for all $a\in \Sigma$, and moreover, that
$U=\bigcup_{a\in \Sigma} a(a^{-1}U)$ modulo the empty word.

\begin{prop}
  The Brzozowski derivative preserves provable equality of terms.
  \[
  \forall t,r\in \text{Exp}:\ \vdash_{\mathbf{RA}}t=r \implies \left(\forall a\in\Sigma:\ \vdash_{\mathbf{RA}}d(t,a)=d(r,a)\right).
\]
\end{prop}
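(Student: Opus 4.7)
The plan is to proceed by structural induction on the derivation of the equation $\vdash_{\mathbf{RA}} t = r$. For each axiom $s_1 = s_2$ of $\mathbf{RA}$ and each $a \in \Sigma$, I must verify $\vdash_{\mathbf{RA}} d(s_1, a) = d(s_2, a)$, and for each inference rule of equational logic (reflexivity, symmetry, transitivity, and congruence for $+$, $\cdot$, and $(-)^\ast$), I must show the property is transferred from premises to conclusion. Reflexivity, symmetry, and transitivity are immediate since $\mathbf{RA}$ contains these rules, and most of the axiom cases reduce to routine calculations after unfolding $d$; for instance, $d(1 \cdot t, a) = d(1, a) \cdot t + [1 \in N] \cdot d(t, a) = 0 \cdot t + 1 \cdot d(t, a)$, which simplifies to $d(t, a)$ using the $\mathbf{RA}$-axioms $0 \cdot x = 0$, $1 \cdot x = x$, and $0 + x = x$. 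Associativity and distributivity axioms similarly shuffle into matching shape.

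The delicate ingredient is the Iverson bracket $[t \in N]$ appearing in $d(t \cdot r, a) = d(t, a) \cdot r + [t \in N] \cdot d(r, a)$. To handle both the congruence rule for $\cdot$ and any axiom involving concatenation on both sides, I need the auxiliary fact that $\vdash_{\mathbf{RA}} t = r$ implies $[t \in N] = [r \in N]$. This follows from soundness of $\mathbf{RA}$ with respect to the language semantics: provable equality entails $\llbracket t \rrbracket = \llbracket r \rrbracket$, hence $\epsilon \in \llbracket t \rrbracket \iff \epsilon \in \llbracket r \rrbracket$. Equipped with this observation, the congruence rule for concatenation falls out by applying the inductive hypothesis componentwise to $d(t, a) \cdot r + [t \in N] \cdot d(r, a)$ and invoking the congruence rules within $\mathbf{RA}$; the congruence cases for $+$ and $(-)^\ast$ are even more direct.

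The main obstacle is the Kleene star. For a fixed-point axiom such as $t^\ast = 1 + t \cdot t^\ast$, unfolding yields $d(t^\ast, a) = d(t, a) \cdot t^\ast$ on the left and $d(t, a) \cdot t^\ast + [t \in N] \cdot d(t^\ast, a)$ on the right; a case split on whether $t \in N$, combined with idempotence of $+$, reconciles the two. If $\mathbf{RA}$ is taken to be full Kleene algebra, then the conditional induction rules such as $s + t x \le x \Rightarrow t^\ast s \le x$ require more care: one rearranges $d(t^\ast s, a) = d(t, a) \cdot t^\ast s + d(s, a)$ into a shape matching the hypothesis supplied by the IH and re-applies the same induction rule inside $\mathbf{RA}$ to conclude the corresponding inequality between derivatives. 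This bookkeeping is the subtlest step of the argument, but it succeeds because $d$ preserves exactly the algebraic shape needed to invoke the induction rule a second time.
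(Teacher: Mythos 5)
The paper states this proposition without proof, presenting it as a classical fact from Brzozowski's work, so there is no in-paper argument to compare against; your induction on $\mathbf{RA}$-derivations, keyed on the observation that $\vdash_{\mathbf{RA}}t=r$ forces $[t\in N]=[r\in N]$ via soundness, is exactly the standard proof and is correct, including your treatment of the star-unfolding axiom and the right-handed induction rule. The one spot deserving an extra line is the left-handed rule $s+xt\le x\Rightarrow st^\ast\le x$: there $d(st^\ast,a)=d(s,a)\cdot t^\ast+[s\in N]\cdot d(t,a)\cdot t^\ast$, and absorbing the second summand requires first noting that $s\in N$ together with the provable $s\le x$ forces $x\in N$ (again by soundness), so that the inductive hypothesis yields $d(t,a)\le d(x,a)$ and then $d(x,a)\cdot t\le d(x,a)$ closes the argument via star induction.
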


\begin{prop}[Fundamental Theorem]\label{prop:fundamentalTheorem}
  If $t\in \text{Exp}$. Then
  \[
    \vdash t = [t\in N] + \sum_{a\in \Sigma} a\cdot d(t,a).
  \] 
\end{prop}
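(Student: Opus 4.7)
The plan is to prove the identity by structural induction on $t$. The base cases $t = 0$, $t = 1$, and $t = b$ with $b \in \Sigma$ are immediate: each right-hand side unfolds directly from the definition of $d$, using $0, b \notin N$, $1 \in N$, and the collapse of $\sum_{a \in \Sigma} a \cdot [b = a]$ to $b$.

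For the inductive step on sums $t = r + s$, I would apply the induction hypothesis to both $r$ and $s$, distribute $a \cdot (-)$ over $+$, and combine the two resulting sums via $d(r + s, a) = d(r, a) + d(s, a)$ together with the identity $[r + s \in N] = [r \in N] + [s \in N]$ (which holds thanks to idempotency of $+$). For products $t = r \cdot s$, one splits on whether $r \in N$. Applying the IH to $r$ and right-multiplying by $s$ gives $r \cdot s = [r \in N] \cdot s + \sum_{a} a \cdot d(r, a) \cdot s$. When $r \notin N$ this already matches the target, since $[r \cdot s \in N]$ vanishes and $d(r \cdot s, a) = d(r, a) \cdot s$. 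When $r \in N$, I would further substitute the IH for $s$ into the leading $[r \in N] \cdot s = s$ and gather terms under a common $\sum_a a \cdot (-)$ to match the definition $d(r \cdot s, a) = d(r, a) \cdot s + d(s, a)$.

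The main obstacle lies in the star case $t = r^\ast$. Starting from the unfolding axiom $r^\ast = 1 + r \cdot r^\ast$ together with the IH for $r$, one arrives at
\[
r^\ast = 1 + [r \in N] \cdot r^\ast + \sum_{a \in \Sigma} a \cdot d(r, a) \cdot r^\ast.
\]
When $r \notin N$ the middle summand disappears and the result follows directly from $d(r^\ast, a) = d(r, a) \cdot r^\ast$. The difficulty is the subcase $r \in N$, where the middle summand becomes $r^\ast$ itself and naive manipulation only yields one inequality, namely $\sum_a a \cdot d(r^\ast, a) \leq r^\ast$. To sidestep this, I would invoke the Kleene-algebra identity $(1 + p)^\ast = p^\ast$: setting $p = \sum_a a \cdot d(r, a)$, the IH gives $r = 1 + p$ precisely when $r \in N$, so $r^\ast = p^\ast$. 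Unfolding $p^\ast = 1 + p \cdot p^\ast$ and recalling that $p \cdot r^\ast = \sum_a a \cdot d(r^\ast, a)$ then yields the required identity. The only non-routine ingredient, therefore, is the availability of $(1 + p)^\ast = p^\ast$ in the ambient theory $\text{RA}$, which is a standard theorem of $\text{KA}$.
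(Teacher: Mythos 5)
Your proof is correct: the paper states this Fundamental Theorem as a well-known result from Brzozowski and gives no proof, and your structural induction is exactly the standard argument one would supply. In particular you correctly identify and resolve the only delicate point — the star case with $r\in N$, where the naive unfolding gives only one inequality — via the KA theorem $(1+p)^\ast=p^\ast$, which is available in any reasonable instantiation of $\text{RA}$ such as $\text{KA}$.
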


The Brzozowski derivative together with the set $N$ of expressions having the
empty word property allows us to define the
\emph{deterministic Brzozowski automaton} $\mathcal{B}=(\text{Exp},d,N)$.
We do not settle on an initial state, but join it later depending on which language
we would like the Brzozowski automaton to accept. In fact, the next proposition
shows that if we choose $t\in \text{Exp}$ as initial state, then the Brzozowski
automaton is going to accept the language $\llbracket t\rrbracket$.

\begin{prop}
  For all $t\in \text{Exp}$ we have $L(\mathcal{B},t)=\llbracket t\rrbracket$.
\end{prop}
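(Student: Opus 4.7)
The plan is to prove the equivalent statement that for every word $w \in \Sigma^\ast$ and every rational expression $t$, one has $w \in L(\mathcal{B}, t)$ if and only if $w \in \llbracket t \rrbracket$. Unfolding the definition of acceptance in the DFA $\mathcal{B} = (\text{Exp}, d, N)$ with initial state $t$, this amounts to showing
\[
d(t, w) \in N \iff w \in \llbracket t \rrbracket,
\]
where $d$ is extended to finite words by $d(t, \epsilon) = t$ and $d(t, aw) = d(d(t,a), w)$.

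I would proceed by induction on the length of $w$, with $t$ universally quantified so that the inductive hypothesis is available for every rational expression. For the base case $w = \epsilon$, one has $d(t, \epsilon) = t$, and by the very definition of $N$ as the set of expressions with the empty word property, $t \in N$ iff $\epsilon \in \llbracket t \rrbracket$. For the inductive step $w = aw'$, the inductive hypothesis applied to $d(t,a)$ gives $d(d(t,a), w') \in N$ iff $w' \in \llbracket d(t,a) \rrbracket$, so it remains to show that $aw' \in \llbracket t \rrbracket$ iff $w' \in \llbracket d(t,a) \rrbracket$.

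For this last equivalence I would invoke the Fundamental Theorem (Proposition \ref{prop:fundamentalTheorem}) together with soundness of $\text{RA}$ with respect to $\llbracket - \rrbracket$, which yields
\[
\llbracket t \rrbracket = \llbracket [t \in N]\rrbracket \cup \bigcup_{b \in \Sigma} \{b\} \cdot \llbracket d(t, b) \rrbracket.
\]
Since $aw'$ is non-empty it cannot lie in $\llbracket [t \in N] \rrbracket \subseteq \{\epsilon\}$, and since words in $\{b\} \cdot \llbracket d(t,b) \rrbracket$ begin with $b$, one has $aw' \in \llbracket t \rrbracket$ iff $aw' \in \{a\} \cdot \llbracket d(t,a) \rrbracket$ iff $w' \in \llbracket d(t,a) \rrbracket$, which closes the induction.

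The only mildly subtle point is that the Fundamental Theorem is stated as a provable equality in $\text{RA}$ rather than a semantic identity, so one needs that provable equality implies semantic equality of rational languages; this is the standard soundness of $\text{RA}$ with respect to $\llbracket - \rrbracket$ and is routine. Beyond this, the argument is a direct bookkeeping induction, so I do not expect any significant obstacle.
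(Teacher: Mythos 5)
Your proof is correct. The paper does not actually supply a proof of this proposition---it is cited as a well-known result from Brzozowski's original work---so there is nothing to diverge from; your induction on the length of $w$ (with $t$ universally quantified), reducing the inductive step to $aw' \in \llbracket t\rrbracket \iff w' \in \llbracket d(t,a)\rrbracket$ via the Fundamental Theorem and soundness of $\mathrm{RA}$ with respect to $\llbracket-\rrbracket$, is the standard argument and all steps check out. The one caveat you already flag is the right one: the paper leaves $\mathrm{RA}$ as an arbitrary fixed theory, so soundness with respect to the language semantics is an implicit standing assumption (satisfied, e.g., by $\mathrm{KA}$), and your proof depends on it.
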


Unfortunately, the Brzozowski automaton is not necessarily a DFA. If we choose
$t\in \text{Exp}$ as initial state, it could happen that there are infinitely
many reachable states. To solve this we introduce an equivalence relation
by which we can quotient the Brzozowski automaton. We define
$\sim_B\subseteq \text{Exp}^2$ to be the least equivalence relation satisfying:
\begin{align*}
  1\cdot t &\sim_B t & 0\cdot t &\sim_B 0 & t &\sim_B t+t & t+r &\sim_B r+t & (t+r)+g &\sim_B t+(r+g).
\end{align*}
We write $\sim$ whenever this does not lead to confusion. Now we can quotient
the state space, but this also means that we have to modify the Brzozowski
derivative and our set of accepting states. This is done by defining the
derivative of an equivalence class to be the equivalence class of the
derivative, and to make an equivalence class accepting if one of its members
has the empty word property. The next lemma shows that this is well-defined.

\begin{lem}\label{lem:BrzozowskiDerivativeEWPCompatibleWithTildeB}
  The equivalence relation $\sim$ is compatible with the Brzozowski derivative $d$ and
  with the predicate $N$, i.e. if $t,r\in \text{Exp}$ and $t\sim r$, then
  \begin{enumerate}
    \item $t\in N \iff r\in N$ and
    \item $\forall a\in \Sigma: d(t,a)\sim d(r,a)$.
  \end{enumerate}
\end{lem}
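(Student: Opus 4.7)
My plan is to prove both statements simultaneously by rule induction on the derivation of $t \sim r$. Since $\sim$ is the least equivalence relation generated by the five listed axioms (understood as being closed under substitution into the operations $+$ and $\cdot$, i.e.\ as a congruence), it suffices to verify that each base axiom preserves the two properties; closure under reflexivity, symmetry, transitivity, and congruence then transports both properties by applying the induction hypothesis to the substituted subterms.

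For part (1), I would read $t\in N$ as a Boolean function by interpreting $+$ as $\vee$, $\cdot$ as $\wedge$, $1$ as $\top$, and $0$ as well as the letters $a\in\Sigma$ as $\bot$. Each of the five axioms then becomes a propositional tautology ($\top\wedge x\leftrightarrow x$; $\bot\wedge x\leftrightarrow \bot$; idempotence, commutativity and associativity of $\vee$), so both sides of each axiom have the same empty-word property. This part is entirely routine.

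For part (2), I would expand $d(-,a)$ on the two sides of each axiom and reduce modulo $\sim$. The additive axioms are immediate: $d(t+t,a)=d(t,a)+d(t,a)\sim d(t,a)$ by idempotence, and the commutativity and associativity axioms lift directly to the sums produced by $d$. For $0\cdot t\sim 0$ one computes $d(0\cdot t,a)=0\cdot t+0\cdot d(t,a)\sim 0+0\sim 0=d(0,a)$ using $0\cdot s\sim 0$ (twice) together with idempotence of $+$. For $1\cdot t\sim t$ one computes $d(1\cdot t,a)=0\cdot t+1\cdot d(t,a)\sim 0+d(t,a)\sim d(t,a)$, using $0\cdot s\sim 0$, $1\cdot s\sim s$, and the unit law for $0$ implicit in the commutative-monoid structure of $+$.

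The only genuinely delicate step is the $1\cdot t\sim t$ case, because the product rule for $d$ introduces a spurious $0\cdot t$ summand that must be absorbed before the two sides can be identified. Everything else is a direct unfolding of $d$ followed by a single $\sim$-axiom, so the lemma collapses to the five brief case computations sketched above, with the main obstacle being book-keeping rather than any deeper insight.
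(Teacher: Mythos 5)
The paper itself gives no proof of this lemma (it is presented as part of the standard, well-known Brzozowski material), and your overall strategy --- simultaneous rule induction on the derivation of $t\sim r$, checking the five generating axioms and letting the equivalence/congruence closure transport the properties --- is exactly the standard argument. Part (1) is fine, as are the part (2) cases for $0\cdot t\sim 0$, idempotence, commutativity and associativity; in particular $0+0\sim 0$ is an instance of $t\sim t+t$, so no extra law is needed there. Keeping your explicit assumption that $\sim_B$ is closed under the operations is also right, since rewriting the summand $0\cdot t$ inside a sum already requires compatibility of $\sim_B$ with $+$.

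The gap is in the one case you yourself single out as delicate. For $1\cdot t\sim t$ you compute $d(1\cdot t,a)=0\cdot t+1\cdot d(t,a)\sim 0+d(t,a)$ and then discharge the summand $0$ by appeal to ``the unit law for $0$ implicit in the commutative-monoid structure of $+$''. But $\sim_B$ is generated only by $1\cdot t\sim t$, $0\cdot t\sim 0$ and the ACI laws for $+$; these make $+$ a commutative idempotent \emph{semigroup}, not a monoid with unit $0$, and $0+s\sim_B s$ is genuinely not derivable from them. To see this, interpret expressions in the three-element join-semilattice $\{p,q,r\}$ with $p,q<r$ and $p\vee q=r$, sending $0\mapsto p$, $1\mapsto q$, every letter to $q$, with $p\cdot x=p$, $q\cdot x=x$, $r\cdot x=r$ and $(-)^\ast$ the identity: all five generating identities hold, so $\sim_B$ is contained in the kernel of this interpretation, yet $0+1$ and $1$ denote $r$ and $q$ respectively. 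Consequently $d(1\cdot a,a)=0\cdot a+1\cdot 1\not\sim_B 1=d(a,a)$, so this case of the lemma actually fails for $\sim_B$ exactly as the paper defines it. The repair is to add $0+t\sim_B t$ to the generators (surely the intended relation; this is harmless for well-definedness and for the finiteness of the reachable classes), but your proof must state that this law is being added rather than claim it is already implicit in the listed axioms.
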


We let $\widehat{d}:{\text{Exp}/_{\sim}}\to (\text{Exp}/_{\sim})^\Sigma$ be
the map given by $\widehat{d}([t]_\sim,a)=[d(t,a)]_\sim$ and define the predicate
$\widehat{N}\subseteq (\text{Exp}/_{\sim})^2$ by $[t]_\sim\in \widehat{N}\iff t\in N$.

\begin{cor}
  The derivative $\widehat{d}$ and the predicate $\widehat{N}$ are well-defined.
\end{cor}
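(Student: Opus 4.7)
The statement is a direct consequence of Lemma~\ref{lem:BrzozowskiDerivativeEWPCompatibleWithTildeB}, so my plan is essentially to unpack what ``well-defined'' means on a quotient and feed in the two compatibility clauses of the lemma.

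First I would recall the general principle: a function out of a set of equivalence classes $\text{Exp}/{\sim}$ given by picking a representative is well-defined precisely when equivalent representatives yield equal outputs, and a predicate on $\text{Exp}/{\sim}$ defined by choosing a representative is well-defined precisely when equivalent representatives give the same truth value. So there are exactly two things to check, one for $\widehat{d}$ and one for $\widehat{N}$.

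For $\widehat{N}$, I would suppose $[t]_\sim = [r]_\sim$, which means $t\sim r$. By part~(1) of Lemma~\ref{lem:BrzozowskiDerivativeEWPCompatibleWithTildeB} this gives $t\in N\iff r\in N$, so the condition ``$t\in N$'' depends only on the class $[t]_\sim$ and not on the chosen representative, verifying well-definedness.

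For $\widehat{d}$, I would fix $a\in\Sigma$ and again suppose $[t]_\sim=[r]_\sim$, i.e.\ $t\sim r$. Part~(2) of the lemma yields $d(t,a)\sim d(r,a)$, so $[d(t,a)]_\sim = [d(r,a)]_\sim$; hence the value $\widehat{d}([t]_\sim,a)$ is independent of the representative chosen in $[t]_\sim$, and does so for every letter $a$. Since there is no real obstacle here beyond invoking the lemma clause by clause, I do not anticipate any hard step; the only thing to be mildly careful about is stating that the verification is for each $a\in\Sigma$ separately, so that $\widehat{d}$ is a well-defined map $\text{Exp}/{\sim}\to (\text{Exp}/{\sim})^\Sigma$ rather than merely a collection of well-defined values.
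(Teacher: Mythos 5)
Your proof is correct and matches the paper's intent exactly: the corollary is stated as an immediate consequence of Lemma~\ref{lem:BrzozowskiDerivativeEWPCompatibleWithTildeB}, and your unpacking of well-definedness on the quotient via the two compatibility clauses is precisely the argument the paper leaves implicit. Nothing is missing.
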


This concludes the construction of the quotiented automaton
$\widehat{\mathcal{B}}=(\text{Exp}/{\sim},\widehat{d},\widehat{N})$. The next
result shows that if we choose $[t]_\sim$ as initial state, then
we obtain a DFA which accepts $\llbracket t\rrbracket$.

\begin{thm}
  For all $t\in \text{Exp}$, $L(\widehat{\mathcal{B}},[t]_\sim)=\llbracket t\rrbracket$ and
  the set of states reachable from $[t]_\sim$ is finite.
\end{thm}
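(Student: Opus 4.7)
The plan is to handle the two claims separately: first, transfer the language characterisation from $\mathcal{B}$ to the quotient $\widehat{\mathcal{B}}$; second, prove finiteness of the reachable fragment by structural induction on $t$.

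For the language equality, I would first show by a routine induction on $|w|$ that the run of $\widehat{\mathcal{B}}$ from $[t]_\sim$ on $w\in\Sigma^\ast$ ends in the class $[d(t,w)]_\sim$, where $d(t,w)$ denotes the iterated Brzozowski derivative in $\mathcal{B}$. The inductive step uses only the definition $\widehat{d}([t]_\sim,a)=[d(t,a)]_\sim$ together with Lemma \ref{lem:BrzozowskiDerivativeEWPCompatibleWithTildeB}(2) to see that the resulting class is independent of the chosen representative. Acceptance of $w$ in $\widehat{\mathcal{B}}$ then amounts to $[d(t,w)]_\sim\in\widehat{N}$, which by Lemma \ref{lem:BrzozowskiDerivativeEWPCompatibleWithTildeB}(1) is equivalent to $d(t,w)\in N$, i.e.\ acceptance of $w$ in $\mathcal{B}$ from state $t$. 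Combined with the preceding proposition $L(\mathcal{B},t)=\llbracket t\rrbracket$, this immediately gives $L(\widehat{\mathcal{B}},[t]_\sim)=\llbracket t\rrbracket$.

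For the finiteness claim, I would define $R(t):=\{[d(t,w)]_\sim\mid w\in\Sigma^\ast\}$ and prove by structural induction on $t$ that $R(t)$ is finite. The cases $t\in\{0,1\}\cup\Sigma$ are immediate, since already after one letter the derivative becomes $0$ or $1$. For $t+r$, the identity $d(t+r,w)=d(t,w)+d(r,w)$ for $|w|\geq 1$ together with the ACI axioms packaged into $\sim$ bounds $|R(t+r)|$ by $|R(t)|\cdot|R(r)|+1$. For $t\cdot r$, a short induction on $|w|$ shows that $d(t\cdot r,w)$ is $\sim$-equivalent to an expression of the form $d(t,w)\cdot r+\sum_{i\in I}d(r,w_i)$, where $I$ indexes the prefixes of $w$ whose $t$-derivative lies in $N$ and $w_i$ is the corresponding suffix; modulo ACI and $0$-absorption this sum is determined by a pair in $R(t)\times\mathcal{P}(R(r))$. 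The Kleene-star case is analogous: $d(t^\ast,w)$ collapses to a sum of terms $d(t,w_i)\cdot t^\ast$ for certain suffixes $w_i$ of $w$, yielding a bound via $\mathcal{P}(R(t))$.

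The main obstacle lies in the $\cdot$ and ${}^\ast$ cases: the number of summands in $d(\cdot,w)$ is a priori unbounded in $|w|$, and the whole finiteness argument rests on the ACI and absorption laws packaged into $\sim$ collapsing these growing sums into a bounded set of representatives. Writing out these two cases carefully, in particular verifying that every summand that accumulates under iterated differentiation is $\sim$-equivalent to one whose building blocks come from $R(t)$ and $R(r)$ given by the inductive hypothesis, is the technically delicate step.
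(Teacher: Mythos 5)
The paper states this result without proof, presenting it as well known and citing Brzozowski's original article; your argument is precisely that classical one --- transferring the language characterisation along the quotient map via well-definedness of $\widehat{d}$ and $\widehat{N}$, and then establishing finiteness of the derivatives modulo ACI by structural induction with the set-of-summands bookkeeping in the $\cdot$ and ${}^\ast$ cases --- and it is correct. Two small points worth tightening: in the product case the sum should range over \emph{proper} prefixes of $w$ whose $t$-derivative lies in $N$ (including $w$ itself would wrongly add an extra $r$-summand when $d(t,w)\in N$), and the collapsing of the growing sums into boundedly many classes relies on reading $\sim_B$ as the \emph{congruence} generated by the listed identities, so that $\sim_B$-equivalent summands and left factors may be exchanged inside larger terms --- the standard reading, and the one Lemma \ref{lem:BrzozowskiDerivativeEWPCompatibleWithTildeB} already presupposes.
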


The next example shows that with minimal effort and using the Brzozowski
construction, we can quickly come up with a lasso automaton for a simple
rational lasso expression.

\begin{eg}\label{eg:BrzozowskiConstruction}
  Let $\rho=(b(ab)^\ast,ab^\ast)$. Then $(u,v)\in\llbracket \rho\rrbracket_\circ$
  if and only if $u\in\llbracket b(ab)^\ast\rrbracket$ and $v\in\llbracket ab^\ast\rrbracket$.
  So the idea is to build one \textcolor{blue}{DFA} for $b(ab)^\ast$ and one
  \textcolor{red}{DFA} for $ab^\ast$, which
  correspond to the spoke and loop part of the lasso automaton, and then link
  them. The construction of the DFAs is done using Brzozowski derivatives,
  which yields the following DFAs:
  \begin{center}
    \begin{tikzpicture}[modal]
      \node[state,rectangle,rounded corners,initial,blue] at (0,2) (a) [label=above:{}] {$b(ab)^\ast$};
      \node[state,rectangle,rounded corners,accepting,blue] at (0,0) (b) [label=above:{}] {$(ab)^\ast$};
      \node[state,rectangle,rounded corners,blue] at (3,1) (c) [label=above:{}] {$0$};
      \node[state,rectangle,rounded corners,red] at (6,2) (d) [label=above:{}] {$0$};
      \node[state,rectangle,rounded corners,accepting,red] at (6,0) (e) [label=above:{}] {$b^\ast$};
      \node[state,rectangle,rounded corners,initial,initial where=right,red] at (8,1) (f) [label=above:{}] {$ab^\ast$};
      \path[->,blue] (a) edge[bend left] node[black]{$b$} (b);
      \path[->,blue] (a) edge[bend left] node[black]{$a$} (c);
      \path[->,blue] (b) edge[bend left] node[black]{$a$} (a);
      \path[->,blue] (b) edge[bend right] node[black]{$b$} (c);
      \path[->,blue] (c) edge[reflexive left] node[right,black]{$a,b$} (c);
      \path[->,dashed,red] (d) edge[reflexive above] node[below,black]{$a,b$} (d);
      \path[->,dashed,red] (e) edge node[black]{$a$} (d);
      \path[->,dashed,red] (e) edge[reflexive below] node[above,black]{$b$} (e);
      \path[->,dashed,red] (f) edge node[black,above right]{$b$} (d);
      \path[->,dashed,red] (f) edge node[black]{$a$} (e);
    \end{tikzpicture}
  \end{center}
  Note that we are allowed to transition from the first to the second
  DFA immediately after we have read $u$, that is, when we
  have reached an accepting state in the \textcolor{blue}{spoke
    DFA}. In order to determine where to switch to, we think of our
  spoke state as the initial state of the loop
  DFA.  From any other state in the spoke DFA (i.e.\thinspace the
  non-accepting ones), attempting to transition just leads to a dead
  state. In the final lasso automaton the initial state of the second
  DFA is omitted as it is not reachable.
  \begin{center}
    \begin{tikzpicture}[modal]
      \node[state,rectangle,rounded corners,initial,blue] at (0,2) (a) [label=above:{}] {$b(ab)^\ast$};
      \node[state,rectangle,rounded corners,blue] at (0,0) (b) [label=above:{}] {$(ab)^\ast$};
      \node[state,rectangle,rounded corners,blue] at (3,1) (c) [label=above:{}] {$0$};
      \node[state,rectangle,rounded corners,red] at (6,2) (d) [label=above:{}] {$0$};
      \node[state,rectangle,rounded corners,accepting,red] at (6,0) (e) [label=above:{}] {$b^\ast$};
      \path[->,blue] (a) edge[bend left] node[black]{$b$} (b);
      \path[->,blue] (a) edge[bend left] node[black]{$a$} (c);
      \path[->,dotted] (a) edge[bend left] node[black]{$a,b$} (d);
      \path[->,blue] (b) edge[bend left] node[black]{$a$} (a);
      \path[->,blue] (b) edge[bend right] node[black]{$b$} (c);
      \path[->,dotted] (b) edge[bend right] node[black]{$b$} (d);
      \path[->,dotted] (b) edge[bend right] node[black]{$a$} (e);
      \path[->,blue] (c) edge[reflexive left] node[right,black]{$a,b$} (c);
      \path[->,dotted] (c) edge node[black]{$a,b$} (d);
      \path[->,dashed,red] (d) edge[reflexive above] node[below,black]{$a,b$} (d);
      \path[->,dashed,red] (e) edge node[black]{$a$} (d);
      \path[->,dashed,red] (e) edge[reflexive below] node[above,black]{$b$} (e);
    \end{tikzpicture}
  \end{center}
\end{eg}

We turn this intuition into a formal construction, first defining the derivatives and
showing some of their properties, and then proving that the lasso automaton
we obtain accepts the desired language.

\begin{defn}
  Define the \emph{spoke} and \emph{switch Brzozowski derivatives}
  $d_1\colon\text{Exp}_\circ \to \text{Exp}_\circ^\Sigma$ and
  $d_2\colon\text{Exp}_\circ\to \text{Exp}^\Sigma$ recursively on the structure of
  rational lasso expressions as
  {\allowdisplaybreaks
  \begin{align*}
    d_1(0,a)&=0 & d_2(0,a)&=0 \\
    d_1(t^\circ,a)&=0 & d_2(t^\circ,a)&=d(t,a)\\
    d_1(\rho+\sigma,a)&=d_1(\rho,a)+d_1(\sigma,a) & d_2(\rho+\sigma,a)&=d_2(\rho,a)+d_2(\sigma,a) \\
    d_1(r\cdot \rho,a)&=d(r,a)\cdot \rho + [r\in N]\cdot d_1(\rho,a) & d_2(r\cdot \rho,a)&=[r\in N]\cdot d_2(\rho,a)
       \end{align*}
  }
\end{defn}

Analogously to the situation with the classical Brzozowski construction,
we can show that the additional derivatives preserve provable equality of
terms.

\begin{prop}\label{prop:derivativesPreserveProvableEquality}
  Let $\rho,\sigma\in\text{Exp}_\circ$ with $\vdash_{\mathbf{LA}}\rho=\sigma$.
  Then for all $a\in \Sigma$:
  \begin{enumerate}
    \item $\vdash_{\mathbf{RA}} d_2(\rho,a)=d_2(\sigma,a)$ and
    \item $\vdash_{\mathbf{LA}}d_1(\rho,a)=d_1(\sigma,a)$.
  \end{enumerate}
\end{prop}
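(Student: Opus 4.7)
The plan is to prove both claims simultaneously by induction on the LA-derivation of $\rho=\sigma$. In parallel I will lean on the earlier result stated for the classical Brzozowski derivative, namely that $d$ preserves RA-provable equality, together with the fact that RA-equality preserves the empty-word property (so that the $[t\in N]$ markers appearing in the defining clauses of $d_1$ and $d_2$ are unaffected when an RA-equivalent rational subterm is substituted). The latter is immediate from the soundness of RA, since membership in $N$ is semantically the condition $\epsilon\in\llbracket t\rrbracket$.

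The structural rules are handled uniformly. Reflexivity, symmetry, and transitivity carry over directly through the corresponding rules of RA and LA on the derivatives. For the substitution/congruence rule, because $d_1$ and $d_2$ are defined by structural recursion on $\text{Exp}_\circ$, invoking only $d$ and the $[-\in N]$ predicate on rational subterms, substituting an LA-equivalent lasso subterm or an RA-equivalent rational subterm into the defining clauses yields provably equivalent expressions by the inductive hypothesis. The heart of the argument is therefore to verify, for each of the twelve axioms in Definition \ref{defn:LAAxioms}, that applying $d_1$ and $d_2$ to the two sides produces LA- and RA-provably equal expressions respectively.

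Most of these checks are routine unfolding followed by a few LA-simplifications. For instance $d_1(1\cdot\rho,a)=0\cdot\rho+1\cdot d_1(\rho,a)=d_1(\rho,a)$ and $d_2(1\cdot\rho,a)=d_2(\rho,a)$; the $(-)^\circ$-axioms $0^\circ=0$ and $(t+r)^\circ=t^\circ+r^\circ$ fall out from the definitions of $d_1,d_2$ on $(-)^\circ$ together with $d(t+r,a)=d(t,a)+d(r,a)$; the sum axioms and $0$-absorption axioms reduce to their RA counterparts; and the distributivity axioms follow by distributivity on the RA side. The one delicate case is associativity of $\cdot$, where after unfolding $d_1$ on both sides one must reconcile
\[
d(t,a)\cdot(r\cdot\rho)+[t\in N]\cdot\bigl(d(r,a)\cdot\rho+[r\in N]\cdot d_1(\rho,a)\bigr)
\]
with
\[
\bigl(d(t,a)\cdot r+[t\in N]\cdot d(r,a)\bigr)\cdot\rho+[t\cdot r\in N]\cdot d_1(\rho,a),
\]
using LA-distributivity and associativity of $\cdot$ together with the identity $[t\cdot r\in N]=[t\in N]\cdot[r\in N]$, which holds by case analysis on the Boolean values of the two brackets; the corresponding $d_2$-computation is similar but shorter.

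The main obstacle is precisely this bookkeeping with Iverson brackets in the associativity case, where the $[-\in N]$ factors must be pushed around consistently across both sides so that LA-provable equality can be exhibited cleanly. No new idea is required beyond careful application of LA-distributivity and the Boolean identities for $[-\in N]$, but the case distinctions need to be done methodically. Once this step is dispatched, the remaining axiom checks together with the uniform treatment of the structural rules yield both statements simultaneously.
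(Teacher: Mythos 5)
Your proposal is correct and follows essentially the same route as the paper: check each LA axiom (and the substitution/structural rules) against the definitions of $d_1$ and $d_2$, with the only delicate case being associativity of $\cdot$, resolved via the identity $[t\cdot r\in N]=[t\in N]\cdot[r\in N]$ together with distributivity and mixed associativity. The paper likewise treats the associativity and $(t+r)^\circ$ axioms explicitly and dismisses the remaining checks as routine, so there is nothing substantive to distinguish the two arguments.
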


\begin{proof}
  We need to check all
  the relevant equations from our theory and make sure that it works
  for substitution.  We only treat some of the equations from the
  theory as this involves simple manipulations of equations.  The
  checks for substitution are readily verified and hence omitted.

  For both derivatives we show the claim for the equations
  $t\cdot (r\cdot \rho)=(t\cdot r)\cdot \rho$ and $(t+r)^\circ = t^\circ + r^\circ$.
  When treating the first equation, we need this additional fact:
  \[
    \vdash[t\in N]\cdot [r\in N]=[t\cdot r\in N].
  \]

  We start with the derivative $d_1$ and check the first equation:
  {\allowdisplaybreaks
  \begin{align*}
    d_1(t\cdot (r\cdot \rho),a) &= d(t,a)\cdot (r\cdot \rho) + [t\in N]\cdot d_1(r\cdot \rho,a) && (\text{defn. $d_1$})\\
                                         &= d(t,a)\cdot (r\cdot \rho) + [t\in N]\cdot (d(r,a)\cdot \rho \\
                                         &\qquad +[r\in N]\cdot d_1(\rho,a)) && (\text{defn. $d_1$}) \\
                                         &= d(t,a)\cdot (r\cdot \rho) + [t\in N]\cdot (d(r,a)\cdot \rho) \\
                                         &\qquad +[t\in N]\cdot ([r\in N]\cdot d_1(\rho,a)) && (\text{dist.}) \\
                                         &= (d(t,a)\cdot r)\cdot \rho + ([t\in N]\cdot d(r,a))\cdot \rho \\
                                         &\qquad +([t\in N]\cdot [r\in N])\cdot d_1(\rho,a) && (\text{mixed assoc.})\\
                                         &= \left(d(t,a)\cdot r + [t\in N]\cdot d(r,a)\right)\cdot \rho\\
                                         &\qquad +[t\cdot r\in N]\cdot d_1(\rho,a) && (\text{dist. \& fact}) \\
                                         &= d(t\cdot r,a)\cdot \rho+[t\cdot r\in N]\cdot d_1(\rho,a) && (\text{defn. $d$}) \\
                                         &= d_1((t\cdot r)\cdot \rho,a) && (\text{defn. $d_1$})
  \end{align*}
  }
  For the other equation we get
  \[
    d_1((t+r)^\circ,a) = 0 = 0+0 = d_1(t^\circ,a)+d_1(r^\circ,a)=d_1(t^\circ+r^\circ,a),
  \] using the definition of $d_1$ and $\vdash_{\mathbf{LA}}0+0=0$.
  
  Next we turn to $d_2$. Again, for the first equation we have
  \begin{align*}
    d_2(t\cdot (r\cdot \rho),a) &= [t\in N]\cdot d_2(r\cdot \rho,a) && (\text{defn. $d_2$})\\
                                         &= [t\in N]\cdot \left([r\in N]\cdot d_2(\rho,a)\right) && (\text{defn. $d_2$}) \\
                                         &= \left([t\in N]\cdot [r\in N]\right)\cdot d_2(\rho,a) && (\text{mixed assoc.})\\
                                         &= [t\cdot r\in N]\cdot d_2(\rho,a) && (\text{fact})\\
                                         &= d_2((t\cdot r)\cdot \rho,a). && (\text{defn. $d_2$})
  \end{align*}
  Finally, for the second equation we obtain
  \begin{align*}
    d_2((t+r)^\circ,a) &= d(t+r,a) && (\text{defn. $d_2$}) \\
                            &= d(t,a)+d(r,a) && (\text{defn. $d$}) \\
                            &= d_2(t^\circ,a)+d_2(r^\circ,a) && (\text{defn. $d_2$}) \\
                            &= d_2(t^\circ+r^\circ,a). && (\text{defn. $d_2$})
  \end{align*}
  The other proofs are similar.
\end{proof}

Next we show a fundamental theorem. The two-sorted nature of lasso automata
becomes clearly visible with a rational lasso expression being decomposed
in terms of the two derivatives $d_1$ and $d_2$.

\begin{prop}[Fundamental Theorem]\label{prop:fundamentalTheoremLasso}
  Let $\rho\in \text{Exp}_\circ$. Then
  \[
    \vdash_{\mathbf{LA}}\rho = \left(\sum_{a\in \Sigma}a\cdot d_1(\rho,a)\right) + \left( \sum_{a\in \Sigma} a\cdot d_2(\rho,a) \right)^\circ. 
  \] 
\end{prop}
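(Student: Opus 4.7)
The plan is to prove the identity by structural induction on $\rho \in \text{Exp}_\circ$, following the four productions of the grammar. Throughout, I would first note that every expression of the form $\sum_{a \in \Sigma} a \cdot d_2(\rho, a)$ to which $(-)^\circ$ is applied lies outside $N$, since semantically its language is a union of sets whose words each begin with some symbol $a$, so $\epsilon$ is excluded; this guarantees all applications of $(-)^\circ$ on the right-hand side are well-formed and that the axiom $(t+r)^\circ = t^\circ + r^\circ$ applies.

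For the base case $\rho = 0$, both derivatives vanish and the right-hand side collapses to $0 + 0^\circ = 0$ via the axioms $0^\circ = 0$ and $0 + 0 = 0$. For $\rho = t^\circ$ with $t \notin N$, the definitions give $d_1(t^\circ, a) = 0$ and $d_2(t^\circ, a) = d(t, a)$, reducing the right-hand side to $\bigl(\sum_a a \cdot d(t, a)\bigr)^\circ$; Proposition \ref{prop:fundamentalTheorem} combined with $[t \in N] = 0$ yields $\vdash t = \sum_a a \cdot d(t, a)$, and substitution under $(-)^\circ$ closes the case. For $\rho = \rho_1 + \rho_2$, both $d_1$ and $d_2$ distribute over $+$ by definition, so after pulling the outer sums apart and invoking $(t + r)^\circ = t^\circ + r^\circ$, the two induction hypotheses reassemble $\rho_1 + \rho_2$.

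The substantial case is $\rho = r \cdot \sigma$, where the Iverson brackets in the definitions of $d_1$ and $d_2$ force a split on whether $r \in N$. If $r \notin N$, then $d_2(r \cdot \sigma, a) = 0$ and $d_1(r \cdot \sigma, a) = d(r, a) \cdot \sigma$; the right-hand side collapses via $0^\circ = 0$ and the distributivity axioms to $\bigl(\sum_a a \cdot d(r, a)\bigr) \cdot \sigma$, and Proposition \ref{prop:fundamentalTheorem} applied to $r$ (with $[r \in N] = 0$) concludes. If $r \in N$, then $d_1(r \cdot \sigma, a) = d(r, a) \cdot \sigma + d_1(\sigma, a)$ and $d_2(r \cdot \sigma, a) = d_2(\sigma, a)$; distributing and regrouping the outer sums, the right-hand side becomes $\bigl(\sum_a a \cdot d(r, a)\bigr) \cdot \sigma + \bigl(\sum_a a \cdot d_1(\sigma, a) + (\sum_a a \cdot d_2(\sigma, a))^\circ\bigr)$, whose second summand equals $\sigma$ by the induction hypothesis. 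Since $[r \in N] = 1$, Proposition \ref{prop:fundamentalTheorem} gives $\vdash r = 1 + \sum_a a \cdot d(r, a)$, hence $\vdash r \cdot \sigma = \sigma + \bigl(\sum_a a \cdot d(r, a)\bigr) \cdot \sigma$, and commutativity of $+$ finishes the case.

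The main obstacle is exactly this concatenation case: the Iverson brackets have to be threaded through the distributivity and mixed associativity axioms, and one must identify the right moment to invoke Proposition \ref{prop:fundamentalTheorem} on $r$ and to fold the bracketed block back into $\sigma$ via the induction hypothesis. Everything else is routine manipulation in the lasso algebra axioms, relying repeatedly on distributivity of $\cdot$ over $+$ on both sides, mixed associativity $t \cdot (r \cdot \rho) = (t \cdot r) \cdot \rho$, and the substitution rule of $\mathbf{LA}$.
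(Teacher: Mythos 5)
Your proof is correct and follows essentially the same route as the paper: structural induction over the four productions, with the base cases and the sum case handled identically and the concatenation case resolved by the Fundamental Theorem for rational expressions applied to $r$ together with the induction hypothesis. The only (cosmetic) difference is that you split on $r\in N$ versus $r\notin N$, whereas the paper keeps the Iverson bracket $[r\in N]$ symbolic and pushes it through using the auxiliary facts $t\cdot[r\in N]=[r\in N]\cdot t$ and $([r\in N]\cdot t)^\circ=[r\in N]\cdot t^\circ$; both variants are sound derivations in $\mathbf{LA}$.
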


\begin{proof}
  We proceed by structural induction on rational lasso expressions. For the case
  where $\rho=0$ we apply the definition of $d_1$ and $d_2$, and use the
  fact that $0$ is an absorbing element.
  \[
      \sum_{a\in \Sigma} a\cdot d_1(0,a) + \left( \sum_{a\in\Sigma} a\cdot d_2(0,a)\right)^\circ = \sum_a a\cdot 0 + \left(\sum_a a\cdot 0 \right)^\circ = 0
    \]
  For the case $\rho=t^0$, we apply the definition of the derivatives, use the
  fact that $0$ is an absorbing element, and use the Fundamental Theorem
  (Proposition \ref{prop:fundamentalTheorem}) for rational expressions with the fact that $t\not\in N$.
  \[
      \sum_{a\in \Sigma} a\cdot d_1(t^\circ,a) + \left( \sum_{a\in\Sigma} a\cdot d_2(t^\circ,a)\right)^\circ = \sum_a a\cdot 0 + \left(\sum_a a\cdot d(t,a) \right)^\circ = t^\circ
    \]
  For the case of $\rho+\sigma$, we make use of the induction hypothesis,
  distributivity, associativity, commutativity and the fact that $(-)^\circ$ 
  distributes over sums.
    \begin{align*}
      &\phantom{km} \sum_{a\in \Sigma} a\cdot d_1(\rho+\sigma,a) + \left( \sum_{a\in\Sigma} a\cdot d_2(\rho+\sigma,a)\right)^\circ \\
      &= \sum_a a\cdot d_1(\rho,a) + \sum_a a\cdot d_1(\sigma,a) + \left(\sum_a a\cdot d_2(\rho,a)+\sum_a a\cdot d_2(\sigma,a) \right)^\circ \\
      &= \sum_a a\cdot d_1(\rho,a) + \sum_a a\cdot d_1(\sigma,a) + \left(\sum_a a\cdot d_2(\rho,a)\right)^\circ +\left(\sum_a a\cdot d_2(\sigma,a) \right)^\circ \\
      &= \sum_a a\cdot d_1(\rho,a) + \left(\sum_a a\cdot d_2(\rho,a)\right)^\circ + \sum_a a\cdot d_1(\sigma,a) + \left(\sum_a a\cdot d_2(\sigma,a) \right)^\circ \\
      &= \rho + \sigma
    \end{align*}
  Finally, for the case $r\cdot \rho$ we use the Fundamental Theorem for rational
  expressions (Proposition \ref{prop:fundamentalTheorem}) on $r$, make use of
  the induction hypothesis, distributivity, associativity, commutativity and
  the additional facts
  \[
    \vdash t\cdot [r\in N]=[r\in N]\cdot t \qquad \text{ and }\qquad
    \vdash_{\mathbf{LA}} ([r\in N]\cdot t)^\circ = [r\in N]\cdot t^\circ.
  \]
  This yields:
  \allowdisplaybreaks{
    \begin{align*}
      &\phantom{km} \sum_{a\in \Sigma} a\cdot d_1(r\cdot \rho,a) + \left( \sum_{a\in\Sigma} a\cdot d_2(r\cdot \rho,a)\right)^\circ \\
      &= \sum_{a} a\cdot (d(r,a)\cdot \rho + [r\in N]\cdot d_1(\rho,a)) + \left( \sum_{a} a\cdot [r\in N] \cdot d_2(\rho,a)\right)^\circ \\
      &= \sum_{a} a\cdot (d(r,a)\cdot \rho) +\sum_a a\cdot \left([r\in N]\cdot d_1(\rho,a)\right) + \left( \sum_{a\in\Sigma} a\cdot [r\in N] \cdot d_2(\rho,a)\right)^\circ \\
      &= \sum_{a} (a\cdot d(r,a))\cdot \rho +[r\in N]\cdot \sum_a a\cdot d_1(\rho,a) + [r\in N]\cdot \left( \sum_{a\in\Sigma} a \cdot d_2(\rho,a)\right)^\circ \\
      &= \left(\sum_{a} a\cdot d(r,a)\right)\cdot \rho +[r\in N]\cdot \left(\sum_a  a\cdot d_1(\rho,a) + \left( \sum_{a\in\Sigma} a \cdot d_2(\rho,a)\right)^\circ\right) \\
      &= \left(\sum_{a} a\cdot d(r,a)\right)\cdot \rho +[r\in N]\cdot \rho \\
      &= \left(\left(\sum_{a} a\cdot d(r,a)\right) +[r\in N]\right)\cdot \rho \\
      &= r\cdot \rho \qedhere
    \end{align*}
  }
\end{proof}

We call the lasso automaton
$\mathcal{C}=(\text{Exp}_\circ,\text{Exp},d_1,d_2,d,N)$ the
\emph{Brzozowski lasso automaton}. The loop states together with the Brzozowski
derivative and the accepting states forms the standard Brzozowski automaton
$\mathcal{B}$, so we can think of $\mathcal{C}$ as extending $\mathcal{B}$.
Moreover, if we set the initial state to be $\rho\in\text{Exp}_\circ$, then
the lasso language accepted by $\mathcal{C}$ is $\llbracket\rho\rrbracket_\circ$.

\begin{prop}\label{prop:BrzozowskiLassoAutomaton}
  If $\rho\in\text{Exp}_\circ$, then $L_\circ(\mathcal{C},\rho)=\llbracket\rho\rrbracket_\circ$.
\end{prop}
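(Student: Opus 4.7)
The plan is to prove this by induction on the length of the spoke $|u|$, using the Fundamental Theorem (Proposition \ref{prop:fundamentalTheoremLasso}) together with Soundness (Proposition \ref{prop:soundness}) to reduce the membership of a lasso $(u,v)$ in $\llbracket\rho\rrbracket_\circ$ to the membership of a strictly smaller lasso in the semantics of a derivative. On the automaton side, acceptance is already defined in a way that peels off one letter at a time via $d_1$ until $u$ is exhausted, then switches via $d_2$ and feeds the remainder of $v$ to the classical Brzozowski automaton $\mathcal{B}$ on $\text{Exp}$; we can therefore lean on the known fact $L(\mathcal{B},t)=\llbracket t\rrbracket$ on the loop side.

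For the base case $u=\epsilon$, write $v=av'$ with $a\in\Sigma$ and $v'\in\Sigma^\ast$. Acceptance unfolds as $(d_2\col d)(\rho,av')\in N$, i.e.\ $d(d_2(\rho,a),v')\in N$, which by the classical Brzozowski result is equivalent to $v'\in\llbracket d_2(\rho,a)\rrbracket$. On the other hand, applying the Fundamental Theorem for lassos and Soundness, $(\epsilon,v)\in\llbracket\rho\rrbracket_\circ$ iff $(\epsilon,v)$ lies in the semantics of one of the summands of $\big(\sum_{a}a\cdot d_1(\rho,a)\big)+\big(\sum_{a}a\cdot d_2(\rho,a)\big)^\circ$. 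The $d_1$-summands all produce lassos with non-empty spoke, so only the $(-)^\circ$-summand can contribute, and the definition of $(\cdot)^\circ$ reduces membership to $v\in\llbracket\sum_a a\cdot d_2(\rho,a)\rrbracket$, i.e.\ $v'\in\llbracket d_2(\rho,a)\rrbracket$ for the specific leading letter $a$; this matches the automaton condition.

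For the inductive step $u=au'$, acceptance of $(au',v)$ from $\rho$ unfolds (by definition of the extended spoke transition) to acceptance of $(u',v)$ from $d_1(\rho,a)$, and the induction hypothesis converts this to $(u',v)\in\llbracket d_1(\rho,a)\rrbracket_\circ$, equivalently $(au',v)\in\{a\}\cdot\llbracket d_1(\rho,a)\rrbracket_\circ=\llbracket a\cdot d_1(\rho,a)\rrbracket_\circ$. The Fundamental Theorem and Soundness then give $(au',v)\in\llbracket\rho\rrbracket_\circ$ iff $(au',v)$ lies in the semantics of one of the summands; the $(-)^\circ$-summand only provides lassos with empty spoke, and among the $d_1$-summands only the one indexed by $a$ can produce a lasso with spoke starting in $a$, so membership reduces precisely to $(u',v)\in\llbracket d_1(\rho,a)\rrbracket_\circ$.

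The routine algebraic manipulations are essentially a recasting of the Fundamental Theorem at the semantic level; the only genuinely delicate step is the base case, where one has to be careful about how the extended map $(\delta_2\col\delta_3)$ threads through the switch transition $d_2$ on the first letter of $v$ and then through the loop derivative $d$ on the remaining letters, and to verify that the $(-)^\circ$-summand of the Fundamental Theorem accounts for exactly the lassos with empty spoke. Once that case analysis between ``$u$ is empty'' and ``$u$ starts with $a$'' is done cleanly, the induction closes and yields $L_\circ(\mathcal{C},\rho)=\llbracket\rho\rrbracket_\circ$.
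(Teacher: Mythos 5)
Your proposal is correct and follows essentially the same route as the paper: induction on the spoke length, using the Fundamental Theorem for lasso expressions together with Soundness to peel off one letter at a time via $d_1$, and handing the loop word to the classical Brzozowski automaton via $d_2$ in the base case. Your added care in noting which summand of the decomposition can contribute in each case (the $(-)^\circ$-summand only for empty spokes, the $a$-indexed $d_1$-summand otherwise) is a point the paper leaves implicit, but the argument is the same.
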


\begin{proof}
  We proceed by induction on the length of the spoke word.
  For the base case assume the spoke word to be the empty
  word, i.e. consider the lasso 
  $(\epsilon,au)$. By using
  the Fundamental Theorem (Proposition \ref{prop:fundamentalTheoremLasso}) and
  Soundness (Proposition \ref{prop:soundness}) we get
  \[
    (\epsilon,au)\in \llbracket \rho\rrbracket_\circ \iff (\epsilon,au)\in \left\llbracket \sum_a a\cdot d_1(\rho,a) + \left(\sum_a a\cdot d_2(\rho,a) \right)^\circ \right\rrbracket_\circ.
  \]
  It follows that $au\in \left\llbracket a\cdot d_2(\rho,a) \right\rrbracket$
  which is equivalent to $\epsilon \in \llbracket d(d_2(\rho,a),u) \rrbracket$,
  and in turn to $d(d_2(\rho,a),u) \in N$. By the definition of acceptance
  this is the same as $(\epsilon,au)\in L_\circ(\mathcal{C},\rho)$.
  This establishes the base case.
  For the induction step consider the lasso $(au,v)$. Again by using
  the Fundamental Theorem and Soundness we get that
  \[
    (au,v)\in \llbracket \rho\rrbracket_\circ \iff (au,v)\in \left\llbracket \sum_a a\cdot d_1(\rho,a) + \left(\sum_a a\cdot d_2(\rho,a) \right)^\circ \right\rrbracket_\circ.
  \]
  By the definition of the semantics map this is the same as
  $(au,v)\in \left\llbracket a\cdot d_1(\rho,a) \right\rrbracket_\circ$
  which in turn is equivalent to
  $(u,v)\in \left\llbracket d_1(\rho,a) \right\rrbracket_\circ$. Using the
  induction hypothesis we get 
  $(u,v)\in L_\circ(\mathcal{C},d_1(\rho,a))$ which is equivalent to
  $(au,v)\in L_\circ(\mathcal{C},\rho)$.
\end{proof}

The Brzozowski lasso automaton is not necessarily finite. The easiest
fix to this is to assume that our rational lasso expressions are
disjunctive forms (c.f.\thinspace Proposition \ref{prop:normalForm}).
To accommodate this assumption with regards to our Brzozowski lasso automaton,
we slightly adapt the definition of $d_1$ so that $d_1(\rho,a)$ is again a
disjunctive form for $a\in \Sigma$. This is done by defining
\[
  d_1(t\cdot s^\circ,a) := d(t,a)\cdot s^\circ.
\]

It is important to point out that this does not jeopardise our earlier
result (Proposition \ref{prop:BrzozowskiLassoAutomaton}). Indeed without this change we would have
\[
  d_1(t\cdot s^\circ,a) = d(t,a)\cdot s^\circ + [t\in N]\cdot d(s^\circ,a) = d(t,a)\cdot s^\circ + [t\in N]\cdot 0,
\] which is provably equivalent to $d(t,a)\cdot s^\circ$.
We ignore this technicality in pursuit of a cleaner presentation.

It should also be remarked, that with these changes $d_1$ only acts on the spoke expressions (where
it acts just like a normal Brzozowski derivative),
$d_2$ acts on the loop expression to give a rational expression
(provided that the spoke expression has the empty word property) and finally
$d$ acts on the obtained rational expression. In this sense, we are
really constructing two DAs using Brzozowski derivatives
and linking them appropriately with $d_2$, as in Example \ref{eg:BrzozowskiConstruction}.

In the remainder of this section, we quotient our Brzozowski lasso
automaton by a suitable equivalence relation which respects the
derivatives, and such that for any rational lasso expression, its
set of successors is finite. Using our
assumption that rational lasso expressions are in a disjunctive form,
the definition makes use of the equivalence relation $\sim_B$ which we
introduced earlier as the least equivalence relation satisfying
\begin{align*}
  1\cdot t &\sim_B t & 0\cdot t &\sim_B 0 & t &\sim_B t+t & t+r &\sim_B r+t & (t+r)+g &\sim_B t+(r+g).
\end{align*}
Using the equivalence relation $\sim_B$, we define
${\sim_C}\subseteq \text{Exp}_\circ^2$ to be the least equivalence
relation such that
\[
  \sum_{i=1}^{n}t_i\cdot r_i^\circ \sim_{C} \sum_{i=1}^n t'_i\cdot (r'_i)^\circ \iff \forall \ 1\leq i\leq n: t_i\sim_B t'_i \text{ and } r_i= r'_i.
\] 
As with the standard Brzozowski derivative, we define our derivatives
$d_1$ and $d_2$ now on equivalence classes and then show that this is
well-defined. Let $\widehat{d_1}$ and $\widehat{d_2}$ be the following maps:
\begin{alignat*}{2}
  \widehat{d_1}: \text{Exp}_\circ/{\sim_C}  &\longrightarrow (\text{Exp}_\circ/{\sim_C})^\Sigma & \widehat{d_2}: \text{Exp}_\circ/{\sim_C}  &\longrightarrow (\text{Exp}/{\sim_B})^\Sigma \\
  [\rho]_{\sim_C} &\longmapsto \lambda a\in\Sigma. [d_1(\rho,a)]_{\sim_C} &\qquad [\rho]_{\sim_C} &\longmapsto \lambda a\in\Sigma. [d_2(\rho,a)]_{\sim_B}
\end{alignat*}

\begin{prop}
  The maps $\widehat{d_1}$ and $\widehat{d_2}$ are well defined.
\end{prop}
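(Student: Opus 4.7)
The plan is to reduce the claim to the generating condition for $\sim_C$: given $\rho = \sum_{i=1}^n t_i\cdot r_i^\circ$ and $\sigma = \sum_{i=1}^n t'_i\cdot (r'_i)^\circ$ with $t_i \sim_B t'_i$ and $r_i = r'_i$ for every $1 \leq i \leq n$, I need to verify that $d_1(\rho,a) \sim_C d_1(\sigma,a)$ and $d_2(\rho,a) \sim_B d_2(\sigma,a)$ for all $a \in \Sigma$. Since $\sim_C$ is by construction the \emph{least} equivalence relation satisfying its defining condition, closure under reflexivity, symmetry, and transitivity then automatically extends the verification to all of $\sim_C$. The heavy lifting will be done by Lemma \ref{lem:BrzozowskiDerivativeEWPCompatibleWithTildeB}, which already tells us that both $d$ and membership in $N$ are compatible with $\sim_B$.

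For $\widehat{d_1}$, I would unfold the adjusted disjunctive-form definition $d_1(t\cdot s^\circ,a) = d(t,a)\cdot s^\circ$ together with $d_1(\rho_1 + \rho_2, a) = d_1(\rho_1,a) + d_1(\rho_2,a)$ to obtain
\[
  d_1(\rho,a) = \sum_{i=1}^n d(t_i,a)\cdot r_i^\circ \quad\text{and}\quad d_1(\sigma,a) = \sum_{i=1}^n d(t'_i,a)\cdot (r'_i)^\circ.
\]
Applying Lemma \ref{lem:BrzozowskiDerivativeEWPCompatibleWithTildeB}(2) to each pair $t_i \sim_B t'_i$ yields $d(t_i,a) \sim_B d(t'_i,a)$, while $r_i = r'_i$ gives the loops on the nose; these are exactly the hypotheses of the defining condition of $\sim_C$, so $d_1(\rho,a) \sim_C d_1(\sigma,a)$.

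For $\widehat{d_2}$, the analogous unfolding gives
\[
  d_2(\rho,a) = \sum_{i=1}^n [t_i\in N]\cdot d(r_i,a) \quad\text{and}\quad d_2(\sigma,a) = \sum_{i=1}^n [t'_i\in N]\cdot d(r'_i,a).
\]
Here Lemma \ref{lem:BrzozowskiDerivativeEWPCompatibleWithTildeB}(1) supplies $[t_i\in N] = [t'_i\in N]$, and $r_i = r'_i$ forces $d(r_i,a) = d(r'_i,a)$, so the two expressions are literally identical, which is a fortiori $\sim_B$-equivalence.

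There is no real obstacle beyond confirming that Lemma \ref{lem:BrzozowskiDerivativeEWPCompatibleWithTildeB} covers exactly the two cases needed; the only mild care is to work consistently with the adjusted, disjunctive-form definition of $d_1$, so that $d_1(\rho,a)$ remains in the fragment on which $\sim_C$ is defined and we avoid having to absorb a stray $[t\in N]\cdot d(s^\circ,a) = 0$ summand.
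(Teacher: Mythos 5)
Your proposal is correct and follows essentially the same route as the paper: unfold the (disjunctive-form) definitions of $d_1$ and $d_2$ on sums $\sum_i t_i\cdot r_i^\circ$, then invoke Lemma \ref{lem:BrzozowskiDerivativeEWPCompatibleWithTildeB} to get $d(t_i,a)\sim_B d(t'_i,a)$ and $[t_i\in N]=[t'_i\in N]$, with $r_i=r'_i$ handling the loop components on the nose. Your explicit remark that checking the generating condition suffices because $\sim_C$ is the least equivalence relation is a small added precision the paper leaves implicit.
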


\begin{proof}
  Let $\sum_{i=1}^n t_i\cdot s_i^\circ \sim_C \sum_{i=1}^n t'_i\cdot (s'_i)^\circ$.
  We have to show that their $d_1$ derivatives are in the same
  $\sim_C$ equivalence class, and that their $d_2$ derivatives are in the
  same $\sim_B$ equivalence class. We start by taking their $d_1$ derivatives
  with respect to some $a\in \Sigma$:
  \[
    d_1\left(\sum_{i=1}^n t_i\cdot s_i^\circ,a\right) = \sum_{i=1}^n d(t_i,a)\cdot s_i^\circ\qquad \text{ and }\qquad
    d_1\left(\sum_{i=1}^n t'_i\cdot (s'_i)^\circ,a\right) = \sum_{i=1}^n d(t'_i,a)\cdot (s'_i)^\circ.
  \]
  By our assumption and the definition of $\sim_C$, it follows that $t_i\sim_B t'_i$.
  Hence  $d(t_i,a)\sim_B d(t'_i,a)$ by Lemma
  \ref{lem:BrzozowskiDerivativeEWPCompatibleWithTildeB}. Furthermore, we also have $s_i=s'_i$ and so
  \[
    d_1\left(\sum_{i=1}^n t_i\cdot s_i^\circ,a\right) \sim_C d_1\left(\sum_{i=1}^n t_i\cdot s_i^\circ,a\right),
  \] that is their $d_1$ derivatives are in the same $\sim_C$ equivalence class.
  Next, we take the $d_2$ derivatives with respect to some $a\in\Sigma$:
  \[
    d_2\left(\sum_{i=1}^n t_i\cdot s_i^\circ,a\right) = \sum_{i=1}[t_i\in N]\cdot d(s_i,a) \qquad \text{ and }\qquad
    d_2\left(\sum_{i=1}^n t'_i\cdot (s'_i)^\circ,a\right) = \sum_{i=1}[t'_i\in N]\cdot d(s'_i,a).
  \]
  Again, by our assumption $t_i\sim_B t'_i$ so that $[t_i\in N]=[t'_i\in N]$ by Lemma
  \ref{lem:BrzozowskiDerivativeEWPCompatibleWithTildeB}. Moreover, $d(s_i,a)= d(s'_i,a)$ follows from
  $s_i=s_i'$, hence
  \[
    d_2\left(\sum_{i=1}^n t_i\cdot s_i^\circ,a\right)=d_2\left(\sum_{i=1}^n t_i'\cdot (s_i')^\circ,a\right)
  \] and so their $d_2$ derivatives are in particular related by $\sim_B$.
\end{proof}

We obtain the lasso automaton 
$\widehat{\mathcal{C}}=(\text{Exp}_\circ/{\sim_C},\text{Exp}/{\sim_B},\widehat{d_1},\widehat{d_2},\widehat{d},\widehat{N})$
by quotienting the lasso automaton $\mathcal{C}$ by $(\sim_C,\sim_B)$.

\begin{thm}\label{thm:BrzozowskiOmegaAutomaton}
  For all $\rho\in \text{Exp}_\circ$, where $\rho$ is a disjunctive form,
  $L_\circ(\widehat{\mathcal{C}},[\rho]_{\sim_C})=\llbracket \rho\rrbracket_\circ$ and
  the set of states reachable from $[\rho]_{\sim_C}$ is finite.
\end{thm}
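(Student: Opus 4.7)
The plan is to prove the two claims separately, reducing each to the analogous unquotiented fact and to the classical Brzozowski finiteness theorem for DFAs.

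For language correctness, I would reduce to Proposition~\ref{prop:BrzozowskiLassoAutomaton}. Since both $\sim_C$ and $\sim_B$ are contained in provable equality---by inspection of the generating rules, which are all instances of \textbf{LA}- and \textbf{RA}-axioms---and since by Proposition~\ref{prop:derivativesPreserveProvableEquality} together with Lemma~\ref{lem:BrzozowskiDerivativeEWPCompatibleWithTildeB} the derivatives $d_1$, $d_2$, $d$ and the empty-word-property predicate all respect provable equality, the canonical projections $\text{Exp}_\circ \to \text{Exp}_\circ/{\sim_C}$ and $\text{Exp} \to \text{Exp}/{\sim_B}$ are morphisms of lasso automata: they preserve transitions and acceptance. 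A routine induction on the length of the spoke word, mirroring the proof of Proposition~\ref{prop:BrzozowskiLassoAutomaton}, then shows that $(u,v) \in L_\circ(\widehat{\mathcal{C}}, [\rho]_{\sim_C})$ iff $(u,v) \in L_\circ(\mathcal{C}, \rho)$, which equals $\llbracket \rho \rrbracket_\circ$ by that proposition.

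For finiteness, I would exploit the fact that the adapted definition $d_1(t \cdot s^\circ, a) = d(t, a) \cdot s^\circ$ preserves disjunctive form. Writing $\rho = \sum_{i=1}^{n} t_i \cdot s_i^\circ$, one gets
\[
d_1(\rho, a) = \sum_{i=1}^{n} d(t_i, a) \cdot s_i^\circ,
\]
again a disjunctive form of length $n$ with unchanged loop parts $s_i$. Iterating, every reachable spoke expression has the shape $\sum_{i=1}^{n} u_i \cdot s_i^\circ$ where each $u_i$ is obtained from $t_i$ by finitely many applications of $d$. The classical Brzozowski theorem guarantees that each $t_i$ has only finitely many $\sim_B$-classes of descendants, and $\sim_C$ equates two such sums whenever their spoke components agree up to $\sim_B$ and their loop components agree on the nose, so the set of reachable spoke classes embeds into a finite product of finite sets.

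For the loop side, $\widehat{d_2}$ applied to a reachable spoke expression $\sum_{i=1}^{n} u_i \cdot s_i^\circ$ produces an expression of the form $\sum_{i=1}^{n} [u_i \in N] \cdot d(s_i, a)$, and further applications of $\widehat{d}$ keep us within the $\sim_B$-closure of sums built from Brzozowski descendants of $s_1, \ldots, s_n$. Letting $S$ denote the finite union of all $\sim_B$-classes of descendants of the $s_i$'s, the $\sim_B$-axioms of commutativity, associativity, idempotence of $+$, and absorption by $0$ reduce any such sum to (a canonical representative of) a subset of $S$, bounding the number of reachable loop classes by $2^{|S|}$. The main obstacle, I expect, is the bookkeeping in this loop-state argument: one must carefully verify that the reachable loop expressions really do remain sums of $\sim_B$-classes in $S$ under iterated $\widehat{d}$ and that $\sim_B$ suffices to collapse them to the claimed finite quotient, whereas the language-correctness half is a transparent transfer from Proposition~\ref{prop:BrzozowskiLassoAutomaton} once well-definedness of the quotiented derivatives is in hand.
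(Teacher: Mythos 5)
Your proposal is correct and follows essentially the same route as the paper: language preservation because the canonical projection is a lasso-automaton morphism (reducing to Proposition~\ref{prop:BrzozowskiLassoAutomaton}), and finiteness by reducing to the classical Brzozowski finiteness theorem on the spoke components, with the loop expressions under $(-)^\circ$ held fixed by $\sim_C$. In fact your write-up is more detailed than the paper's own two-paragraph sketch, notably in tracking the reachable loop states under iterated $\widehat{d}$, which the paper leaves implicit.
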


\begin{proof}
  The claim that $L_\circ(\widehat{\mathcal{C}},[\rho]_{\sim_C})=\llbracket \rho\rrbracket_\circ$
  follows as $\sim_C$ is a congruence compatible with
  the structure of the Brzozowski automaton, hence
  the map sending a state to its equivalence class preserves the
  accepted language.

  The fact that there are only finitely many reachable states from
  $[\rho]_{\sim_C}$ follows from the definition of $\sim_C$. More precisely,
  we defined it in terms of $\sim_B$ for which we already know that taking
  Brzozowski derivatives leads to finitely many reachable states. The fact
  that $\sim_C$ is given by equality on expressions underneath $(-)^\circ$
  means that they do not contribute anything.
\end{proof}

The following corollary is immediate.

\begin{cor}\label{cor:rationalIsRegular}
  Every rational lasso language is regular.
\end{cor}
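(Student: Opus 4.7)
The plan is to reduce the corollary directly to Theorem \ref{thm:BrzozowskiOmegaAutomaton} by way of the normal form result. Let $K$ be an arbitrary rational lasso language. By the definition of rational lasso language and the semantics $\llbracket-\rrbracket_\circ$, there exists a rational lasso expression $\rho\in\text{Exp}_\circ$ with $K=\llbracket \rho\rrbracket_\circ$.

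First I would apply Proposition \ref{prop:normalForm} to obtain a disjunctive form $\rho'=\sum_{i=1}^n t_i\cdot r_i^\circ$ with $\vdash_{\mathbf{LA}}\rho=\rho'$. Soundness of $\mathbf{LA}$ (Proposition \ref{prop:soundness}) then gives $\llbracket\rho\rrbracket_\circ=\llbracket\rho'\rrbracket_\circ$, so $K=\llbracket\rho'\rrbracket_\circ$. Since $\rho'$ is a disjunctive form, Theorem \ref{thm:BrzozowskiOmegaAutomaton} applies, yielding $L_\circ(\widehat{\mathcal{C}},[\rho']_{\sim_C})=\llbracket\rho'\rrbracket_\circ=K$ together with the guarantee that only finitely many states are reachable from $[\rho']_{\sim_C}$.

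To finish, I would restrict $\widehat{\mathcal{C}}$ to the subautomaton generated by the reachable states from $[\rho']_{\sim_C}$ (both on the spoke side via $\widehat{d_1}$ and on the loop side via $\widehat{d_2}$ and $\widehat{d}$), obtaining a finite lasso automaton whose accepted lasso language is $K$. Hence $K$ is regular.

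There is no genuine obstacle here: the whole work has been done in Proposition \ref{prop:normalForm}, Proposition \ref{prop:soundness} and Theorem \ref{thm:BrzozowskiOmegaAutomaton}. The only thing to be slightly careful about is that Theorem \ref{thm:BrzozowskiOmegaAutomaton} is stated for disjunctive forms only, which is precisely why the preliminary passage through Proposition \ref{prop:normalForm} (and then soundness to transfer the semantics) is needed.
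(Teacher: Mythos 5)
Your proposal is correct and matches the paper's (implicit) argument: the corollary is stated as immediate from Theorem \ref{thm:BrzozowskiOmegaAutomaton}, and your route through Proposition \ref{prop:normalForm} and soundness to first obtain a disjunctive form is exactly the intended justification. Nothing is missing.
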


\section{Regular Lasso Languages are Rational}\label{sec:5}

The previous section gives us one direction of Kleene's theorem for
lasso languages. We obtain the other direction by slightly modifying a result
by Calbrix \etal, which is interesting in its own right
as it gives insights into the connection between $\Omega$-automata
and nondeterministic B\"{u}chi automata \cite{calbrix:1994:ultimatelyPeriodicWords}.

Calbrix \etal\ show how to build the rational $\omega$-language accepted by an
$\Omega$-automaton $\mathcal{A}$ from some rational languages defined
on the basis of $\mathcal{A}$ \cite{calbrix:1994:ultimatelyPeriodicWords}.
This shows that every regular $\omega$-language is rational.

We first recall the results (Lemma
\ref{lem:UVOmegaWords} and Theorem
\ref{thm:omegaAutToOmegaExpression}) by
\cite{calbrix:1994:ultimatelyPeriodicWords} and then continue with
a similar result for lasso languages which shows that every regular
lasso language is rational. We begin with a technical lemma.

\begin{lem}[\protect{\cite[Lemma~5]{calbrix:1994:ultimatelyPeriodicWords}}]\label{lem:UVOmegaWords}
  Let $U,V$ be regular languages such that $\epsilon\not\in V$,
  $UV^\ast = U$ and $V^+=V$. Then
  for all $uv^\omega\in UV^\omega$, there exist $u'\in U$ and $v'\in V$ such that
  $uv^\omega = u'v'^\omega$.
\end{lem}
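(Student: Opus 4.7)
The plan is to start from the given $uv^\omega \in UV^\omega$, which means there exist $u_0 \in U$ and an infinite sequence $v_1, v_2, \ldots \in V$ with $uv^\omega = u_0 v_1 v_2 \cdots$. From this infinite decomposition, I want to extract a finite $u'$ and a single $v'$ such that $u'v'^\omega$ equals $uv^\omega$.

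The core idea is a pigeonhole argument on positions. Let $M_n := |u_0 v_1 \cdots v_n|$. For $n$ large enough that $M_n \geq |u|$, the residues $r_n := (M_n - |u|) \bmod |v|$ lie in the finite set $\{0, \ldots, |v| - 1\}$, so there exist $m < n$ with $r_m = r_n$ (and both $M_m, M_n \geq |u|$). I then set $u' := u_0 v_1 \cdots v_m$ and $v' := v_{m+1} \cdots v_n$. Membership is immediate from the two closure hypotheses: $u' \in U V^m \subseteq UV^\ast = U$, and $v' \in V^{n-m} \subseteq V^+ = V$ (using $n-m \geq 1$).

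The main obstacle, and the step requiring the most care, is verifying that $u'v'^\omega = uv^\omega$. Here is the key alignment argument. Factor $v = \alpha\beta$ where $|\alpha| = r_m$. Because $u_0 v_1 \cdots v_m$ is the prefix of $uv^\omega$ of length $M_m = |u| + k_m |v| + r_m$ for some integer $k_m \geq 0$, we get $u' = uv^{k_m}\alpha$. Similarly $u_0 v_1 \cdots v_n = uv^{k_n}\alpha$ since $r_n = r_m$; hence $v' = v_{m+1}\cdots v_n$ is exactly the factor of $uv^\omega$ between positions $M_m$ and $M_n$, namely $(\beta\alpha)^{k_n - k_m}$. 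Therefore
\[
u'v'^\omega \;=\; uv^{k_m}\alpha\,(\beta\alpha)^\omega \;=\; uv^{k_m}(\alpha\beta)^\omega \;=\; uv^{k_m}v^\omega \;=\; uv^\omega,
\]
as required. The only subtlety is that one must ensure $n - m \geq 1$ so that $v' \in V^+$, which is guaranteed because there are infinitely many indices $n$ to apply pigeonhole to (alternatively, one can use any three indices with the same residue).
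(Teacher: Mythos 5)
Your proof is correct and follows the same route as the paper: a pigeonhole argument on the cut positions of the decomposition $u_0v_1v_2\cdots$ (taken modulo $|v|$) to find two aligned cuts, followed by the closure hypotheses $UV^\ast=U$ and $V^+=V$ to place $u'$ and $v'$ in the right languages. The paper only sketches this as ``a simple application of the pigeonhole principle''; your write-up supplies the residue bookkeeping and the $\alpha\beta$-alignment verification that the paper leaves implicit, and both are handled correctly (including the use of $\epsilon\notin V$ to guarantee $k_n>k_m$).
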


This fact is a simple application of the pigeonhole principle. Given an ultimately
periodic word $uv^\omega\in UV^\omega$, we can find  $u'\in U$ and
$v_1,\ldots,v_n,v_{n+1},\ldots,v_{n+m}\in V$ such that
$uv^\omega = u'v_1\ldots v_n(v_{n+1}\ldots v_{n+m})^\omega$ using the
pigeonhole principle. As $UV^\ast = U$ and $V^+=V$ it follows that
$u'v_1\ldots v_n\in U$ and $v_{n+1}\ldots v_{n+m}\in V$. Hence the lemma states
that for suitable $U$ and $V$, if $uv^\omega$ is an arbitrary ultimately
periodic word in $UV^\omega$, we may instead use an identical
ultimately periodic word $u'v'^\omega\in UV^\omega$ which on top also satisfies the
property that $u'\in U$ and $v'\in V$. Calbrix \etal\ use this lemma
to show Theorem \ref{thm:omegaAutToOmegaExpression} and we use it later
in Section \ref{sec:6} to prove Proposition \ref{prop:weakCoverMapYieldsWeakCovering}.

\begin{thm}[\cite{calbrix:1994:ultimatelyPeriodicWords}]\label{thm:omegaAutToOmegaExpression}
  Let $\mathcal{A}=(X,Y,\overline{x},\delta_1,\delta_2,\delta_3,F)$ be a finite $\Omega$-automaton.
  For $x\in X$ and $y\in F$ we abbreviate
  \begin{align*}
    S_x &= \{u\in\Sigma^\ast\mid \delta_1(\overline{x},u)=x\}, \\
    R_{x,y} &= \{u\in\Sigma^+\mid \delta_1(x,u)=x \text{ and } (\delta_2:\delta_3)(x,u)=\delta_3(y,u)=y\}. 
  \end{align*}
  Then
  \[
    L_\omega(\mathcal{A}) = \bigcup_{x\in X}\bigcup_{y\in F} S_x\cdot R_{x,y}^\omega.
  \] 
\end{thm}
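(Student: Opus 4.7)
The plan is to reduce the theorem to a comparison of ultimately periodic fragments. Both sides are regular $\omega$-languages --- the right-hand side being a rational $\omega$-expression built from the regular languages $S_x$ (accepted by the spoke DFA with accepting state $\{x\}$) and $R_{x,y}$ (a finite intersection of regular conditions on the lasso transitions). By the characterisation of regular $\omega$-languages through their ultimately periodic fragments (Section~\ref{sec:2}, after \cite{calbrix:1994:ultimatelyPeriodicWords}), it then suffices to prove that $\{uv^\omega : (u,v) \in L_\circ(\mathcal{A})\}$ equals the ultimately periodic fragment of $\bigcup_{x,y} S_x \cdot R_{x,y}^\omega$.

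To compute the latter, I would first verify that every pair $(S_x, R_{x,y})$ satisfies the hypotheses of Lemma~\ref{lem:UVOmegaWords}: $\epsilon \notin R_{x,y}$ holds by definition, $S_x \cdot R_{x,y}^\ast = S_x$ follows because $R_{x,y}$-loops fix $x$ under $\delta_1$, and $R_{x,y}^+ = R_{x,y}$ because the three defining conditions $\delta_1(x,-) = x$, $(\delta_2\col\delta_3)(x,-) = y$, $\delta_3(y,-) = y$ all compose under concatenation. The lemma then collapses the fragment of each summand to $\{u'v'^\omega : u' \in S_x,\ v' \in R_{x,y}\}$. The inclusion $\supseteq$ is then immediate: for $u \in S_x$ and $v \in R_{x,y}$, the definitions yield $(\delta_2\col\delta_3)(\delta_1(\overline{x}, u), v) = (\delta_2\col\delta_3)(x, v) = y \in F$, so $(u,v) \in L_\circ(\mathcal{A})$ and hence $uv^\omega \in L_\omega(\mathcal{A})$.

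The reverse direction carries the real content. Given $(u,v) \in L_\circ(\mathcal{A})$, I would invoke saturation together with a double pigeonhole argument to locate a $\gamma$-equivalent representative of $uv^\omega$ of the required form. First, finiteness of $X$ forces the sequence $\delta_1(\overline{x}, uv^k)$ eventually to become periodic, yielding $k_0$ and $p \geq 1$ for which $x' := \delta_1(\overline{x}, uv^{k_0})$ is a fixed point of $\delta_1(-, v^p)$. Second, finiteness of $Y$ applied to the sequence $y_m := (\delta_2\col\delta_3)(x', v^{mp})$, which satisfies the recursion $y_{m+1} = \delta_3(y_m, v^p)$, yields $n$ and $q \geq 1$ with $y_{n+q} = y_n$; advancing $n$ within the periodic regime one may furthermore assume $q \mid n$. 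Setting $u' := uv^{k_0}$ and $v' := v^{np}$ then simultaneously secures $\delta_1(x', v') = x'$, $(\delta_2\col\delta_3)(x', v') = y_n$, and $\delta_3(y_n, v') = y_{2n} = y_n$. Finally, the Lasso Representation Lemma identifies $(u', v')$ as a $\gamma$-equivalent representative of $uv^\omega$, and saturation then promotes acceptance of $(u, v)$ to acceptance of $(u', v')$, forcing $y_n \in F$. Hence $v' \in R_{x', y_n}$ and $u' \in S_{x'}$, so $uv^\omega \in S_{x'} \cdot R_{x', y_n}^\omega$.

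The main obstacle is the bookkeeping in the double pigeonhole: the period $p$ extracted from the spoke dynamics has to be reused as the step size for the loop-state sequence, and the index $n$ returned by the second pigeonhole must be arranged to be a multiple of its own period $q$ so that $v^{np}$ becomes a simultaneous fixed point under both $\delta_1(x', -)$ and $\delta_3(y_n, -)$ while at the same time delivering $y_n$ as the value of $(\delta_2\col\delta_3)(x', -)$.
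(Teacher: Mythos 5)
Your proof is correct: the paper does not reprove this result but imports it from Calbrix~et~al., remarking only that Lemma~\ref{lem:UVOmegaWords} is the key ingredient and sketching the intuition that words of $R_{x,y}$ simultaneously loop at $x$ under $\delta_1$, map $x$ to $y$ under $(\delta_2\col\delta_3)$, and loop at $y$ under $\delta_3$. Your reconstruction --- Lemma~\ref{lem:UVOmegaWords} together with closure of $S_x$ and $R_{x,y}$ under the relevant concatenations for one inclusion, and the double pigeonhole on $X$ then $Y$ combined with the Lasso Representation Lemma and saturation for the other --- is exactly the argument the paper points to, with the bookkeeping (reusing the spoke period $p$ as the step size and choosing $n$ a positive multiple of $q$) handled correctly.
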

The rational language $S_x$ consists of all words that lead from the
initial state $\overline{x}$ to $x$. As $R_{x,y}$ contains those words
which at the same time bring us from $x$ back to $x$ via $\delta_1$,
from $x$ to $y$ via $(\delta_2\col \delta_3)$ and from $y$ back to
itself via $\delta_3$, concatenating infinitely many such words traces
paths which intersect a final state infinitely often. Hence
a saturated lasso automaton can be seen as a nondeterministic B\"{u}chi
automaton \cite[Remark~21]{ciancia:2019:omegaAutomata}.

\begin{cor}\label{cor:omegaRegExpressionFromOmegaAutomata}
  Every regular $\omega$-language is rational: for every finite $\Omega$-automaton $\mathcal{A}$,
  one can construct a rational $\omega$-expression $T$ such
  that $L_\omega(\mathcal{A})=\llbracket T\rrbracket_\omega$.
\end{cor}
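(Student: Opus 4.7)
The plan is to leverage Theorem \ref{thm:omegaAutToOmegaExpression} directly: once we know that each $S_x$ and $R_{x,y}$ is a regular (hence rational) language and that $R_{x,y}$ contains no empty word, we can assemble the displayed finite union into a rational $\omega$-expression of the prescribed shape.

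First I would observe that $S_x$ is exactly the language accepted by the DFA $(X, \overline{x}, \delta_1, \{x\})$ obtained from the spoke fragment of $\mathcal{A}$, so by the classical Kleene theorem there is a rational expression $s_x \in \text{Exp}$ with $\llbracket s_x \rrbracket = S_x$. For $R_{x,y}$ I would write it as the intersection of three regular languages:
\[
R_{x,y} = \{u \in \Sigma^+ \mid \delta_1(x,u) = x\} \cap \{u \in \Sigma^+ \mid (\delta_2\col\delta_3)(x,u) = y\} \cap \{u \in \Sigma^+ \mid \delta_3(y,u) = y\}.
\]
The first and third sets are accepted by DFAs built from $\delta_1$ and $\delta_3$ respectively with appropriate initial and final states, while the middle set is accepted by the DFA on the state space $X \uplus Y$ whose transition map is $(\delta_2\col\delta_3)$ extended along $\delta_3$ on $Y$, with initial state $x$ and accepting state $y$; intersecting with $\Sigma^+$ to exclude $\epsilon$ is also a regular operation. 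Since regular languages are closed under intersection, $R_{x,y}$ is regular, and by Kleene's theorem there is a rational expression $r_{x,y} \in \text{Exp}$ with $\llbracket r_{x,y} \rrbracket = R_{x,y}$.

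Crucially, $\epsilon \notin R_{x,y}$ because $R_{x,y} \subseteq \Sigma^+$, so $r_{x,y} \notin N$, which is precisely the side condition required by the grammar of $\text{Exp}_\omega$ to form $r_{x,y}^\omega$. I then define
\[
T \;=\; \sum_{x \in X}\sum_{y \in F}\, s_x \cdot_\omega\, r_{x,y}^\omega,
\]
which is a well-formed rational $\omega$-expression (the outer sum is finite because $X$ and $F$ are finite, and each summand is legal by the preceding remark). Applying the semantics map $\llbracket - \rrbracket_\omega$ and using $\llbracket s_x \rrbracket = S_x$ and $\llbracket r_{x,y} \rrbracket = R_{x,y}$ recovers $\bigcup_{x \in X}\bigcup_{y \in F} S_x \cdot R_{x,y}^\omega$, which equals $L_\omega(\mathcal{A})$ by Theorem \ref{thm:omegaAutToOmegaExpression}.

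There is no real obstacle here: all the heavy lifting has been done by Theorem \ref{thm:omegaAutToOmegaExpression} and classical Kleene for finite words; the only point demanding a moment's care is verifying that $R_{x,y}$, defined by a conjunction of three transition-based conditions, is regular, and this follows from standard closure properties of regular languages under intersection once each conjunct is presented as the language of a DFA extracted from the relevant fragment of $\mathcal{A}$.
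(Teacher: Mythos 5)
Your proposal is correct and follows exactly the paper's route: the paper's own proof simply notes that the $S_x$ and $R_{x,y}$ from Theorem \ref{thm:omegaAutToOmegaExpression} are regular, hence rational, and assembles the corresponding $\omega$-expression. Your version just makes explicit the details the paper leaves implicit (the DFA presentations, closure under intersection, and the check that $\epsilon\notin R_{x,y}$ so that $r_{x,y}^\omega$ is well-formed), all of which are sound since $N$ is the semantic condition $\epsilon\in\llbracket r\rrbracket$.
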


\begin{proof}
  The languages $S_x$ and $R_{x,y}$ constructed in Theorem
  \ref{thm:omegaAutToOmegaExpression} are regular and hence also
  rational. Therefore, we can find matching
  rational expressions which allow us to define a suitable rational
  $\omega$-expression.
\end{proof}

In order to obtain a similar result for lasso languages, we slightly modify the
definition of $R_{x,y}$ from Theorem \ref{thm:omegaAutToOmegaExpression}.
This shows that every regular lasso language is rational.

\begin{prop}\label{prop:regLassoFromLassoAutomaton}
  Let $\mathcal{A}=(X,Y,\overline{x},\delta_1,\delta_2,\delta_3,F)$ be a finite
  lasso automaton. For $x\in X$ and $y\in F$, let $S_x$ be defined as in
  Theorem \ref{thm:omegaAutToOmegaExpression} and let
  \[
    R_{x,y}=\{u\in\Sigma^+\mid (\delta_2\col\delta_3)(x,u)=y\}.
  \] Then
  \[
    L_\circ(\mathcal{A}) = \bigcup_{x\in X}\bigcup_{y\in F} S_x\cdot R_{x,y}^\circ.
  \] 
\end{prop}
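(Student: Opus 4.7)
The plan is to establish the two inclusions separately, both of which follow from unpacking definitions once one observes that, by determinism of $\mathcal{A}$, the family $\{S_x\}_{x\in X}$ partitions $\Sigma^\ast$ according to the state reached by $\delta_1$ from $\overline{x}$, and, for each fixed $x$, the family $\{R_{x,y}\}_{y\in Y}$ partitions $\Sigma^+$ according to the state reached by $(\delta_2\col\delta_3)$ from $x$. The first observation I would record is the following simplification: since $R_{x,y}^\circ=\{(\epsilon,v)\mid v\in R_{x,y}\}$ and $U\cdot K$ concatenates the spoke of $K$ after a word in $U$, one has
\[
S_x\cdot R_{x,y}^\circ \;=\; \{(u,v)\in\Sigma^{\ast+}\mid u\in S_x,\ v\in R_{x,y}\}.
\]

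For the inclusion $(\supseteq)$, pick $(u,v)$ in the right-hand side, so there exist $x\in X$ and $y\in F$ with $\delta_1(\overline{x},u)=x$ and $(\delta_2\col\delta_3)(x,v)=y$. Composing these gives $(\delta_2\col\delta_3)(\delta_1(\overline{x},u),v)=y\in F$, which by definition of acceptance means $(u,v)\in L_\circ(\mathcal{A})$. For the converse inclusion $(\subseteq)$, take any $(u,v)\in L_\circ(\mathcal{A})$ and simply define $x:=\delta_1(\overline{x},u)\in X$ and $y:=(\delta_2\col\delta_3)(x,v)\in Y$. Determinism makes both choices well-defined, acceptance of $(u,v)$ forces $y\in F$, and by construction $u\in S_x$ and $v\in R_{x,y}$, so $(u,v)\in S_x\cdot R_{x,y}^\circ$ sits in the appropriate summand of the right-hand union.

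There is no real obstacle here: the result is essentially a direct restatement of the acceptance condition using the modified rational lasso operations, and in particular does not require the saturation condition needed in Theorem \ref{thm:omegaAutToOmegaExpression} (which is why the definition of $R_{x,y}$ can be weakened, dropping the loop-return constraints on $\delta_1$ and $\delta_3$). The only point demanding any attention is the first observation that $S_x\cdot R_{x,y}^\circ$ genuinely collapses to the product $\{(u,v)\mid u\in S_x,\ v\in R_{x,y}\}$, which is the sole place where the specific shape of the lasso operations $(-)^\circ$ and $\cdot$ enters the argument.
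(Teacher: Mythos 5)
Your proof is correct and follows essentially the same route as the paper: both directions are obtained by unpacking the acceptance condition with $x=\delta_1(\overline{x},u)$ and $y=(\delta_2\col\delta_3)(x,v)$, using determinism. Your explicit remark that $S_x\cdot R_{x,y}^\circ$ collapses to $\{(u,v)\mid u\in S_x,\ v\in R_{x,y}\}$ is a point the paper leaves implicit, but it is the same argument.
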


\begin{proof}
  If $(u,v)\in L_\circ(\mathcal{A})$, then $(\delta_2\col\delta_3)(\delta_1(\overline{x},u),v)\in F$.
  If we choose $x=\delta_1(\overline{x},u)$ and $y=(\delta_2\col\delta_3)(x,v)\in F$,
  then $(u,v)\in S_x\cdot R_{x,y}^\circ$.

  For the other inclusion, let $(u,v)\in S_x\cdot R_{x,y}^\circ$ for some
  $x\in X$ and $y\in F$. By definition, we have $x=\delta_1(\overline{x},u)$
  and $y=(\delta_2\col\delta_3)(x,v)$. Hence
  $(\delta_2\col\delta_3)(\delta_1(\overline{x},u),v)\in F$ and
  $(u,v)\in L_\circ(\mathcal{A})$.
\end{proof}

\begin{eg}
  In this example we extract a rational lasso expression from the lasso
  automaton in Example \ref{eg:BrzozowskiConstruction} reproduced below:
  \begin{center}
    \begin{tikzpicture}[modal]
      \node[state,initial] at (0,2) (a) [label=above:{}] {$0$};
      \node[state] at (0,0) (b) [label=above:{}] {$1$};
      \node[state] at (3,1) (c) [label=above:{}] {$2$};
      \node[state] at (6,2) (d) [label=above:{}] {$4$};
      \node[state,accepting] at (6,0) (e) [label=above:{}] {$3$};
      \path[->] (a) edge[bend left] node[black]{$b$} (b);
      \path[->] (a) edge[bend left] node[black]{$a$} (c);
      \path[->,dotted] (a) edge[bend left] node[black]{$a,b$} (d);
      \path[->] (b) edge[bend left] node[black]{$a$} (a);
      \path[->] (b) edge[bend right] node[black]{$b$} (c);
      \path[->,dotted] (b) edge[bend right] node[black]{$b$} (d);
      \path[->,dotted] (b) edge[bend right] node[black]{$a$} (e);
      \path[->] (c) edge[reflexive left] node[left,black]{$a,b$} (c);
      \path[->,dotted] (c) edge node[black]{$a,b$} (d);
      \path[->,dashed] (d) edge[reflexive above] node[below,black]{$a,b$} (d);
      \path[->,dashed] (e) edge node[black]{$a$} (d);
      \path[->,dashed] (e) edge[reflexive below] node[above,black]{$b$} (e);
    \end{tikzpicture}
  \end{center}
  Instead of using rational languages, we immediately use corresponding
  rational expressions for the sake of simplicity.
  For each spoke state in  $\{0,1,2\} $ we compute $S_x$, giving
  \[
  S_0 = (ba)^\ast, \qquad S_1 = b(ab)^\ast, \qquad S_2 = (ba)^\ast a(a+b)^\ast + b(ab)^\ast b(a+b)^\ast.
  \] 
  Next, we compute $R_{x,y}$, which results in
  \[
  R_{0,3}=0, \qquad R_{1,3}=ab^\ast, \qquad R_{2,3}=0.
  \] 
  So the language accepted by the lasso automaton (which we call $\mathcal{A}$) is:
  \[
    L_\circ(\mathcal{A})=(ba)^\ast 0^\circ + b(ab)^\ast (ab^\ast)^\circ + \left[(ba)^\ast a(a+b)^\ast + b(ab)^\ast b(a+b)^\ast\right] 0^\circ.
  \] Note that this is provably equivalent to $b(ab)^\ast(ab^\ast)^\circ$ 
  which is exactly what we started out with.
\end{eg}

\begin{cor}\label{cor:ratLassoFromLassoAutomaton}
  Every regular lasso language is rational:
  for each finite lasso automaton $\mathcal{A}$, one can
  construct a rational lasso expression $\rho$ such that
  $L_\circ(\mathcal{A})=\llbracket\rho\rrbracket_\circ$.
\end{cor}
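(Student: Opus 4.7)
The corollary follows almost immediately from Proposition \ref{prop:regLassoFromLassoAutomaton}. That proposition already decomposes $L_\circ(\mathcal{A})$ as the finite union $\bigcup_{x \in X}\bigcup_{y \in F} S_x \cdot R_{x,y}^\circ$, where $S_x$ and $R_{x,y}$ are regular word languages. The plan is simply to promote each of these to a rational expression and reassemble them into a single rational lasso expression.

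First, I would point out that each $S_x$ is accepted by the DFA obtained from $\mathcal{A}$ by retaining only the spoke part with $\{x\}$ as the set of accepting states, and each $R_{x,y}$ is accepted by the DFA with state space $\{x\} \uplus Y$, initial state $x$, transition $\delta_2$ out of $x$ and $\delta_3$ on $Y$, and $\{y\}$ as the sole accepting state. Both are finite DFAs, so by the classical Kleene theorem for DFAs there exist rational expressions $s_x, r_{x,y} \in \text{Exp}$ with $\llbracket s_x \rrbracket = S_x$ and $\llbracket r_{x,y} \rrbracket = R_{x,y}$.

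Next, I would verify that $r_{x,y}^\circ$ is legally formed: since $R_{x,y} \subseteq \Sigma^+$ by definition, $\epsilon \notin \llbracket r_{x,y} \rrbracket$, so $r_{x,y} \notin N$, as required by the grammar of $\text{Exp}_\circ$. I then set
\[
\rho \;=\; \sum_{x \in X}\sum_{y \in F} s_x \cdot_\circ r_{x,y}^\circ,
\]
interpreting the empty sum as $0_\circ$ (which handles the case $F = \emptyset$, where $L_\circ(\mathcal{A}) = \emptyset$). Unfolding $\llbracket -\rrbracket_\circ$ gives
\[
\llbracket \rho \rrbracket_\circ \;=\; \bigcup_{x \in X}\bigcup_{y \in F} \llbracket s_x \rrbracket \cdot \llbracket r_{x,y} \rrbracket^\circ \;=\; \bigcup_{x \in X}\bigcup_{y \in F} S_x \cdot R_{x,y}^\circ \;=\; L_\circ(\mathcal{A}),
\]
with the last equality supplied by Proposition \ref{prop:regLassoFromLassoAutomaton}.

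There is no genuine obstacle here: the corollary is essentially bookkeeping on top of Proposition \ref{prop:regLassoFromLassoAutomaton}, the classical Kleene theorem for DFAs, and the observation that $R_{x,y}$ contains only non-empty words, so the side condition $r_{x,y} \notin N$ attached to the $(-)^\circ$ constructor is automatic.
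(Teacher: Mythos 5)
Your proof is correct and follows essentially the same route as the paper, which simply notes that the regular languages $S_x$ and $R_{x,y}$ from Proposition \ref{prop:regLassoFromLassoAutomaton} are rational and can be replaced by matching rational expressions to assemble the desired rational lasso expression. Your additional check that $R_{x,y}\subseteq\Sigma^+$ guarantees $r_{x,y}\notin N$ (so that $r_{x,y}^\circ$ is well-formed) is a detail the paper leaves implicit, but it is the same argument.
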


\begin{proof}
  The proof is analogous to that of Corollary
  \ref{cor:omegaRegExpressionFromOmegaAutomata}.
\end{proof}

\begin{thm}\label{thm:KleeneTheoremLassoLanguages}
  A lasso language is regular if and only if it is rational.
\end{thm}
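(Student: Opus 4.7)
The plan is to observe that both directions of this biconditional have essentially already been proved in the preceding sections, so the work reduces to assembling them. Concretely, I would split the statement into its two implications and cite the corresponding corollaries.

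For the forward direction (rational implies regular), I would invoke Corollary \ref{cor:rationalIsRegular}. Unpacking that citation: given a rational lasso language $K$, pick any rational lasso expression $\rho \in \text{Exp}_\circ$ with $\llbracket \rho \rrbracket_\circ = K$, apply Proposition \ref{prop:normalForm} to rewrite $\rho$ in disjunctive form (so that the adapted $d_1$ acts only on the spoke component and the quotiented state space stays finite), and then appeal to Theorem \ref{thm:BrzozowskiOmegaAutomaton}: the Brzozowski lasso automaton $\widehat{\mathcal{C}}$ with initial state $[\rho]_{\sim_C}$ is finite and accepts exactly $\llbracket \rho \rrbracket_\circ = K$.

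For the converse (regular implies rational), I would invoke Corollary \ref{cor:ratLassoFromLassoAutomaton}. Unpacking: given a finite lasso automaton $\mathcal{A} = (X,Y,\overline{x},\delta_1,\delta_2,\delta_3,F)$, Proposition \ref{prop:regLassoFromLassoAutomaton} gives the decomposition $L_\circ(\mathcal{A}) = \bigcup_{x \in X}\bigcup_{y \in F} S_x \cdot R_{x,y}^\circ$, where each $S_x$ and $R_{x,y}$ is a regular (hence rational) language of finite words. Choosing rational expressions $t_{x}$ and $r_{x,y}$ witnessing these languages (with $r_{x,y} \notin N$, which holds since $R_{x,y} \subseteq \Sigma^+$), one forms $\rho = \sum_{x,y} t_x \cdot r_{x,y}^\circ \in \text{Exp}_\circ$ and checks $\llbracket \rho \rrbracket_\circ = L_\circ(\mathcal{A})$ from the definition of the lasso semantics.

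There is no real obstacle; the theorem is simply the conjunction of the two corollaries. If I wanted to present it as a self-contained proof, the only step that requires any care is verifying that in the backward direction the loop expressions $r_{x,y}$ indeed lie outside $N$, so that the syntactic formation rule for $(-)^\circ$ is respected, which follows from $R_{x,y} \subseteq \Sigma^+$ as noted above.
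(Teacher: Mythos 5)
Your proposal is correct and matches the paper's proof exactly: the theorem is obtained by combining Corollary \ref{cor:rationalIsRegular} and Corollary \ref{cor:ratLassoFromLassoAutomaton}. The additional unpacking you provide (disjunctive forms for the Brzozowski direction, the check that $R_{x,y}\subseteq\Sigma^+$ so the expressions lie outside $N$) is accurate and consistent with how those corollaries are established in the paper.
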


\begin{proof}
  This theorem follows immediately from Corollary
  \ref{cor:rationalIsRegular} and Corollary
  \ref{cor:ratLassoFromLassoAutomaton}.
\end{proof}

\section{Rational $\omega$- and Lasso Expressions}\label{sec:6}

Additionally to the concepts already introduced in the preliminaries, we make
use of the following definitions and terminology throughout this section. As $\sim_\gamma$ is an
equivalence relation, we obtain a canonical map
$\phi:\Sigma^{\ast+}\to\Sigma^{\ast+}/{\sim_\gamma}$ which sends a lasso
$(u,v)$ to its $\gamma$-equivalence class $[(u,v)]_\gamma$. A subset
$L\subseteq \Sigma^{\ast+}$ is called \emph{$\sim_\gamma$-saturated} (or simply
\emph{saturated}) if $L=\phi^{-1}(K)$ for some $K\subseteq \Sigma^{\ast+}/{\sim_\gamma}$.
From the fact that $(u,v)\sim_\gamma (u',v')\iff uv^\omega = u'v'^\omega$ it
follows that $\Sigma^{\ast+}/{\sim_\gamma}$ is isomorphic to $\Sigma^{\text{up}}$
through the map $[(u,v)]_\gamma \mapsto uv^\omega$. We therefore
make no distinction between $\Sigma^{\ast+}/{\sim_\gamma}$ and
$\Sigma^{\text{up}}$, nor between $[(u,v)]_\gamma$ and $uv^\omega$.

In the first part of the section, we present some basic results on
saturation and show that a lasso language is saturated if and only if it is closed
under $\gamma$-expansion and $\gamma$-reduction.
Of particular interest are those rational lasso languages
whose semantics is saturated, as the image of such a language under
$\phi$ corresponds precisely to the ultimately periodic fragment of a rational
$\omega$-language. We lift this relationship between languages to that of
expressions by saying that a rational lasso expression $\tau$ represents a
rational $\omega$-expression $T$, if $\llbracket\tau\rrbracket_\circ$ is
saturated and its image under $\phi$ is the ultimately periodic fragment
of $\llbracket T\rrbracket_\omega$. The question we then try to answer is,
given $T$, can we syntactically build $\tau$? We show that this is indeed
possible under the assumption that we have two additional operations on rational
expressions. Our approach consists of first introducing a weaker notion, that of
weak representability, which drops the saturation requirement. From $T$ we
then construct in a first step a weak representation $\tau$ whose semantics
is not saturated but is only closed under $\gamma$-expansion. This first
construction makes use of the sequential splitting relation, which to each
rational expression $t$ associates a finite set of pairs of rational expressions
called splits, intuitively corresponding to all the ways words in the semantics
of $t$ can be split into two.
We subsequently modify our weak representation $\tau$ to obtain a rational lasso expression
which is also closed under $\gamma$-reduction and hence saturated. This part
of the construction makes use of two additional operations on
rational expressions, intersection and root, which both preserve rationality.

The use of the sequential splitting relation and the rational operations
intersection and root has to do with the fact that we have to capture the
rewrite rules $\gamma_1$ and $\gamma_2$ on the level of expressions, that is
syntactically.

The next lemma introduces several equivalent definitions of saturation using
the canonical map $\phi$ which we described above. The most interesting for
our purposes is the last one, which describes saturation in terms of the
rewrite rules $\to_{\gamma_1}$ and $\to_{\gamma_2}$.

\begin{lem}\label{lem:saturationLemma}
  Let $L\subseteq \Sigma^{\ast +}$. The following are equivalent definitions
  of $\sim_\gamma$-saturation:
\begin{enumerate}
  \item $L=\phi^{-1}(K)$ for some $K\subseteq \Sigma^{\ast +}/{\sim_\gamma}$.
  \item $L$ is the union of $\sim_\gamma$-equivalence classes.
  \item $L=\phi^{-1}(\phi(L))$.
  \item Let
    \begin{align*}
      L^\uparrow &=\{(u',v')\in\Sigma^{\ast+}\mid \exists i\in \{1,2\}, \exists (u,v)\in L: (u',v')\to_{\gamma_i}(u,v) \},\\
      L^\downarrow &= \{(u',v')\in\Sigma^{\ast+}\mid\exists i\in \{1,2\}, \exists (u,v)\in L: (u,v)\to_{\gamma_i}(u',v')\}.
    \end{align*}
    Then $L$ is saturated if and only if $L=L^\uparrow=L^\downarrow$.
\end{enumerate}
\end{lem}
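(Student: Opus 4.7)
The plan is to establish the cycle $(1) \Rightarrow (2) \Rightarrow (3) \Rightarrow (4) \Rightarrow (1)$. The first three equivalences are formal consequences of having a canonical surjection $\phi$ induced by an equivalence relation and can be dispatched quickly; the real work lies in item $(4)$.

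For $(1) \Leftrightarrow (2)$: if $L = \phi^{-1}(K)$, then for each $\sim_\gamma$-class $c$ either $c \subseteq L$ or $c \cap L = \emptyset$ according to whether $c \in K$, so $L$ is a union of classes; conversely, if $L$ is a union of equivalence classes, set $K = \phi(L)$. For $(2) \Leftrightarrow (3)$: the set $\phi^{-1}(\phi(L))$ is always the smallest union of equivalence classes containing $L$ (it adds precisely the missing elements of any equivalence class that $L$ only partially meets), so $L = \phi^{-1}(\phi(L))$ holds iff $L$ is already such a union.

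The substantial step is $(3) \Leftrightarrow (4)$. For the forward direction, assume $L$ is saturated. If $(u',v') \in L^\uparrow$, say $(u',v') \to_{\gamma_i} (u,v)$ with $(u,v) \in L$, then $(u',v') \sim_\gamma (u,v)$ because $\to_{\gamma_i} \subseteq \sim_\gamma$, and saturation gives $(u',v') \in L$; hence $L^\uparrow \subseteq L$. Symmetrically $L^\downarrow \subseteq L$. Combined with the inclusions $L \subseteq L^\uparrow$ and $L \subseteq L^\downarrow$ (obtained by using the reflexive nature of $\sim_\gamma$ applied inside the saturated set), we conclude $L = L^\uparrow = L^\downarrow$.

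For the converse, suppose $L = L^\uparrow = L^\downarrow$; I show $L$ is a union of $\sim_\gamma$-classes. Take $(u,v) \in L$ and $(u',v') \sim_\gamma (u,v)$. Since $\sim_\gamma$ is the least equivalence relation containing $\to_\gamma$, there is a finite zigzag chain $(u,v) = (u_0,v_0), (u_1,v_1), \ldots, (u_n,v_n) = (u',v')$ in which each link is either a forward or backward $\to_{\gamma_i}$ step for some $i \in \{1,2\}$. An induction on $n$, applying $L^\downarrow \subseteq L$ at forward links and $L^\uparrow \subseteq L$ at backward ones, propagates membership along the chain and yields $(u',v') \in L$. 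The main technical subtlety is precisely this inductive argument, which extends single-step closure to closure under the reflexive-symmetric-transitive closure of $\to_\gamma$; everything else reduces to unpacking the definitions of the canonical quotient map.
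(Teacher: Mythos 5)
Your handling of $(1)\Leftrightarrow(2)\Leftrightarrow(3)$ matches the paper, which dismisses these as standard. For the substantive implication ``$L=L^\uparrow=L^\downarrow$ implies $L$ saturated'', your route is genuinely different from the paper's: you connect two $\sim_\gamma$-equivalent lassos by a finite zigzag of forward and backward $\to_\gamma$-steps and propagate membership by induction on its length, whereas the paper invokes confluence and strong normalisation of $\to_\gamma$ (stated earlier as a proposition) to collapse the zigzag into a ``V'': both lassos reduce to a common normal form $(u'',v'')$, closure under reduction puts $(u'',v'')$ in $L$, and closure under expansion lifts membership back up to $(u',v')$. Your induction is more elementary in that it does not use confluence, at the price of carrying out the zigzag induction explicitly; both arguments are sound, and both use only the inclusions $L^\uparrow\subseteq L$ and $L^\downarrow\subseteq L$.

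The gap is in your forward direction. You claim $L\subseteq L^\uparrow$ and $L\subseteq L^\downarrow$ follow from ``the reflexive nature of $\sim_\gamma$'', but $L^\uparrow$ and $L^\downarrow$ are defined via genuine one-step rewrites $\to_{\gamma_i}$, which are not reflexive. The inclusion $L\subseteq L^\uparrow$ asserts that every lasso in $L$ admits a one-step $\gamma$-reduction landing in $L$, and this fails whenever $L$ contains a lasso in normal form --- e.g.\ for the saturated set $L=[(\epsilon,a)]_{\sim_\gamma}$ the lasso $(\epsilon,a)$ admits no $\gamma_1$- or $\gamma_2$-step at all, so $(\epsilon,a)\notin L^\uparrow$. (The inclusion $L\subseteq L^\downarrow$ does hold for saturated $L$, but for a different reason: $(u,v^2)\in L$ by saturation and $(u,v^2)\to_{\gamma_2}(u,v)$.) Read literally, then, the set equalities in item $(4)$ cannot be established; the correct reading --- which the paper's own proof tacitly adopts, since it only ever uses the phrases ``closed under $\gamma$-expansion'' and ``closed under $\gamma$-reduction'' --- is that $L=L^\uparrow$ and $L=L^\downarrow$ stand for the closure conditions $L^\uparrow\subseteq L$ and $L^\downarrow\subseteq L$. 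Under that reading your forward direction is the one-line observation that $\to_{\gamma_i}\subseteq\sim_\gamma$, which you already give, and the appeal to reflexivity should simply be dropped.
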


\begin{proof}
Of these four, the first three definitions are standard. The last
definition of saturation states that if $L$ is closed under
$\gamma$-expansion ($L=L^\uparrow$) and under $\gamma$-reduction
$L=L^\downarrow$, then $L$ is saturated. To see why this is the case
assume that $L=L^\uparrow=L^\downarrow$, let $(u,v)\in L$ and
$(u',v')\sim_\gamma (u,v)$. If we show that $(u',v')\in L$ it follows immediately
that $L$ is a union of equivalence classes. From $(u,v)\sim_\gamma (u',v')$ 
it follows that the two lassos reduce to the same normal form $(u'',v'')$.
As $L$ is closed under $\gamma$-reduction and $(u,v)\to_{\gamma}(u'',v'')$,
$(u'',v'')\in L$. Next as $L$ is also closed under $\gamma$-expansion and
$(u',v')\to_\gamma (u'',v'')$, $(u',v')\in L$ as required. Hence
$L$ is saturated.
\end{proof}

We remind again that we make no distinction between $\Sigma^{\ast+}/{\sim_\gamma}$ 
and $\Sigma^{\text{up}}$ as they are isomorphic so that we are treating
$\phi$ as though its type was $\phi:\Sigma^{\ast+}\to \Sigma^{\text{up}}$.

\begin{defn}\label{defn:covering}
  For a rational lasso expression $\tau\in\text{Exp}_\circ$ and a rational
  $\omega$-expression $T\in \text{Exp}_\omega$, 
  \begin{enumerate}
    \item $\tau$ \emph{weakly represents} $T$ if $\phi\left(
        \llbracket\tau\rrbracket_\circ \right)=\text{UP}(\llbracket
        T\rrbracket_\omega) $,
    \item $\tau$ \emph{represents} $T$ if $\llbracket\tau\rrbracket_\circ = \phi^{-1}\left(\text{UP}(\llbracket T\rrbracket_\omega)\right)$.
  \end{enumerate}
\end{defn}

\begin{eg}
  Let $T=(a+b)^\ast a^\omega$. Then
  $((a+b)^\ast,a)$ constitutes a weak representation but not a representation
  of $T$. This is seen from the following facts:
  \begin{enumerate}
    \item $\phi\left(\llbracket ((a+b)^\ast,a)\rrbracket_\circ\right) = \phi(\{(u,a)\mid u\in\Sigma^\ast\}) = \{ua^\omega\mid u\in\Sigma^\ast\}=\text{UP}(\llbracket T\rrbracket_\omega)$,
    \item $\phi(\epsilon,aa)=a^\omega\in \text{UP}(\llbracket T\rrbracket_\omega)$ but $(\epsilon,aa)\not\in\llbracket ((a+b)^\ast,a)\rrbracket_\circ$.
  \end{enumerate}
\end{eg}

\begin{rmk}\label{rmk:rmk1}
  If $\tau$ represents $T$ it also weakly represents $T$ as $\phi$ is
  surjective, i.e. $\phi\circ \phi^{-1}=\text{id}$. Conversely, if $\tau$ weakly represents $T$, then it
  represents $T$ if and only if
  $\llbracket \tau\rrbracket_\circ =
  \phi^{-1}(\phi(\llbracket\tau\rrbracket_\circ))$, that is if and only if
  $\llbracket\tau\rrbracket_\circ$ is
  $\sim_\gamma$-saturated. In the
  previous example, $((a+b)^\ast,a^+)$ is a weak representation of $T$
  and moreover its semantics is $\sim_\gamma$-saturated, hence it is a
  representation of $T$.
\end{rmk}

As discussed at the start of the section, we proceed by first constructing
a weak representation whose semantics is closed under $\gamma$-expansion. This means
we want to syntactically mimic the following directions of rules
$\gamma_1$ and $\gamma_2$ :
\begin{align*}
  (u,vw) &\to (uv,wv) & (u,v) &\to (u,v^k)\quad (k\geq 1)
\end{align*}
For the first expansion, given a lasso $(u,v)$ we may split $v$ into two words
$v_1$ and $v_2$ and obtain the new lasso $(uv_1,v_2v_1)$. For the second one,
given a lasso $(u,v)$ we may choose arbitrary $k\geq 1$ and obtain
the new lasso $(u,v^k)$. To mimic the first expansion, we need to be able to split
rational expressions into two rational expressions. Then given a rational
lasso expression $(t,r)$ (recall that we use $(t,r)$ to mean $t\cdot r^\circ$),
we can split $r$ into two rational expressions $r_1$ and $r_2$ and obtain
a new rational lasso expression $(tr_1,r_2r_1)$. For the second expansion,
we can use the Kleene plus to go from $(t,r)$ to $(t,r^+)$, taking care
of the arbitrary $k\geq 1$.

Hence, in order to close the semantics under $\gamma$-expansion, whenever
we add a term $(t,r)$, we also want to add $(tr_1,r_2r_1)$ for all possible
splits of $r$, and additionally, we also want to apply the Kleene plus.
We now formally define the sequential splitting relation and show some of
its properties which are needed to show that our first construction yields
a weak representation.

\begin{defn}[\cite{kappe:2017:concurrentKleeneAlgebra}]\label{defn:sequentialSplittingRelation}
  Let $\nabla:\text{Exp} \to 2^{\text{Exp}\times \text{Exp}}$ be defined
  recursively:
	\begin{align*}
		\nabla(0) &= \emptyset & \nabla(1) &= \{(1,1)\} & \nabla(a) &= \{(1,a),(a,1)\}
	\end{align*}
	\[
		\nabla(t+r)=\nabla(t)\cup \nabla(r)\qquad\qquad\qquad \nabla(t^\ast) = \{(t^\ast\cdot t_0,t_1\cdot t^\ast)\mid (t_0,t_1)\in\nabla(t)\}\cup \{(1,1),(t^\ast \cdot t,1)\}  
	\]\phantom{test}
	\vspace{-0.2cm}
	\[
	\nabla(t\cdot r) = \{(t_0,t_1\cdot r)\mid (t_0,t_1)\in \nabla(t)\}\cup \{(t\cdot r_0,r_1)\mid (r_0,r_1)\in\nabla(r)\} 
	\] 
	We write $\nabla_t$ for $\nabla(t)$ and call $\nabla_t$ the \emph{sequential splitting relation of $t$}.
\end{defn}

The following lemma establishes some properties of the splitting relation,
of which points $1.$ and $2.$ are taken from \cite{kappe:2017:concurrentKleeneAlgebra}.

\begin{lem}\label{lem:sequentialSplittingBasicProperties}
	Let $t\in \text{Exp}$. The sequential splitting relation $\nabla_t$ satisfies:
	\begin{enumerate}[label={\arabic*.},ref={6.6.\arabic*}]
		\item $|\nabla_t|$ is finite and $\forall (t_0,t_1)\in\nabla_t\colon t_0\cdot t_1 \leq t$,
    \item if $u\cdot v\in\llbracket t\rrbracket$, then there is a split $(t_0,t_1)\in\nabla_t$
			such that $u\in \llbracket t_0\rrbracket$ and $v\in\llbracket t_1\rrbracket$,
    \item if $(t_0,t_1)\in\nabla_t$ and $(r_0,r_1)\in\nabla_{t_1}$, then there exists
      $(s_0,s_1)\in\nabla_t$ such that $t_0\cdot r_0\leq s_0$ and $r_1\leq s_1$. \label{lem:sequentialSplittingTransfer}
	\end{enumerate}
\end{lem}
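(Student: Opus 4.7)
The plan is to prove all three items by structural induction on $t$, though items 1 and 2 are essentially classical and only item 3 requires real work. For item 1, finiteness is immediate from the recursive definition since each clause produces a finite union of finite sets, and the bound $t_0 \cdot t_1 \leq t$ is checked clause by clause using standard RA identities ($t \cdot t^\ast \leq t^\ast$, absorption in $+$, and monotonicity of $\cdot$). Item 2 follows by the same induction: in each recursive case one locates where the split point between $u$ and $v$ falls within the structure of $t$ (which summand, which factor, which iteration of the star) and applies the inductive hypothesis to the appropriate subexpression.

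For item 3, I would induct on $t$, reading the statement as a single claim quantifying over all pairs $(t_0, t_1) \in \nabla_t$ and all $(r_0, r_1) \in \nabla_{t_1}$. The base cases $0$, $1$, $a$ are handled by inspection of the finitely many splits. For $t = p + q$ the splits of $t$ come from $\nabla_p$ or $\nabla_q$, so the inductive hypothesis on the relevant summand produces the required $(s_0, s_1)$ directly.

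The interesting case is $t = p \cdot q$. Writing $(t_0, t_1)$ either as $(p_0, p_1 \cdot q)$ with $(p_0, p_1) \in \nabla_p$, or as $(p \cdot q_0, q_1)$ with $(q_0, q_1) \in \nabla_q$, I would then split further on the shape of $(r_0, r_1) \in \nabla_{t_1}$. In the first form, if $(r_0, r_1)$ splits inside $p_1$ as $(p_1^a, p_1^b \cdot q)$, then $(p_1^a, p_1^b) \in \nabla_{p_1}$ together with $(p_0, p_1) \in \nabla_p$ is precisely the hypothesis of item 3 applied to $p$, yielding some $(\hat p_0, \hat p_1) \in \nabla_p$, from which $(\hat p_0, \hat p_1 \cdot q)$ is the desired $(s_0, s_1)$; if instead $(r_0, r_1) = (p_1 \cdot q^a, q^b)$, then $(p \cdot q^a, q^b)$ works because $p_0 \cdot p_1 \leq p$ by item 1. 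In the second form $t_1 = q_1$, and the inductive hypothesis on $q$ supplies $(\hat q_0, \hat q_1) \in \nabla_q$ with $q_0 \cdot r_0 \leq \hat q_0$ and $r_1 \leq \hat q_1$, so $(p \cdot \hat q_0, \hat q_1)$ is the witness. The case $t = p^\ast$ proceeds analogously, with extra sub-cases corresponding to the three kinds of splits of a star, all collapsed using item 1 together with $p \cdot p^\ast \leq p^\ast$ and $p^\ast \cdot p^\ast \leq p^\ast$.

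The main obstacle I expect is the nested bookkeeping in the product and star cases: $\nabla_{t_1}$ can itself contain several layers of product or star structure, and the candidate $s_0$ must be shown to dominate a concatenation obtained by traversing these layers. The key conceptual point is that although the syntactic expressions $p_1$ and $q_1$ that reappear as outer components of splits in $\nabla_{t_1}$ need not be subterms of $p$ or $q$, they only ever appear in places where the inductive hypothesis is invoked on the parent expression $p$ or $q$ as a universal statement about \emph{all} of its splits, which is strong enough to absorb this apparent jump without requiring a stronger well-founded measure.
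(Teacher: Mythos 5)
Your proposal is correct and follows essentially the same route as the paper: items 1 and 2 are taken as known (the paper cites them from the concurrent Kleene algebra literature), and item 3 is proved by structural induction on $t$, with exactly the same case decomposition in the product case (splitting on whether $(t_0,t_1)$ and then $(r_0,r_1)$ fall in the left or right factor, invoking the inductive hypothesis on the parent expression quantified over \emph{all} of its splits, and using item 1 to absorb $t_0\cdot t_1\leq t$). Your star case is only sketched, but the three kinds of splits and the inequalities $p\cdot p^\ast\leq p^\ast$ and $p^\ast\cdot p^\ast\leq p^\ast$ you name are precisely what the paper's four sub-cases rely on, so the sketch fills in as intended.
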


\begin{proof}
  The first two items follow by
  \cite{kappe:2017:concurrentKleeneAlgebra},
  so we only focus on the third
  item, which we show by structural induction on rational
  expressions. For $t=0$, $\nabla_0=\emptyset$ and so it trivially
  holds. For $t=1$, $\nabla_1=\{(1,1)\}$ and so there is only one
  choice for $(t_0,t_1)$ and $(r_0,r_1)$, namely the pair $(1,1)$. As
  $1\cdot 1\leq 1$ and $1\leq 1$, we are done. For $t=a$, there are
  three cases. First, let $(t_0,t_1)=(a,1)$, then $(r_0,r_1)=(1,1)$
  and we have $a\cdot 1\leq a$ and $1\leq 1$. For the other two cases,
  let $(t_0,t_1)=(1,a)$. If $(r_0,r_1)=(1,a)$, then we are done as
  $1\cdot 1\leq 1$ and $a\leq a$. On the other hand, for
  $(r_0,r_1)=(a,1)$ we have $1\cdot a\leq a$ and $1\leq 1$. This
  covers all the base cases.

	For the induction steps, we start with $t\cdot r$. There are two cases two consider, the first
	split can either be $(t_0,t_1\cdot r)$ or $(t\cdot r_0,r_1)$. For the case $(t_0,t_1\cdot r)$,
	there are again two subcases, namely the splits $(t_{10},t_{11}\cdot r)$ and
	$(t_1\cdot r_0,r_1)$ in $\nabla_{t_1\cdot r}$. For the first subcase, we know that
	$(t_{10},t_{11})\in\nabla_{t_1}$, so by the induction hypothesis, there exists a split
	$(s_0,s_1)\in\nabla_t$ such that $t_0\cdot t_{10}\leq s_0$ and $t_{11}\leq s_1$. It follows
	that $(s_0,s_1\cdot r)\in\nabla_{t\cdot r}$ is a suitable candidate as $t_0\cdot t_{10}\leq s_0$ and
	$t_{11}\cdot r\leq s_1\cdot r$. For the second subcase, we have $(r_0,r_1)\in\nabla_r$, hence
	$(t\cdot r_0,r_1)\in\nabla_{t\cdot r}$ works as $t_0\cdot t_1\cdot r_0\leq t\cdot r_0$ 
	and $r_1\leq r_1$. Finally, for the case $(t\cdot r_0,r_1)$, let $(r_{10},r_{11})\in\nabla_{r_1}$.
	Then, by the induction hypothesis, there exists a split $(s_0,s_1)\in\nabla_r$ with
	$r_0\cdot r_{10}\leq s_0$ and $r_{11}\leq s_1$. Our candidate is $(t\cdot s_0,s_1)$ which
	works as $t\cdot r_0\cdot r_{10}\leq t\cdot s_0$ and $r_{11}\leq s_1$.

	The next induction step is $t_0+t_1$. Let $(t,t')\in\nabla_{t_0+t_1}$, then
	$(t,t')\in\nabla_{t_i}$ for some $i\in \{0,1\}$. Next, take $(r_0,r_1)\in\nabla_{t'}$.
	By the induction hypothesis we find $(s_0,s_1)\in\nabla_{t_i}$ such that
	$t\cdot r_0\leq s_0$ and $r_1\leq s_1$. As $(s_0,s_1)\in\nabla_{t_i}$, we also
	have $(s_0,s_1)\in\nabla_{t_0+t_1}$. Choosing $(s_0,s_1)$ suffices to satisfy the claim.

  For the last case, we consider the expression $t^\ast$. If we consider
  $(1,1)\in\nabla_{t^\ast}$, then this is taken care of by one of the previous
  points. A similar argument also suffices for $(t^\ast\cdot t,1)\in\nabla_{t^\ast}$.
  For the other case, let
  $(t^\ast\cdot t_0,t_1\cdot t^\ast)\in\nabla_{t^\ast}$ where $(t_0,t_1)\in\nabla_t$.
	There are four cases, as the splits in $\nabla_{t_1\cdot t^\ast}$ are either of the
	form $(t_1\cdot 1,1)$, $(t_1\cdot t^\ast\cdot t,1)$, $(t_1\cdot t^\ast\cdot r_0,r_1\cdot t^\ast)$
	(with $(r_0,r_1)\in\nabla_{t}$) or $(t_{10},t_{11}\cdot t^\ast)$
	(with $(t_{10},t_{11})\in \nabla_{t_1}$). For the first case, consider
  $(t^\ast\cdot t,1)\in\nabla_{t^\ast}$. Then $t^\ast\cdot t_0\cdot t_1\cdot 1\leq t^\ast\cdot t$ 
  and $1\leq 1$ as required. For the second case consider again $(t^\ast\cdot t,1)\in\nabla_{t^\ast}$ 
  for which we have $t^\ast\cdot t_0\cdot t_1\cdot t^\ast\cdot t\leq t^\ast\cdot t$ 
  and $1\leq 1$ satisfying the requirement. For the third case we choose
  the split $(t^\ast\cdot r_0,r_1\cdot t^\ast)\in\nabla_{t^\ast}$ as
  $t^\ast\cdot t_0\cdot t_1\cdot t^\ast\cdot r_0\leq t^\ast\cdot r_0$ and
  $r_1\cdot t^\ast\leq r_1\cdot t^\ast$. Finally, for the last case we find
  by the induction hypothesis a split $(r_0,r_1)\in\nabla_t$ with
  $t_0\cdot t_{10}\leq r_0$ and $t_{11}\leq r_1$ for which we then have
  $t^\ast\cdot t_0\cdot t_{10}\leq t^\ast\cdot r_0$ and
  $t_{11}\cdot t^\ast\leq r_1\cdot t^\ast$ with $(t^\ast\cdot r_0,r_1\cdot t^\ast)\in\nabla_{t^\ast}$ 
  as desired.
\end{proof}

\begin{eg}
  For this example we explore the sequential splitting relation of a simple
  rational expression and briefly touch on the different properties
  outlined in Lemma \ref{lem:sequentialSplittingBasicProperties}.
  Let $t=b(a+b^\ast)$. Then
   \[
  \nabla_t = \{(1,b(a+b^\ast)),(b,1(a+b^\ast)),(b1,a),(ba,1),(bb^\ast1,bb^\ast),(bb^\ast b,1b^\ast),(b1,1),(bb^\ast b,1)\}.
  \] 
  If we choose a word in $\llbracket t\rrbracket$, say $bb\cdot b$, then we can
  find a corresponding split, in this case $(bb^\ast b,1b^\ast)$ for example.
  Moreover,  if we choose any split in $\nabla_t$, say $(bb^\ast b,1b^\ast)$,
  and take any split in $\nabla_{1b^\ast}$, say $(1b^\ast 1,bb^\ast)$, then
  we can find a split  $(t_0,t_1)\in\nabla_t$ such that
  $bb^\ast b 1b^\ast 1\leq t_0$ and $b b^\ast \leq t_1$, namely $(bb^\ast 1,bb^\ast)$.
\end{eg}

With the sequential splitting relation at hand, we define the map
$h$ which takes a rational $\omega$-expression and returns a rational
lasso expression which is weakly representing. The definition of $h$ makes
use of the sequential splitting relation and the Kleene plus in the definition
of $h(t^\omega)$. Instead of just mapping
$t^\omega$ to $(1,t)$ (i.e.\ $t^\circ$), we go over every possible way of splitting $t$ into
$t_0$ and $t_1$ and make sure to add $(t_0,t_1t_0)$. Additionally, we also
incorporate the Kleene plus, which appears as a Kleene star in the definition.

\begin{defn}\label{defn:weakCoverMap}
	Let $h:\text{Exp}_\omega\to \text{Exp}_\circ$ be defined as
	\begin{align*}
		h(0)&=0 & h(t^\omega)&= \sum_{(t_0,t_1)\in \nabla_t} (t^\ast\cdot t_0,t_1\cdot t^\ast\cdot t_0) \\ 
		h(T_1+T_2)&=h(T_1)+h(T_2) & h(t\cdot T)&= t\cdot h(T)
	\end{align*}
\end{defn}

 Rational lasso expressions can only contain finite sums, so it is
 crucial in the definition of $h(t^\omega)$ that $|\nabla_t|$ is
 finite (Lemma
 \ref{lem:sequentialSplittingBasicProperties}). The next steps are to show
 that $h(T)$ weakly represents $T$ and that its semantics is closed under
 $\gamma$-expansion. The proof that $h(T)$ weakly represents $T$ relies on
 Lemma \ref{lem:UVOmegaWords} from Section \ref{sec:5}.

\begin{prop}\label{prop:weakCoverMapYieldsWeakCovering}
	Let $T\in\text{Exp}_\omega$. Then $h(T)$ weakly represents $T$.
\end{prop}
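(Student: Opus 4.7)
The plan is to proceed by structural induction on $T$. The base case $T=0$ is immediate, since both sides are empty. For $T = T_1 + T_2$, I would reduce to the induction hypothesis using that $\phi$ commutes with unions and that $\mathrm{UP}(\cdot)$ distributes over unions. For $T = t \cdot T'$, I would first observe that for any language $U$ and lasso language $L$ one has $\phi(U \cdot L) = U \cdot \phi(L)$ (with the right-hand product being the usual prefix-concatenation of a language with an $\omega$-language), which together with the induction hypothesis on $T'$ settles that case. The real content lies in the case $T = t^\omega$, where one must show
\[
\phi\bigl(\llbracket h(t^\omega)\rrbracket_\circ\bigr) \;=\; \llbracket t\rrbracket^\omega \cap \Sigma^{\text{up}}.
\]

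For $\subseteq$, I would take $(u,v) \in \llbracket (t^\ast t_0,\, t_1 t^\ast t_0)\rrbracket_\circ$ for some $(t_0,t_1) \in \nabla_t$, and decompose $u = x y_0$ and $v = y_1 x' y_0'$ with $x,x' \in \llbracket t^\ast\rrbracket$, $y_0,y_0' \in \llbracket t_0\rrbracket$, $y_1 \in \llbracket t_1\rrbracket$. Regrouping gives
\[
u v^\omega \;=\; x \cdot (y_0 y_1) \cdot (x' y_0' y_1)^\omega,
\]
and since $t_0 t_1 \le t$ by Lemma \ref{lem:sequentialSplittingBasicProperties}(1), both $y_0 y_1$ and $y_0' y_1$ lie in $\llbracket t\rrbracket$; together with $x,x' \in \llbracket t^\ast\rrbracket$ this realises $u v^\omega$ as an $\omega$-concatenation of words in $\llbracket t\rrbracket$.

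For $\supseteq$, I would take $w \in \llbracket t\rrbracket^\omega \cap \Sigma^{\text{up}}$ and apply Lemma \ref{lem:UVOmegaWords} with $U = \llbracket t^\ast\rrbracket$ and $V = \llbracket t^+\rrbracket$. Since $t \notin N$ we have $\epsilon \notin V$, and one checks $UV^\ast = U$, $V^+ = V$, and $UV^\omega = \llbracket t\rrbracket^\omega$, which yields $u' \in \llbracket t^\ast\rrbracket$ and $v' \in \llbracket t^+\rrbracket$ with $w = u' v'^\omega$. Factoring $v' = v_1' v_2'$ with $v_1' \in \llbracket t\rrbracket$ and $v_2' \in \llbracket t^\ast\rrbracket$ and applying Lemma \ref{lem:sequentialSplittingBasicProperties}(2) to the trivial split $v_1' = \epsilon \cdot v_1'$ produces $(t_0,t_1) \in \nabla_t$ with $\epsilon \in \llbracket t_0\rrbracket$ and $v_1' \in \llbracket t_1\rrbracket$. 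Then $u' \in \llbracket t^\ast \cdot t_0\rrbracket$ and $v' = v_1' v_2' \in \llbracket t_1 \cdot t^\ast \cdot t_0\rrbracket$ (appending a trivial $\epsilon$-factor from $t_0$), so $(u',v')$ is a lasso in $\llbracket h(t^\omega)\rrbracket_\circ$ with $\phi(u',v') = w$.

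The main obstacle is this last direction: one must align the factorisation of $w$ into $\llbracket t\rrbracket$-blocks supplied by the pigeonhole lemma with some split $(t_0,t_1) \in \nabla_t$. The key observation is that it is enough to split the very first block $v_1'$ trivially as $\epsilon \cdot v_1'$, exploiting that $\nabla_t$ always produces a split with $\epsilon \in \llbracket t_0\rrbracket$ for such a factorisation; this sidesteps any finer analysis of where to cut inside $v'$.
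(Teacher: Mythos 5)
Your proposal is correct and follows essentially the same route as the paper: structural induction, with the $t^\omega$ case handled by the pigeonhole lemma (Lemma \ref{lem:UVOmegaWords}) applied to $U=\llbracket t^\ast\rrbracket$, $V=\llbracket t^+\rrbracket$ together with the trivial split $\epsilon\cdot v_1'$ from Lemma \ref{lem:sequentialSplittingBasicProperties}. The only cosmetic difference is that you verify the forward inclusion by an explicit word-level regrouping $uv^\omega = x(y_0y_1)(x'y_0'y_1)^\omega$, where the paper uses the corresponding language-level shift identity $\llbracket t^\ast t_0\rrbracket\cdot\llbracket t_1 t^\ast t_0\rrbracket^\omega=\llbracket t^\ast t_0t_1\rrbracket\cdot\llbracket t^\ast t_0t_1\rrbracket^\omega$.
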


\begin{proof}
  This is shown by structural induction on rational $\omega$-expressions.
  The case $T=0$ is trivial. For $T=t^\omega$, let first
  $uv^\omega\in\text{UP}(\llbracket T\rrbracket_\omega)$. Then 
  \[
    uv^\omega\in \llbracket t\rrbracket^\ast (\llbracket t\rrbracket^+)^\omega
  \] and by Lemma \ref{lem:UVOmegaWords} there exist $u',v'$ with
  $uv^\omega=u'v'^\omega$, $u'\in\llbracket t\rrbracket^\ast$ and
  $v'\in\llbracket t\rrbracket^+$. As $v'\in \llbracket t\rrbracket^+$
  we can find $v_1\in\llbracket t\rrbracket$ and $w\in\llbracket t\rrbracket^\ast$ 
  with $v'=v_1w$. Now $v_1=\epsilon\cdot v_1$ gives rise to a split
  $(t_0,t_1)\in\nabla_t$ with $\epsilon\in\llbracket t_0\rrbracket$ and
  $v_1\in\llbracket t_1\rrbracket$. It is quickly verified that
  \[
    (u',v') = (u'\cdot \epsilon,v_1\cdot w\cdot \epsilon) \in \llbracket (t^\ast t_0,t_1t^\ast t_0)\rrbracket_\circ \subseteq \llbracket h(t^\omega)\rrbracket_\circ.
  \] Hence
  $uv^\omega=u'v'^\omega\in \phi\left(\llbracket h(t^\omega)\rrbracket_\circ\right)$.
  For the other direction, let
  $(u,v)\in\llbracket h(t^\omega)\rrbracket_\circ$. Then
  there exists a split $(t_0,t_1)\in\nabla_t$ with
  $u\in \llbracket t^\ast t_0\rrbracket$ and
  $v\in \llbracket t_1t^\ast t_0\rrbracket$. Now
  \[
    \llbracket t^\ast t_0\rrbracket\cdot (\llbracket t_1 t^\ast t_0\rrbracket)^\omega = 
    \llbracket t^\ast t_0 t_1\rrbracket\cdot (\llbracket t^\ast t_0 t_1\rrbracket)^\omega \subseteq \llbracket t^\omega\rrbracket_\omega,
  \] hence $\phi(u,v)=uv^\omega\in\llbracket t^\omega\rrbracket_\omega$.

  For the induction steps, we first consider the case $ T_1+ T_2$. For
  the right to left inclusion, let
  $uv^\omega\in\llbracket T_1+ T_2\rrbracket_\omega$. Then
  $uv^\omega\in\llbracket T_i\rrbracket_\omega$ for
  $i\in \{1,2\}$. By the induction hypothesis, there exists some
  $(u',v')\in\llbracket h(T_i)\rrbracket_\circ$ with
  $(u,v)\sim_\gamma (u',v')$. Then
	\[
		(u',v')\in\llbracket h( T_1)\rrbracket\cup\llbracket h( T_2)\rrbracket_\circ = \llbracket h( T_1)+h( T_2)\rrbracket_\circ = \llbracket h( T_1+ T_2)\rrbracket_\circ
	\] as required. For the other inclusion, let $(u,v)\in\llbracket h( T_1+ T_2)\rrbracket_\circ$, then
	$(u,v)\in\llbracket h( T_i)\rrbracket_\circ$ for some $i\in \{1,2\}$. By the induction
	hypothesis, $uv^\omega\in\llbracket T_i\rrbracket_\omega \subseteq \llbracket T_1+ T_2\rrbracket_\omega$.

	The last case is that of $t\cdot T$. For the right to left inclusion, take
	$u_0u_1v^\omega\in\llbracket t\cdot  T\rrbracket_\omega$ where
	$u_0\in\llbracket t\rrbracket$ and $u_1v^\omega\in\llbracket T\rrbracket_\omega$. By the
	induction hypothesis, there exists $(u',v')\in\llbracket h( T)\rrbracket_\circ$ such that
	$uv^\omega = u'v'^\omega$. It follows that
	\[
		(u_0u',v')\in\llbracket t\cdot h( T)\rrbracket_\circ = \llbracket h(t\cdot  T)\rrbracket_\circ
	\] and moreover, $u_0u'v'^\omega=u_0u_1v^\omega$ and we are done. For the other
  inclusion, let
	$(u_0u_1,v)\in\llbracket h(t\cdot  T)\rrbracket_\circ$, where $u_0\in\llbracket t\rrbracket$ 
	and $(u_1,v)\in\llbracket h( T)\rrbracket_\circ$. By the induction hypothesis,
	$u_1v^\omega\in\llbracket  T\rrbracket_\omega$ and so
	$u_0u_1v^\omega\in\llbracket t\cdot T\rrbracket_\omega$ which concludes the proof.
\end{proof}

Showing that the semantics of $h(T)$ is closed under $\gamma$-expansion
relies crucially on some of the properties of the sequential splitting
relation, in particular on Lemma \ref{lem:sequentialSplittingTransfer}.

\begin{prop}\label{prop:weakCoverMapExpansionClosed}
	Let $T\in\text{Exp}_\omega$. Then $\llbracket h(T)\rrbracket_\circ$ is closed
	under $\gamma$-expansion.
      \end{prop}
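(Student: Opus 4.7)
The plan is to prove closure by structural induction on $T$. The cases $T=0$, $T=T_1+T_2$ and $T=t\cdot T'$ are direct: $\llbracket h(0)\rrbracket_\circ=\emptyset$ is vacuously closed, a union of expansion-closed sets is expansion-closed, and for $t\cdot T'$ every $\gamma$-expansion of a lasso in $\llbracket t\rrbracket\cdot\llbracket h(T')\rrbracket_\circ$ only affects the boundary between spoke and loop, or the loop itself, so stripping an $\llbracket t\rrbracket$-prefix of the spoke reduces the claim to the induction hypothesis on $T'$. The substantive case is $T=t^\omega$, where $\llbracket h(t^\omega)\rrbracket_\circ=\bigcup_{(t_0,t_1)\in\nabla_t}\llbracket(t^\ast t_0,t_1 t^\ast t_0)\rrbracket_\circ$, and I treat closure under $\gamma_1$ and $\gamma_2$ separately.

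For $\gamma_2$, starting from $(x,y)$ in the summand indexed by $(t_0,t_1)$, I decompose $y=\alpha\beta\gamma$ with $\alpha\in\llbracket t_1\rrbracket$, $\beta\in\llbracket t^\ast\rrbracket$, $\gamma\in\llbracket t_0\rrbracket$. Since $t_0 t_1\leq t$ by Lemma~\ref{lem:sequentialSplittingBasicProperties}.1, the word $\gamma\alpha$ lies in $\llbracket t\rrbracket$; then the algebraic rewriting $y^n=\alpha(\beta\gamma\alpha)^{n-1}\beta\gamma$ puts $y^n$ back in $\llbracket t_1 t^\ast t_0\rrbracket$ for every $n\geq 1$ using the very same split $(t_0,t_1)$, hence $(x,y^n)\in\llbracket h(t^\omega)\rrbracket_\circ$.

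For $\gamma_1$, given $(x,ay)$ in the $(t_0,t_1)$-summand with witnessing decompositions $x=u\gamma^x$ and $ay=\alpha\beta\gamma^y$, the task is to produce a single new split $(s_0,s_1)\in\nabla_t$ that caps $xa$ on the right and $ya$ on both sides. I case-split on where the leading $a$ sits. If $\alpha=a\alpha'$ is non-empty, I apply Lemma~\ref{lem:sequentialSplittingBasicProperties}.2 inside $\llbracket t_1\rrbracket$ to the split $\alpha=a\cdot\alpha'$, obtaining $(r_0,r_1)\in\nabla_{t_1}$ with $a\in\llbracket r_0\rrbracket$ and $\alpha'\in\llbracket r_1\rrbracket$, and then invoke the transfer property (Lemma~\ref{lem:sequentialSplittingTransfer}) to lift to $(s_0,s_1)\in\nabla_t$ with $t_0 r_0\leq s_0$ and $r_1\leq s_1$; this simultaneously forces both $\gamma^x a$ and $\gamma^y a$ into $\llbracket s_0\rrbracket$, yielding $xa\in\llbracket t^\ast s_0\rrbracket$ and $ya\in\llbracket s_1 t^\ast s_0\rrbracket$. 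The remaining sub-cases all have $\alpha=\epsilon$, hence $t_1\in N$; I handle them by splitting the $t$-factor that contains the leading $a$ (either the first factor of $\beta$ when $\beta\neq\epsilon$, or $ay=\gamma^y$ itself when $\beta=\epsilon$, which lies in $\llbracket t\rrbracket$ since $t_1\in N$) via a single application of Lemma~\ref{lem:sequentialSplittingBasicProperties}.2, and absorb the old $\gamma^x,\gamma^y$ into the $\llbracket t^\ast\rrbracket$-parts using $\llbracket t_0\rrbracket\subseteq\llbracket t_0 t_1\rrbracket\subseteq\llbracket t\rrbracket$. The main obstacle is the uniformity requirement that \emph{one} split $(s_0,s_1)$ must serve as the right-hand cap for both $xa$ and $ya$ even though $\gamma^x$ and $\gamma^y$ need not agree as words, and the transfer property of $\nabla$ is precisely the tool that resolves this in the principal sub-case.
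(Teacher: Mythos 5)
Your proof is correct and follows essentially the same route as the paper's: structural induction with the same easy cases, the same three-way case split on where the leading letter $a$ sits in the $t_1\cdot t^\ast\cdot t_0$-decomposition of the loop for $\gamma_1$-expansion (with the transfer property of $\nabla$ doing the work in the $\alpha\neq\epsilon$ subcase), and the same use of $t_0\cdot t_1\leq t$ for $\gamma_2$-expansion. The only cosmetic difference is that in the subcase $\alpha=\beta=\epsilon$ you split $t$ itself (after embedding $\llbracket t_0\rrbracket\subseteq\llbracket t\rrbracket$) rather than splitting $t_0$, which if anything lands you more directly in a summand of $h(t^\omega)$, whose sum is indexed by $\nabla_t$.
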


\begin{proof}
	We proceed by structural induction on rational $\omega$-expressions. The first base case
	$T=0$ is trivial as $\llbracket h(0)\rrbracket_\circ=\emptyset$. For the
	second, $ T=t^\omega$. We show that $\llbracket h(t^\omega)\rrbracket_\circ$ 
	is closed under $\gamma_1$- and $\gamma_2$-expansion. The result then follows as these
	commute. For $\gamma_1$-expansion, let $(ua,va)\to_{\gamma_1}(u,av)$ and
	$(u,av)\in\llbracket h(t^\omega)\rrbracket_\circ$. Then there exists a split
	$(t_0,t_1)\in\nabla_t$ such that $u\in\llbracket t^\ast\cdot t_0\rrbracket$ and
	$av\in\llbracket t_1\cdot t^\ast\cdot t_0\rrbracket$. Hence we can find
	$v_1,\ldots,v_k\in\Sigma^\ast$ with $v_1\in\llbracket t_1\rrbracket$,
	$v_2,\ldots,v_{k-1}\in \llbracket t^\ast\rrbracket$, $v_k\in\llbracket t_0\rrbracket$ and
	$v_1v_2\ldots v_k=av$.

	We need to distinguish three subcases:
	\begin{enumerate}
		\item If $v_1\not =\epsilon$, then $v_1=av_1'$ and this induces a split
			$(r_0,r_1)\in\nabla_{t_1}$, where $a\in\llbracket r_0\rrbracket$ and
			$v_1'\in\llbracket r_1\rrbracket$. By Lemma \ref{lem:sequentialSplittingTransfer}, there exists
			a split $(s_0,s_1)\in\nabla_t$ such that $t_0\cdot r_0\leq s_0$ and
			$r_1\leq s_1$. We now have that
      \[
				ua\in\llbracket t^\ast\cdot t_0\cdot r_0\rrbracket \subseteq \llbracket t^\ast\cdot s_0\rrbracket \qquad \text{ and }\qquad
				va\in\llbracket r_1\cdot t^\ast\cdot t_0\cdot r_0\rrbracket\subseteq \llbracket s_1\cdot t^\ast\cdot s_0\rrbracket.
      \]
			Hence $(ua,va)\in\llbracket h(t^\omega)\rrbracket_\circ$.
		\item If $v_1=\epsilon$ and $k=2$, we have that $v_2=av$, and hence a split
			$(r_0,r_1)\in\nabla_{t_0}$ where $a\in\llbracket r_0\rrbracket$ and
			$v\in\llbracket r_1\rrbracket$. Moreover, as $v_1=\epsilon$, it follows that
			$1\leq t_1$ and so $t_0 = t_0\cdot 1\leq t_0\cdot t_1\leq t$. Thus
			\[
				ua\in\llbracket t^\ast\cdot t_0\cdot r_0\rrbracket \subseteq \llbracket t^\ast\cdot e\cdot r_0\rrbracket \subseteq \llbracket t^\ast\cdot r_0\rrbracket\qquad \text{ and }\qquad
				va\in\llbracket r_1\cdot r_0\rrbracket\subseteq \llbracket r_1\cdot t^\ast\cdot r_0\rrbracket.
      \]
			Hence $(ua,va)\in\llbracket h(t^\omega)\rrbracket_\circ$.
		\item If $v_1=\epsilon$ and $k>2$, then $v_2=av_2'$ as $\epsilon\not\in\llbracket t\rrbracket$. This gives
			rise to a split $(r_0,r_1)\in\nabla_t$ with $a\in \llbracket r_0\rrbracket$ and
			$v_2'\in\llbracket r_1\rrbracket$. As in the previous case, we have $t_0\leq t$ as
			$v_1=\epsilon$. This gives us that
			\begin{align*}
				ua&\in\llbracket t^\ast\cdot t_0\cdot r_0\rrbracket \subseteq \llbracket t^\ast\cdot t\cdot r_0\rrbracket\subseteq \llbracket t^\ast\cdot r_0\rrbracket,\\
				va&\in\llbracket r_1\cdot t^\ast\cdot t_0\cdot r_0\rrbracket\subseteq \llbracket r_1\cdot t^\ast\cdot t\cdot r_0\rrbracket\subseteq \llbracket r_1\cdot t^\ast\cdot r_0\rrbracket.
			\end{align*}
			So $(ua,va)\in \llbracket h(t^\omega)\rrbracket_\circ$.
	\end{enumerate}
	For $\gamma_2$-expansion, let $(u,v^k)\to_{\gamma_2}(u,v)$ ($k\geq 1$) and
	$(u,v)\in\llbracket h(t^\omega)\rrbracket_\circ$. So there exists a split $(t_0,t_1)\in\nabla_t$
	with  $u\in\llbracket t^\ast\cdot t_0\rrbracket$ and
	$v\in\llbracket t_1\cdot t^\ast\cdot t_0\rrbracket$. It suffices to show that
	$v^k\in\llbracket t_1\cdot t^\ast\cdot t_0\rrbracket$. We have
	\[
	v^k\in\llbracket t_1\cdot t^\ast\cdot t_0\rrbracket^k = \llbracket (t_1\cdot t^\ast\cdot t_0)^k\rrbracket \subseteq \llbracket t_1\cdot t^\ast\cdot t_0\rrbracket,
	\] where the last step holds because
	\[
		(t_1\cdot t^\ast\cdot t_0)\cdot (t_1\cdot t^\ast\cdot t_0) = t_1\cdot t^\ast\cdot (t_0\cdot t_1)\cdot t^\ast\cdot t_0 \leq t_1\cdot t^\ast\cdot t\cdot t^\ast\cdot t_1\leq t_1\cdot t^\ast\cdot t_0,
	\] as $t_0\cdot t_1\leq t$. This concludes the base cases.

	For the first induction step, let $ T= T_1+ T_2$. Let $(u,v)\to_\gamma (u',v')$ and
	$(u',v')\in\llbracket h( T_1+ T_2)\rrbracket_\circ$. As
	$\llbracket h( T_1+ T_2)\rrbracket_\circ=\llbracket h( T_1)\rrbracket\cup\llbracket h( T_2)\rrbracket_\circ$, $(u',v')\in\llbracket h( T_i)\rrbracket_\circ$ for $i\in \{1,2\}$. By the induction
	hypothesis, $(u,v)\in\llbracket h( T_i)\rrbracket_\circ$ and so
	$(u,v)\in\llbracket h( T_1+ T_2)\rrbracket_\circ$.

	For the remaining induction step, let $ T=t\cdot T'$. We treat the case of
	$\gamma_1$- and $\gamma_2$-expansion separately. For $\gamma_1$-expansion,
	let $(ua,va)\to_{\gamma_1}(u,av)$ and $(u,av)\in\llbracket h(t\cdot T')\rrbracket_\circ$.
	As
	$\llbracket h(t\cdot T')\rrbracket_\circ=\llbracket t\rrbracket\cdot \llbracket h( T')\rrbracket_\circ$ 
	there exist $u_0,u_1\in\Sigma^\ast$ such that $u_0u_1=u$, $u_0\in\llbracket t\rrbracket$ and
	$(u_1,av)\in\llbracket h( T')\rrbracket_\circ$. By the induction hypothesis,
	$(u_1a,va)\in\llbracket h( T')\rrbracket_\circ$, and
	\[
		(ua,va)=(u_0u_1a,va)\in\llbracket t\rrbracket \cdot \llbracket h( T')\rrbracket_\circ = \llbracket h(t\cdot T')\rrbracket_\circ.
	\] Finally, for $\gamma_2$-expansion, let $(u,v^k)\to_{\gamma_2}(u,v)$ ($k\geq 1$) and
	$(u,v)\in\llbracket h(t\cdot T')\rrbracket_\circ$. So there exist $u_0,u_1$ with
	$u_0\in\llbracket t\rrbracket$, $(u_1,v)\in\llbracket h( T')\rrbracket_\circ$ and $u_0u_1=u$.
	By the induction hypothesis, $(u_1,v^k)\in\llbracket h( T')\rrbracket_\circ$ and so
	$(u,v^k)=(u_0u_1,v^k)\in\llbracket h(t\cdot T')\rrbracket_\circ$.
\end{proof}

Our next goal is to take the weak representation, of which we know that its
semantics is closed under $\gamma$-expansion, and from it construct a new
rational lasso expression, whose semantics is also closed under $\gamma$-reduction.
So now we want to mimic the other directions of $\gamma_1$ and $\gamma_2$,
namely:
\begin{align*}
  (uv,wv)&\to (u,vw) & (u,v^k)&\to (u,v)\quad (k\geq 1)
\end{align*}
For the first reduction, given a lasso $(u,v)$, if we split $u$ and
$v$ into words $u_1,u_2,v_1$ and $v_2$, and if $u_2$ happens to be equal
to $v_2$, we may reduce to the lasso  $(u_1,u_2v_1)$ (or equivalently  $(u_1,v_2v_1)$).
For the other reduction, if we have a lasso $(u,v)$ and it happens to be
the case that $v=w^k$ for some $k\geq 1$, then we may reduce to the new lasso
$(u,w)$. On the level of expressions, if we start with $(t,r)$ and split
both $t$ and $r$ into $t_0,t_1,r_0,r_1$ and it so happens that
$t_1\cap r_1\not =\emptyset$, then we may form the new rational lasso
expression $(t_0,(t_1\cap r_1)r_0)$. For the other reduction, we make use
of the root operation, defined on a rational language $U$ as
$\sqrt{U}=\{ u\in\Sigma^+\mid \exists k\geq 1:u^k\in U\}$. This then allows
us to go from $(t,r)$ to $(t,\sqrt{r})$, taking care of the reduction.
The next definition combines the ideas of using sequential splits, the
root operation and the intersection. We point out that we assume the input
to be a disjunctive form.

\begin{defn}
  Let $\tau=\sum_{i=1}^n t_i\cdot r_i^\circ\in \text{Exp}_\circ$. We
  define the map $\Gamma$ as
  \[
    \Gamma(\tau) = \sum_{i=1}^n \sum_{\substack{(t'_0,t'_1)\in\nabla_t \\ (s'_0,s'_1)\in\nabla_s}} t'_0\cdot \left( \sqrt{(t'_1\cap s'_1)\cdot s'_0} \right)^\circ.
  \] 
\end{defn}

We follow up with a proposition which relates the semantics of the rational
lasso expression $\tau$ to that of $\Gamma(\tau)$. From this we
can derive a couple of corollaries which prove useful later on in showing
that if the semantics of $\tau$ is closed under $\gamma$-expansion, then
that of $\Gamma(\tau)$ is saturated. The intuition behind the proposition is
gained by looking at the shape of the lassos on either side of the if and only if.
Specifically, for the lasso  $(u,v)$ on the left-hand side, any lasso which
is $\gamma$-equivalent to it has to be of the shape as shown on the right-hand side.

\begin{prop}\label{prop:connectionRhoGamma}
  Let
  $\tau=\Sigma_{i=1}^n t_i\cdot r_i^\circ\in
  \text{Exp}_\circ$. Then
  \[
    (u,v) \in \llbracket \Gamma(\tau)\rrbracket_\circ \iff \exists k_1,k_2 \geq 0, \exists v_1,v_2\in\Sigma^\ast: v=v_1v_2 \land (uv^{k_1}v_1,v_2v^{k_2+k_1}v_1)\in \llbracket \tau\rrbracket_\circ.
  \] 
\end{prop}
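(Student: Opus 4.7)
The plan is to unfold $\llbracket\Gamma(\tau)\rrbracket_\circ$ and match it against the right-hand side through the single identity $xy=v^k$. Unfolding the lasso semantics together with the definition of $\sqrt{\cdot}$, $(u,v)\in\llbracket\Gamma(\tau)\rrbracket_\circ$ is equivalent to the existence of an index $i\in\{1,\dots,n\}$, splits $(t_0',t_1')\in\nabla_{t_i}$ and $(s_0',s_1')\in\nabla_{r_i}$, an integer $k\geq 1$, and words $x,y\in\Sigma^\ast$ such that $u\in\llbracket t_0'\rrbracket$, $x\in\llbracket t_1'\rrbracket\cap\llbracket s_1'\rrbracket$, $y\in\llbracket s_0'\rrbracket$ and $xy=v^k$. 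This reformulation is the only nontrivial step, and both directions of the proposition fall out of it.

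For the forward direction, starting from this data, I would divide $|x|$ by $|v|$ to write $|x|=k_1|v|+|v_1|$ with $0\leq k_1\leq k-1$ and $0\leq|v_1|<|v|$. Letting $v_2$ be the suffix of $v$ such that $v=v_1v_2$, a direct string computation gives $x=v^{k_1}v_1$ and, setting $k_2=k-1-k_1\geq 0$, $y=v_2v^{k_2}$; consequently $yx=v_2v^{k_1+k_2}v_1$. Item~1 of Lemma~\ref{lem:sequentialSplittingBasicProperties} gives $t_0'\cdot t_1'\leq t_i$ and $s_0'\cdot s_1'\leq r_i$, so $ux\in\llbracket t_i\rrbracket$ and $yx\in\llbracket r_i\rrbracket$. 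Hence $(uv^{k_1}v_1,\,v_2v^{k_1+k_2}v_1)=(ux,\,yx)\in\llbracket t_i\cdot r_i^\circ\rrbracket_\circ\subseteq\llbracket\tau\rrbracket_\circ$.

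For the reverse direction, suppose the right-hand side holds. Fix $i$ with $uv^{k_1}v_1\in\llbracket t_i\rrbracket$ and $v_2v^{k_1+k_2}v_1\in\llbracket r_i\rrbracket$, and view these as concatenations $u\cdot(v^{k_1}v_1)$ and $(v_2v^{k_2})\cdot(v^{k_1}v_1)$ respectively. Item~2 of Lemma~\ref{lem:sequentialSplittingBasicProperties} then supplies splits $(t_0',t_1')\in\nabla_{t_i}$ and $(s_0',s_1')\in\nabla_{r_i}$ with $u\in\llbracket t_0'\rrbracket$, $v^{k_1}v_1\in\llbracket t_1'\rrbracket\cap\llbracket s_1'\rrbracket$ and $v_2v^{k_2}\in\llbracket s_0'\rrbracket$. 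Since $(v^{k_1}v_1)(v_2v^{k_2})=v^{k_1+k_2+1}$, taking $k:=k_1+k_2+1\geq 1$ gives $v^k\in\llbracket(t_1'\cap s_1')\cdot s_0'\rrbracket$, whence $v\in\llbracket\sqrt{(t_1'\cap s_1')\cdot s_0'}\rrbracket$ and finally $(u,v)\in\llbracket\Gamma(\tau)\rrbracket_\circ$.

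The main obstacle is not logical but combinatorial: carefully decomposing the factorisation $xy=v^k$ so that it lines up with the desired shape $(uv^{k_1}v_1,\,v_2v^{k_1+k_2}v_1)$, and handling the boundary cases where $v_1=\epsilon$ (so $x$ is a power of $v$) without introducing off-by-one errors. Once this bookkeeping is in place, the proof is nothing more than the two applications of Lemma~\ref{lem:sequentialSplittingBasicProperties} described above.
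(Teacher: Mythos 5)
Your proof takes essentially the same route as the paper's: both unfold $\Gamma(\tau)$ through the semantics of the root and intersection operations down to the condition $u\in\llbracket t_0'\rrbracket$, $v^k\in\llbracket (t_1'\cap s_1')\cdot s_0'\rrbracket$, and then pass between splits and factorisations using items 1 and 2 of Lemma \ref{lem:sequentialSplittingBasicProperties} (the paper compresses this into a chain of equivalences, while you separate the two directions and spell out the string bookkeeping). The one wrinkle is the boundary case $y=\epsilon$, i.e.\ $x=v^k$, in your forward direction, where Euclidean division yields $k_1=k$ and so violates your stated constraint $k_1\leq k-1$; there you should instead take $k_1=k-1$, $v_1=v$, $v_2=\epsilon$ (relaxing $|v_1|<|v|$ to $|v_1|\leq|v|$), a repair your own closing remark about off-by-one errors already anticipates.
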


\begin{proof}
  This is shown by using the various definitions of $\Gamma$, the root operation,
  the intersection and the semantics of expressions.
  \begin{align*}
    (u,v)\in\llbracket\Gamma(\tau)\rrbracket_\circ &\iff \exists i,\exists (t'_0,t'_1)\in\nabla_{t_i},\exists (s'_0,s'_1)\in\nabla_{s_i}: (u,v)\in\left\llbracket t'_0\cdot \left( \sqrt{ (t'_1\cap s'_1)\cdot s'_0 } \right)^\circ \right\rrbracket_\circ \\
    &\iff \exists i,\exists (t'_0,t'_1)\in\nabla_{t_i},\exists (s'_0,s'_1)\in\nabla_{s_i},\exists k\geq 1: u\in \llbracket t'_0\rrbracket, v^k\in\llbracket (t'_1\cap s'_1)\cdot s'_0 \rrbracket \\
    &\iff \exists i,\exists (t'_0,t'_1)\in\nabla_{t_i},\exists (s'_0,s'_1)\in\nabla_{s_i},\exists k_1,k_2\geq 0,\exists v_1,v_2\in\Sigma^\ast:\\
    &\qquad\qquad v_1v_2=v, u\in \llbracket t'_0\rrbracket, v^{k_1}v_1\in\llbracket t'_1\cap s'_1 \rrbracket, v_2v^{k_2}\in\llbracket s'_0\rrbracket \\
    &\iff \exists i,\exists k_1,k_2\geq 0,\exists v_1,v_2\in\Sigma^\ast: v_1v_2=v, uv^{k_1}v_1\in \llbracket t_i\rrbracket, v_2v^{k_2+k_1}v_1\in\llbracket s_i \rrbracket \\
    &\iff \exists k_1,k_2\geq 0,\exists v_1,v_2\in\Sigma^\ast: v_1v_2=v, (uv^{k_1}v_1,v_2v^{k_2+k_1}v_1)\in\llbracket \tau \rrbracket_\circ.\qedhere
  \end{align*}
\end{proof}

\begin{cor}\label{test}
  Let $\tau\in\text{Exp}_\circ$. Then
  \begin{enumerate}[label={\arabic*.},ref={6.13.\arabic*}]
    \item $\forall (u,v)\in\llbracket\Gamma(\tau)\rrbracket_\circ, \exists (u',v')\in\llbracket\tau\rrbracket_\circ\colon (u',v')\to^\ast_\gamma (u,v)$, \label{cor:connectionRhoGammaRewrite}
    \item $\{uv^\omega\mid (u,v)\in \llbracket\tau\rrbracket_\circ\} = \{uv^\omega\mid (u,v)\in \llbracket\Gamma(\tau)\rrbracket_\circ\}$ and \label{cor:tauGammaSameOmegaSemantics}
    \item $\llbracket\tau\rrbracket_\circ\subseteq \llbracket\Gamma(\tau)\rrbracket_\circ$. \label{cor:tauSubsetGamma}
  \end{enumerate}
\end{cor}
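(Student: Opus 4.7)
The plan is to treat the three items in order (3), (1), (2), since (3) is immediate from Proposition \ref{prop:connectionRhoGamma}, (1) requires actual work building a rewrite sequence, and (2) is a packaged consequence of (1), (3) and the Lasso Representation Lemma.

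For (3), I would simply observe that, given $(u,v) \in \llbracket \tau \rrbracket_\circ$, the right-hand side of the equivalence in Proposition \ref{prop:connectionRhoGamma} is satisfied by choosing $k_1 = k_2 = 0$, $v_1 = \epsilon$ and $v_2 = v$, since then $(u v^{0} \epsilon, v \cdot v^{0} \epsilon) = (u,v)$ lies in $\llbracket \tau \rrbracket_\circ$ by assumption. Thus $(u,v) \in \llbracket \Gamma(\tau) \rrbracket_\circ$.

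For (1), the strategy is: start from $(u,v) \in \llbracket \Gamma(\tau) \rrbracket_\circ$, invoke Proposition \ref{prop:connectionRhoGamma} to obtain $k_1, k_2 \geq 0$ and $v_1, v_2$ with $v = v_1 v_2$ and $(u v^{k_1} v_1,\ v_2 v^{k_2+k_1} v_1) \in \llbracket \tau \rrbracket_\circ$, and then exhibit a $\to_\gamma$-rewrite sequence from this lasso back to $(u,v)$. The key algebraic observation is that, since $v = v_1 v_2$, the loop $v_2 v^{k_2+k_1} v_1$ equals $(v_2 v_1)^{k_2+k_1+1}$, so one $\gamma_2$-reduction (vacuous if $k_1 = k_2 = 0$) collapses it to $v_2 v_1$, giving $(u v^{k_1} v_1,\ v_2 v_1)$. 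From there I would apply $\gamma_1$ exactly $|v_1|$ times to peel $v_1$ off the tail of the spoke and onto the front of the loop, landing at $(u v^{k_1},\ v_1 v_2) = (u v^{k_1}, v)$. Then I would apply $\gamma_1$ another $k_1 \cdot |v|$ times to strip the remaining $v^{k_1}$ off the spoke. The main (very mild) obstacle here is getting the $\gamma_1$-bookkeeping right: at every step the last symbol of the spoke must coincide with the last symbol of the loop, and this is where one has to check carefully that after the first $|v_1|$ shifts the loop is genuinely $v_1 v_2$ (rather than a rotation thereof) and that the rotation during the subsequent $k_1 \cdot |v|$ steps returns the loop to its original form once per full cycle.

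For (2), I would derive both inclusions. For the $\subseteq$ direction: if $(u,v) \in \llbracket \tau \rrbracket_\circ$, apply (3) to place $(u,v)$ in $\llbracket \Gamma(\tau) \rrbracket_\circ$, so $u v^\omega$ lies in the right-hand set trivially. For the $\supseteq$ direction: if $(u,v) \in \llbracket \Gamma(\tau) \rrbracket_\circ$, apply (1) to obtain $(u',v') \in \llbracket \tau \rrbracket_\circ$ with $(u',v') \to^\ast_\gamma (u,v)$; then the Lasso Representation Lemma gives $u' v'^\omega = u v^\omega$, exhibiting $u v^\omega$ as an element of the left-hand set. No new machinery is needed beyond what is already in the paper, and the whole argument is short once (1) is in place.
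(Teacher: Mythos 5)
Your proposal is correct and follows essentially the same route as the paper: all three items are read off from Proposition \ref{prop:connectionRhoGamma}, with (3) obtained by the same choice $k_1=k_2=0$, $v_1=\epsilon$, $v_2=v$, and (2) packaged from (1), (3) and the Lasso Representation Lemma. The only difference is that you spell out the rewrite sequence $(uv^{k_1}v_1,v_2v^{k_2+k_1}v_1)\to^\ast_\gamma(u,v)$ for item (1) --- one $\gamma_2$-step using $v_2v^{k_2+k_1}v_1=(v_2v_1)^{k_2+k_1+1}$ followed by $|v_1|+k_1|v|$ $\gamma_1$-steps --- which the paper simply asserts; your bookkeeping there is accurate.
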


\begin{proof}
  In Proposition
  \ref{prop:connectionRhoGamma} we have
  $(uv^{k_1}v_1,v_2v^{k_2+k_1}v_1)\to^\ast_\gamma (u,v)$, hence $(1)$
  follows. Moreover, this shows that for each lasso in
  $\llbracket\tau\rrbracket_\circ$ there exists a $\gamma$-equivalent
  one in $\llbracket\Gamma(\tau)\rrbracket_\circ$ and vice-versa,
  hence taking direct images via $f$ on both sides yields the same
  set, giving us $(2)$. Finally, for
  $(u,v)\in\llbracket\tau\rrbracket_\circ$. Choose $k_1=k_2=0$,
  $v_1=\epsilon$ and $v_2=v$, so
  $(u,v)=(uv^{k_1}v_1,v_2v^{k_2+k_1}v_1)\in\llbracket\tau\rrbracket_\circ$.
  By Proposition \ref{prop:connectionRhoGamma},
  $(u,v)=(u,v_1v_2)\in\llbracket\Gamma(\tau)\rrbracket_\circ$
  establishing $(3)$.
\end{proof}

For technical reasons, we require an additional lemma before we can
conclude that for a given rational $\omega$-expression $T$,
$\Gamma(h(T))$ is a representation of $T$. More concretely, given
an arbitrary rational lasso expression $\tau$, it is not the case
that the semantics of $\Gamma(\tau)$ is closed under $\gamma$-reduction.
As a counterexample, consider the rational lasso expression $(aaa,a)$.
Then $\Gamma(aaa,a)=(aa,a)+(aaa,a)$ whose semantics is not closed
under $\gamma$-reduction (it misses both $(\epsilon,a)$ and $(a,a)$).
The reason for this is that the intersection
operation has a limit to how much it can shift underneath the $(-)^\circ$.
In the example with $(aaa,a)$, one could apply $\Gamma$ several times,
and indeed after a few consecutive applications we obtain a rational lasso
expression whose semantics is closed under $\gamma$-reduction. Luckily, if
one starts out with a rational lasso expression whose semantics is already
closed under $\gamma$-expansion, then this is no longer a problem and applying
$\Gamma$ yields a rational lasso expression whose semantics is also closed
under $\gamma$-reduction.

\begin{lem}\label{lem:tauClosedExpansionAlsoGamma}
  Let $\tau\in\text{Exp}_\circ$. If $\llbracket \tau\rrbracket_\circ$ is closed
  under $\gamma$-expansion, then $\llbracket\Gamma(\tau)\rrbracket_\circ$
  is $\sim_\gamma$-saturated.
\end{lem}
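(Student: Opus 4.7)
The plan is to apply Lemma~\ref{lem:saturationLemma}, item 4: it suffices to show $\llbracket\Gamma(\tau)\rrbracket_\circ$ is closed under both $\gamma$-expansion and $\gamma$-reduction. The main tool is Proposition~\ref{prop:connectionRhoGamma}, which characterises membership $(u,v)\in\llbracket\Gamma(\tau)\rrbracket_\circ$ via the existence of parameters $k_1,k_2\geq 0$ and a split $v_1v_2=v$ such that the expanded witness $(uv^{k_1}v_1, v_2 v^{k_2+k_1}v_1)\in\llbracket\tau\rrbracket_\circ$.

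For $\gamma$-reduction closure, suppose $(u,v)\in\llbracket\Gamma(\tau)\rrbracket_\circ$ with witness $(P,Q)\in\llbracket\tau\rrbracket_\circ$ and $(u,v)\to_{\gamma_i}(u',v')$. Case-analyse on $i$: in the $\gamma_1$-case $u=u_0 a$, $v=v_0 a$, $(u',v')=(u_0,av_0)$, and the split $v_1v_2=v$ in $(P,Q)$ either satisfies $v_2=v_2'a$ (subcase A) or $v_2=\epsilon$ (subcase B). In subcase A, the re-split $\hat v_1 = av_1$, $\hat v_2 = v_2'$ (giving $\hat v_1\hat v_2 = av_0 = v'$) re-parses $(P,Q)$ as a witness for $(u',v')$. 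In subcase B, $(P,Q)=(uv^{k_1+1},v^{k_1+k_2+1})$, and I invoke the $\gamma$-expansion closure of $\llbracket\tau\rrbracket_\circ$: $\gamma_1$-expanding $(P,Q)$ several times (rotating the period) produces a new lasso in $\llbracket\tau\rrbracket_\circ$ whose split accommodates $(u',v')$, possibly after a further $\gamma_2$-expansion to resolve a divisibility requirement. The $\gamma_2$-case (where $v=w^n$ reduces to $w$) is analogous: the loop of $(P,Q)$ is already a power of $w$, and matching choices of parameters---with $\gamma_2$-expansion of $(P,Q)$ in $\llbracket\tau\rrbracket_\circ$ if needed---yield the new witness.

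For $\gamma$-expansion closure, the argument is symmetric. Given $(u,v)\to_{\gamma_i}(u',v')$ with $(u',v')\in\llbracket\Gamma(\tau)\rrbracket_\circ$ witnessed by $(P',Q')\in\llbracket\tau\rrbracket_\circ$, I construct a witness for $(u,v)$. In the direct subcases $(P',Q')$ re-parses with adjusted split. In the boundary subcases, e.g.\ $\gamma_1$ with $v_1=\epsilon$ in the witness for $(u',v')$, giving $(P',Q')=(u_0(av_0)^{k_1},(av_0)^{k_1+k_2+1})$, one $\gamma_1$-expansion of $(P',Q')$ in $\llbracket\tau\rrbracket_\circ$ yields $(uv^{k_1},v^{k_1+k_2+1})$, which witnesses $(u,v)\in\llbracket\Gamma(\tau)\rrbracket_\circ$ via split $\hat v_1=\epsilon$, $\hat v_2=v$.

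The main obstacle is managing the boundary subcases in either closure direction, where direct re-parsing of the split fails. In every such case the hypothesis of $\gamma$-expansion closure of $\llbracket\tau\rrbracket_\circ$ is essential: it supplies a further ancestor of the given witness inside $\llbracket\tau\rrbracket_\circ$ whose structure matches the form demanded by Proposition~\ref{prop:connectionRhoGamma}. The remaining computations are routine rearrangements of expressions such as $u_0 a(v_0 a)^k$ and $(v_2 v_1)^k$.
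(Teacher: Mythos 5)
Your proposal is correct, and for the harder half (closure of $\llbracket\Gamma(\tau)\rrbracket_\circ$ under $\gamma$-reduction) it matches the paper's proof almost exactly: same appeal to Lemma \ref{lem:saturationLemma}, same use of Proposition \ref{prop:connectionRhoGamma} to extract a witness $(P,Q)\in\llbracket\tau\rrbracket_\circ$, same case split on whether the cut $v_1v_2$ lands at the boundary, and the same use of the $\gamma$-expansion hypothesis on $\tau$ to enlarge the witness in the boundary subcase (the paper does this with a single $\gamma_2$-expansion squaring the loop; your ``rotate, possibly followed by a $\gamma_2$-expansion'' achieves the same re-parse). One small remark: in the $\gamma_2$-reduction case the paper's re-parse goes through directly with no expansion of the witness, so your ``if needed'' hedge is never triggered there. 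Where you genuinely diverge is the $\gamma$-expansion closure of $\llbracket\Gamma(\tau)\rrbracket_\circ$. You run a second, symmetric case analysis on the witness; this works (I checked your $\gamma_1$ boundary subcase, and the $\gamma_2$-expansion case also goes through, though there you must resolve a divisibility constraint by raising the witness's loop to the $k$-th power before re-parsing --- a detail your sketch leaves implicit). The paper instead gives an abstract argument: by Corollary \ref{cor:connectionRhoGammaRewrite} every $(u,v)\in\llbracket\Gamma(\tau)\rrbracket_\circ$ has an ancestor $(u_2,v_2)\in\llbracket\tau\rrbracket_\circ$; a common $\gamma$-expansion $(u_3,v_3)$ of $(u_1,v_1)$ and $(u_2,v_2)$ lies in $\llbracket\tau\rrbracket_\circ$ by hypothesis, hence in $\llbracket\Gamma(\tau)\rrbracket_\circ$ by Corollary \ref{cor:tauSubsetGamma}, and the already-established reduction closure brings it down to $(u_1,v_1)$. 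That route buys you a second case analysis for free and avoids the divisibility bookkeeping entirely; your route is more self-contained but costs roughly twice the computation.
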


\begin{proof}
  By Lemma \ref{lem:saturationLemma}, $\llbracket\Gamma(\tau)\rrbracket_\circ$ 
  is $\sim_\gamma$-saturated if it is closed under both $\gamma$-reduction
  and $\gamma$-expansion so this is what we show.
  
  We begin by showing that $\llbracket\Gamma(\tau)\rrbracket_\circ$ is closed
  under $\gamma$-reduction.
  We split the proof into two cases, one for each type of reduction.
  For $\gamma_1$-reduction we look at $(ua,va)\to_{\gamma_1}(u,av)$.
  Suppose $(ua,va)\in\llbracket\Gamma(\tau)\rrbracket_\circ$, we want to
  show that $(u,av)\in\llbracket\Gamma(\tau)\rrbracket_\circ$.
  By Proposition \ref{prop:connectionRhoGamma}, we find $k_1,k_2,v_1,v_2$ such
  that
  \[
    va=v_1v_2 \qquad \text{ and } \qquad (ua(va)^{k_1}v_1,v_2(va)^{k_2+k_1}v_1)\in \llbracket\tau\rrbracket_\circ.
  \]
  We proceed by case analysis on $v_2$. If $v_2=\epsilon$, then
  $v_1=va$ and
  \[
    (ua(va)^{k_1}va,(va)^{k_2+k_1}va)\in\llbracket\tau\rrbracket_\circ.
  \]
  As $\llbracket\tau\rrbracket_\circ$ is closed under $\gamma$-expansion, we
  also have that
  \[
    (u(av)^{k_1+1}a,v(av)^{(2\cdot k_2+k_1)+(k_1+1)}a) = (ua(va)^{k_1}va,((va)^{k_2+k_1}va)^2)\in\llbracket\tau\rrbracket_\circ.
  \]
  Using Proposition \ref{prop:connectionRhoGamma}, we obtain
  $(u,av)\in\llbracket\Gamma(\tau)\rrbracket_\circ$. If $v_2\not =\epsilon$,
  we find $v_2'$ with $v_2=v_2'a$. Then
  \[
    (u(av)^{k_1}av_1,v_2'(av)^{k_2+k_1}av_1)=(ua(va)^{k_1}v_1,v_2(va)^{k_2+k_1}v_1)\in\llbracket\tau\rrbracket_\circ.
  \] Hence, by Proposition \ref{prop:connectionRhoGamma} and as
  $v_1v_2'=v$ we again have $(u,av)\in\llbracket\Gamma(\tau)\rrbracket_\circ$.

  For $\gamma_2$-reduction we look at $(u,v^k)\to_{\gamma_2}(u,v)$ where
  $k\geq 1$. Assume that $(u,v^k)\in\llbracket\Gamma(\tau)\rrbracket_\circ$.
  So by Proposition \ref{prop:connectionRhoGamma} there are $k_1,k_2,v_1,v_2$
  with
  \[
    v^k=v_1v_2 \qquad \text{ and } \qquad (u(v^k)^{k_1}v_1,v_2(v^k)^{k_2+k_1}v_1)\in \llbracket\tau\rrbracket_\circ.
  \]
  As $v_1v_2=v^k$ we can find some $w_1,w_2$ and $\ell_1,\ell_2$ with
  $v^{\ell_1}w_1=v_1$ and $w_2v^{\ell_2}=v_2$. It follows that
  \[
    (uv^{k\cdot k_1 + \ell_1}w_1,w_2v^{\ell_2+k\cdot (k_2+k_1)+\ell_1}w_1) = (u(v^k)^{k_1}v^{\ell_1}w_1,w_2v^{\ell_2}(v^k)^{k_2+k_1}v^{\ell_1}w_1)\in\llbracket\tau\rrbracket_\circ.
  \] Hence, by Proposition \ref{prop:connectionRhoGamma},
  $(u,v)=(u,w_1w_2)\in\llbracket\Gamma(\tau)\rrbracket_\circ$. So
  $\llbracket\Gamma(\tau)\rrbracket_\circ$ is closed under $\gamma$-reduction.

  Next we show that $\llbracket\Gamma(\tau)\rrbracket_\circ$ is closed
  under $\gamma$-expansion.
  Let $(u,v)\in\llbracket\Gamma(\tau)\rrbracket_\circ$ and
  $(u_1,v_1)\to_\gamma (u,v)$. We want to show that
  $(u_1,v_1)\in\llbracket\Gamma(\tau)\rrbracket_\circ$.
  By Corollary \ref{cor:connectionRhoGammaRewrite} there exists
  some $(u_2,v_2)\in\llbracket\tau\rrbracket_\circ$ with
  $(u_2,v_2)\to^\ast_\gamma (u,v)$. As $(u_1,v_1)\sim_\gamma(u_2,v_2)$, there
  exists a lasso $(u_3,v_3)$ such that
  $(u_3,v_3)\to^\ast_\gamma (u_1,v_1)$ and $(u_3,v_3)\to^\ast_\gamma (u_2,v_2)$.
  As $\llbracket\tau\rrbracket_\circ$ is closed under $\gamma$-expansion,
  $(u_3,v_3)\in\llbracket\tau\rrbracket_\circ$. It follows by
  Corollary \ref{cor:tauSubsetGamma} that
  $(u_3,v_3)\in\llbracket\Gamma(\tau)\rrbracket_\circ$. Finally, as
  we have already shown that $\llbracket\Gamma(\tau)\rrbracket_\circ$ is
  closed under $\gamma$-reduction and as
  $(u_3,v_3)\to^\ast_\gamma (u_1,v_1)$, we obtain that
  $(u_1,v_1)\in\llbracket\Gamma(\tau)\rrbracket_\circ$.
\end{proof}

\begin{prop}\label{prop:rationalLassoFromOmega}
  Let $T\in \text{Exp}_\omega$. Then $\Gamma(h(T))$ represents
  $T$.
\end{prop}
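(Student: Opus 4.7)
The plan is to assemble $\Gamma(h(T))$ from the machinery already developed and show that it satisfies the two criteria packaged in Remark \ref{rmk:rmk1}: first that it is a \emph{weak} representation of $T$, and second that its language semantics is $\sim_\gamma$-saturated. These together imply representability.

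For the weak representation part, I would argue in two equalities. By Proposition \ref{prop:weakCoverMapYieldsWeakCovering} applied to $T$, we have $\phi(\llbracket h(T)\rrbracket_\circ) = \text{UP}(\llbracket T\rrbracket_\omega)$. Applying Corollary \ref{cor:tauGammaSameOmegaSemantics} with $\tau = h(T)$ gives $\phi(\llbracket\Gamma(h(T))\rrbracket_\circ) = \phi(\llbracket h(T)\rrbracket_\circ)$, since that corollary states precisely that $\tau$ and $\Gamma(\tau)$ induce the same set of ultimately periodic words. Chaining these two equalities yields $\phi(\llbracket\Gamma(h(T))\rrbracket_\circ) = \text{UP}(\llbracket T\rrbracket_\omega)$, so $\Gamma(h(T))$ weakly represents $T$.

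For the saturation part, I would invoke Proposition \ref{prop:weakCoverMapExpansionClosed}, which tells us that $\llbracket h(T)\rrbracket_\circ$ is closed under $\gamma$-expansion. This is exactly the hypothesis required by Lemma \ref{lem:tauClosedExpansionAlsoGamma}, so applying that lemma with $\tau = h(T)$ yields that $\llbracket\Gamma(h(T))\rrbracket_\circ$ is $\sim_\gamma$-saturated. Now the characterisation in Remark \ref{rmk:rmk1} promotes the weak representation from the previous paragraph to an actual representation: $\llbracket\Gamma(h(T))\rrbracket_\circ = \phi^{-1}(\phi(\llbracket\Gamma(h(T))\rrbracket_\circ)) = \phi^{-1}(\text{UP}(\llbracket T\rrbracket_\omega))$.

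All substantive work has already been carried out in the supporting results: the design of $h$ so that its semantics is closed under $\gamma$-expansion, and the design of $\Gamma$ so that feeding it an expansion-closed input yields a fully saturated output while preserving the induced set of ultimately periodic words. At this stage the proof is just a short bookkeeping assembly of these pieces, so I expect no genuine obstacle here; the only care needed is to cite the right results in the right order.
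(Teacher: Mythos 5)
Your proof is correct and follows exactly the same route as the paper's: Proposition \ref{prop:weakCoverMapYieldsWeakCovering} and Corollary \ref{cor:tauGammaSameOmegaSemantics} give weak representation of $T$ by $\Gamma(h(T))$, then Proposition \ref{prop:weakCoverMapExpansionClosed} combined with Lemma \ref{lem:tauClosedExpansionAlsoGamma} gives saturation, and Remark \ref{rmk:rmk1} upgrades this to full representation. The citations and their order match the paper's argument precisely.
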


\begin{proof}
  By Proposition \ref{prop:weakCoverMapYieldsWeakCovering}, $h(T)$ weakly represents $T$, i.e.
  $\text{UP}(\llbracket T\rrbracket_\omega) = \{uv^\omega\mid (u,v)\in\llbracket h(T)\rrbracket_\circ\} $.
  It follows from Corollary \ref{cor:tauGammaSameOmegaSemantics} that $\Gamma(h(T))$ also weakly
  represents $T$, as
  \[
    \text{UP}(\llbracket T\rrbracket_\omega) = \{uv^\omega\mid (u,v)\in\llbracket h(T)\rrbracket_\circ\} = \{uv^\omega\mid (u,v)\in\llbracket\Gamma(h(T))\rrbracket_\circ\} .
  \] Furthermore, as $\llbracket h(T)\rrbracket_\circ$ is closed under
  $\gamma$-expansion (Proposition \ref{prop:weakCoverMapExpansionClosed}),
  $\llbracket\Gamma(h(T))\rrbracket_\circ$
  is $\sim_\gamma$-saturated (Proposition
  \ref{lem:tauClosedExpansionAlsoGamma}).
  Hence, by Remark \ref{rmk:rmk1}
  $\Gamma(h(T))$ represents $T$.
\end{proof}

The previous proposition together with the Brzozowski
construction for lasso automata give us the main result of this
section. This establishes the remaining arrow in Figure
\ref{fig:contributions} from the introduction.

\begin{thm}\label{thm:omegaAutomatonFromOmegaExpression}
  Every rational $\omega$-language is accepted by a finite $\Omega$-automaton.
\end{thm}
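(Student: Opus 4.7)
The plan is to combine Proposition~\ref{prop:rationalLassoFromOmega} with the Brzozowski construction of Theorem~\ref{thm:BrzozowskiOmegaAutomaton}, and then observe that the resulting automaton is automatically saturated.

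Start with a rational $\omega$-language $L$. By definition, $L = \llbracket T\rrbracket_\omega$ for some rational $\omega$-expression $T \in \text{Exp}_\omega$. First, I would apply Proposition~\ref{prop:rationalLassoFromOmega} to obtain a rational lasso expression $\tau = \Gamma(h(T))$ which represents $T$, so that
\[
\llbracket \tau\rrbracket_\circ = \phi^{-1}(\text{UP}(\llbracket T\rrbracket_\omega)) = \phi^{-1}(\text{UP}(L)).
\]
If $\tau$ is not already in disjunctive form, apply Proposition~\ref{prop:normalForm} to replace it by a provably equivalent $\tau'$ in disjunctive form; by Soundness (Proposition~\ref{prop:soundness}), $\llbracket \tau'\rrbracket_\circ = \llbracket \tau\rrbracket_\circ$.

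Next, feed $\tau'$ into the Brzozowski construction: Theorem~\ref{thm:BrzozowskiOmegaAutomaton} yields a finite lasso automaton $\mathcal{A} = \widehat{\mathcal{C}}$ (with initial state $[\tau']_{\sim_C}$) such that
\[
L_\circ(\mathcal{A}) = \llbracket \tau'\rrbracket_\circ = \phi^{-1}(\text{UP}(L)).
\]
It remains to verify that $\mathcal{A}$ is saturated and that it accepts $L$ as an $\omega$-language. For saturation, note that any set of the form $\phi^{-1}(S)$ is by construction a union of $\sim_\gamma$-equivalence classes, hence $\sim_\gamma$-saturated by Lemma~\ref{lem:saturationLemma}; so every lasso accepted by $\mathcal{A}$ shares its acceptance status with all $\gamma$-equivalent lassos, making $\mathcal{A}$ an $\Omega$-automaton by Definition~\ref{defn:omegaAutomaton}.

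Finally, since $\text{UP}(L) = \phi(\phi^{-1}(\text{UP}(L))) = \{uv^\omega \mid (u,v)\in L_\circ(\mathcal{A})\}$, we have $L_\circ(\mathcal{A}) = \{(u,v) \mid uv^\omega \in L\}$, which is precisely the condition defining $L_\omega(\mathcal{A}) = L$. By the injectivity of $L \mapsto \{(u,v)\mid uv^\omega \in L\}$ on regular $\omega$-languages (cited from \cite{calbrix:1994:ultimatelyPeriodicWords} in the preliminaries), this identification is unambiguous. Essentially no obstacle remains here: the entire work has been done in Proposition~\ref{prop:rationalLassoFromOmega} and Theorem~\ref{thm:BrzozowskiOmegaAutomaton}; the present theorem is a clean composition of these two constructions together with the observation that saturation of the accepted lasso language is automatic from its representation as a preimage under $\phi$.
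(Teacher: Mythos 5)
Your proposal is correct and follows essentially the same route as the paper: construct $\Gamma(h(T))$ via Proposition~\ref{prop:rationalLassoFromOmega}, feed it into the Brzozowski construction of Theorem~\ref{thm:BrzozowskiOmegaAutomaton}, and observe that the accepted lasso language, being a $\phi$-preimage, is saturated, so the resulting automaton is an $\Omega$-automaton accepting $L$. Your added care about converting to disjunctive form before invoking Theorem~\ref{thm:BrzozowskiOmegaAutomaton} is a detail the paper's one-line proof glosses over, but it does not change the argument.
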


\begin{proof}
  Let $L$ be a rational $\omega$-language and $T\in \text{Exp}_\omega$
  a rational $\omega$-expression denoting it.  Then the lasso automaton
  $(\widehat{\mathcal{C}},[\Gamma(h(T))]_{\sim_C})$ is a finite
  $\Omega$-automaton with
  \[
    L_\omega (\widehat{\mathcal{C}},[\Gamma(h(T))]_{\sim_C}) = \llbracket T\rrbracket_\omega = L.\qedhere
  \]
\end{proof}

\section{Conclusion}
We have introduced rational lasso expressions and languages and shown
a Kleene Theorem for lasso languages and $\omega$-languages. In order
to obtain these results we gave a Brzozowski construction for lasso
automata. Moreover, we introduced the notion of representation and
showed how to construct a representing rational lasso expression from
a rational $\omega$-expression. As a consequence, we obtained a
new construction for converting rational $\omega$-expressions to
$\Omega$-automata.

Our results motivate some interesting directions for future
work. In
\cite{angluin:2016:learningRegularOmegaLanguages}, Angluin \etal\
introduce syntactic and recurring FDFAs and
show that they can be up to exponentially smaller than periodic
FDFAs. This raises the question whether from a given rational $\omega$-expression,
we can construct a rational lasso expression such that the Brzozowski
construction yields a syntactic or recurring FDFA. These FDFAs operate using
different congruences and it might be of interest to see how the structure
of rational lasso expressions interact with them. As one of the applications
is language learning, it is also worth investigating the complexity of our
constructions. The state complexity of the root and intersection operation
are known. We remark that the languages of which we compute the root
are always transitive and we wonder if the state complexity of the root
operation improves when restricting to transitive languages. Moreover, as
Calbrix \etal\ devised $L_{\$}$ in the hopes of improving algorithms for
model checking and verification, it would be of interest to establish a
more direct link between lasso automata and logics such as linear temporal
logic. This naturally leads on to a comparison between the more traditionally
used $\omega$-automata, such as nondeterministic B\"{u}chi automata, and
$\Omega$-automata.

Another line of work is to show completeness of the lasso algebra we introduced
in Section \ref{sec:3}, and to look more closely at the links between lasso
and Wagner algebras.

\paragraph{Acknowledgements.} The author would like to thank Yde Venema for
suggesting this topic and also both Tobias Kapp\'{e} and Yde Venema for
valuable discussions. Furthermore, the author
would like to thank Harsh Beohar and Georg Struth for valuable discussions
and for reading earlier drafts of this paper.

\bibliographystyle{plain} 
\bibliography{references}

\end{document}